\documentclass[12pt]{article}
\usepackage[margin=1in]{geometry}
\usepackage{amsmath,amssymb,amsthm,mathtools}
\usepackage{booktabs}
\usepackage{longtable}
\usepackage{graphicx}
\usepackage{float}
\usepackage{natbib}
\setcitestyle{round}
\usepackage{hyperref}
\usepackage{bm}
\usepackage{xcolor}
\hypersetup{
    colorlinks=true,
    linkcolor=blue,
    urlcolor=blue,
    citecolor=blue
}
\usepackage{comment}
\usepackage{setspace}
\usepackage{microtype}
\newtheorem{theorem}{Theorem}
\newtheorem{proposition}{Proposition}
\newtheorem{lemma}{Lemma}

\newtheorem{assumption}{Assumption}
\newtheorem{condition}{Condition}
\theoremstyle{definition}

\newcommand{\R}{\mathbb{R}}
\newcommand{\E}{\mathbb{E}}

\newcommand{\Var}{\text{Var}}
\newcommand{\tr}{\operatorname{tr}}
\newcommand{\diag}{\text{diag}}

\newcommand{\bZ}{\mathbf{Z}}

\DeclareMathOperator*{\argmin}{arg\,min}

\title{Double Descent and Benign Overfitting in \\ Macroeconomic Forecasting}

\author{Andrea Carriero\thanks{Queen Mary University 
of London. Email: \texttt{a.carriero@qmul.ac.uk}} 
\and Florian Huber\thanks{University of Salzburg. Email: \texttt{florian.huber@plus.ac.at}} 
\and Davide Pettenuzzo\thanks{Brandeis University. Email: \texttt{dpettenu@brandeis.edu}}}
\date{}

\begin{document}
\maketitle

\begin{abstract}
{We study double descent and benign overfitting in macroeconomic forecasting. We document that double-descent risk curves arise in standard macroeconomic datasets that are driven by a small number of latent factors, and we characterize when the underlying benign-overfitting mechanism holds. The conditions of \citet{bartlett2020benign} are satisfied under the exact factor model and can also hold under the more realistic approximate factor model, provided idiosyncratic variances are not too dispersed across series. Because macroeconomic panels have only moderate dimensions, the overparameterization ratio $N/T$ required by the theory is not naturally available. Our solution is to augment the data with synthetic copies from an estimated factor model and we prove that this strategy converges to a kernel ridge regression with a factor-structured kernel. Using monthly (FRED-MD) and quarterly (FRED-QD) US data, the resulting estimator consistently outperforms the Stock-Watson factor model for point forecasting across all series and horizons, with gains that are pervasive, statistically significant, and increasing with the forecast horizon. Our results suggest that benign overfitting, when it works, succeeds because overparameterization implicitly constructs a well-behaved kernel, not because overparameterization is intrinsically desirable.}\\
\medskip
\noindent \textbf{Keywords:} benign overfitting, double descent, factor models, kernel ridge regression, macroeconomic forecasting

\medskip
\noindent \textbf{JEL Codes:} C14, C22, C53, E37
\end{abstract}

\clearpage
\onehalfspacing

\section{Introduction}\label{sec:intro}

A striking finding in modern machine learning theory is that models that fit the training data perfectly can nonetheless forecast well out of sample. As the number of parameters grows, the forecast error traces a characteristic \emph{double-descent} curve, first spiking at the interpolation threshold $N=T$ and then falling again, sometimes below the level attained by the best classical estimator.  This phenomenon, known as \emph{benign overfitting}, defies the classical bias-variance tradeoff intuition: more parameters than observations does not necessarily imply worse out-of-sample performance \citep{belkin2019reconciling}. The conditions under which this occurs are now well understood in the linear regression case, as established by \citet{bartlett2020benign} (henceforth, BLLT). The key requirement is a specific factor structure on the predictors: a small number of strong principal components must capture most of the signal, while the remaining components each explain a similar, negligible share of the variance. When this residual spectrum is sufficiently flat, the noise absorbed by the interpolating estimator is spread thinly and evenly across many components, and its contribution to forecast error becomes negligible.

Macroeconomic datasets provide a natural setting to test these conditions. A large body of work in empirical macroeconomics has established that panels of macroeconomic and financial indicators are well approximated by factor models, in which a small number of latent common factors drive the co-movement of many observable series \citep{stock2002forecasting, stock2002macroeconomic, bai2002determining}. The implied covariance structure, a few spiked eigenvalues from the factors plus a residual idiosyncratic component, bears a strong resemblance to the spectral conditions that the benign overfitting literature requires. Whether these conditions hold in practice depends on the dimensionality of the panel: the noise dilution argument above requires the number of small principal components, roughly $N$ (the number of predictors in the panel) minus the number of factors, to be large relative to the sample size $T$. However, existing macroeconomic panels are only moderately high-dimensional, with typical cross-sectional dimensions of $N \sim 100$--$200$ and time series lengths of $T \sim 200$--$700$. Whether these conditions are met in practice, and whether the moderate dimensions of macroeconomic panels allow them to be exploited for forecast gains, are the central questions this paper addresses. 

Our paper makes the following contributions. First, we provide the first illustration of the double-descent risk curve in macroeconomic forecasting, focusing on four representative macroeconomic series: CPI inflation, real personal income, industrial production, and the unemployment rate. In particular, we show how, in each case, the mean squared error (MSE) changes as the effective number of parameters in the forecasting model increases, starting from a simple AR(1) model and adding an increasing number of principal components until reaching the ``overparameterized'' regime, pushing the effective parameter count beyond the saturation/interpolation point where the number of parameters equals the sample size. A characteristic double-descent pattern emerges robustly across these four series: the MSE rises sharply near the interpolation point, then falls and stabilizes in the overparameterized regime.

Second, having documented that the double-descent pattern arises in macroeconomic data, we move to study when the factor structure of macroeconomic panels formally satisfies the conditions for benign overfitting. The theoretical results of \citet{bartlett2020benign} were derived under i.i.d. data, and regardless of the dimension of the panel, it is not obvious that they carry over to the serially dependent, factor-structured data typical of macroeconomic panels. We show that they do, under both the exact and approximate factor model assumptions commonly maintained in the macroeconomic literature. Under the exact factor model, where idiosyncratic errors are cross-sectionally uncorrelated, the first two BLLT conditions hold automatically as consequences of the factor model assumptions, and the third condition holds whenever the idiosyncratic variances are sufficiently similar across series. Under the approximate factor model, which permits residual cross-sectional correlations, the first two conditions still hold, but the third becomes harder to satisfy: correlations among the idiosyncratic components can cause their variances to become more dispersed across series, making benign overfitting less likely. 

Third, our first two contributions raise a practical question: if the BLLT conditions are satisfied in macroeconomic panels, can we actually exploit them given that these panels have only moderate dimensions ($N \sim 100$--$200$, $T \sim 200$--$700$), so that the BLLT regime $N/T \to \infty$ does not hold directly? We address this by proposing to augment the predictors with $B$ synthetic copies of the data generated using a factor model trained on the original time series, artificially increasing the effective $N/T$ ratio. The key insight is that each synthetic copy preserves the estimated factor structure while drawing fresh idiosyncratic noise, so that two time periods with similar factor values produce similar synthetic predictors; in the limit, this implicit similarity measure is precisely a kernel. We prove that, as the number of synthetic replications $B \to \infty$, this procedure converges to kernel ridge regression with a factor-structured kernel that measures similarity between time periods through their estimated factor realizations, and a ridge parameter controlled by the total idiosyncratic variance. Thus, the synthetic expansion works not by adding information, but by constructing an implicit kernel whose structure is inherited from the factor model.

These contributions together support a central conclusion: benign overfitting, when it works, succeeds because overparameterization implicitly constructs a well-behaved kernel, not because overparameterization is intrinsically desirable. 

We then turn to verify this claim empirically, evaluating how a semiparametric factor kernel model (obtained by combining a linear AR specification with a nonparametric kernel component) fares against the standard factor model benchmark \citep{stock2002forecasting, stock2002macroeconomic}. We perform this comparison using all available series in the FRED-MD dataset of \citet{mccracken2016fred}, forecasting $h = 1, 3, 6, 12$ months ahead using a rolling-window setup where the AR(1) dynamics is estimated via generalized least squares (GLS). We find that the factor kernel consistently outperforms the factor model benchmark across all series and forecast horizons. The gains are pervasive --- the factor kernel beats the benchmark for 92--99\% of monthly targets and 96--99\% of quarterly targets --- and become increasingly large and statistically significant as the forecast horizon lengthens. At short horizons, improvements are modest but highly consistent; at longer horizons, more than half of the wins are statistically significant by the Diebold--Mariano test. We interpret these findings as empirical evidence that the semiparametric factor kernel provides regularization naturally adapted to the data's factor structure, connecting the modern interpolation literature to the classical tradition of factor-based forecasting \citep{stock2002forecasting, stock2002macroeconomic} and to kernel methods in economics.

This paper contributes to multiple strands of literature. First, we add to the recent empirical literature that has documented how ML methods can outperform standard benchmarks for macroeconomic forecasting \citep{medeiros2021forecasting, coulombe2022machine, chi2025macroeconomic}. We complement the existing work by identifying the spectral mechanism that determines when and why such gains can manifest. Our kernel-based approach is related to the nonlinear and kernel methods applied to factor-augmented forecasting by \citet{exterkate2016nonlinear} and \citet{boot2019macroeconomic}, though our kernel arises endogenously from the factor structure rather than being imposed exogenously. We also add to the literature on the theory behind benign overfitting. \citet{bartlett2020benign} establish sufficient and necessary conditions under i.i.d.\ data; \citet{nakakita2022benign} extend the analysis to dependent processes; \citet{hastie2022surprises} and \citet{bunea2021interpolating} provide further characterizations in related settings. We contribute to this theory by mapping the BLLT conditions onto the factor structure of macroeconomic panels, characterizing when they hold and when they fail, and deriving an implementable kernel estimator that achieves benign overfitting when the native $N/T$ ratio is insufficient. The factor model forecasting tradition of \citet{stock2002forecasting, stock2002macroeconomic} provides both our empirical benchmark and the structural motivation for the factor kernel. Our work builds on the static factor model framework; the dynamic factor model literature \citep{forni2000generalized, forni2005generalized} and the targeted predictor approach of \citet{bai2008forecasting} offer complementary perspectives on dimension reduction in large panels. Finally, our data augmentation strategy is related in spirit to the Bayesian shrinkage approach of \citet{demol2008forecasting}, which also exploits the factor structure for regularization, though through a different mechanism.

The remainder of the paper is organized as follows. Section~\ref{sec:setup} introduces the ridgeless estimation framework and characterizes the double-descent phenomenon. Section~\ref{sec:bllt_conditions} reviews the BLLT conditions for benign overfitting and their extension to dependent data. Section~\ref{sec:factor_bartlett} establishes when factor-structured data satisfy the BLLT conditions, showing that the first two conditions hold automatically while the third depends on the dispersion of idiosyncratic variances. Section~\ref{sec:augmentation_andRKHS} develops the synthetic factor augmentation strategy and proves its equivalence to kernel ridge regression with a factor-structured kernel. Section~\ref{sec:empirical} presents the out-of-sample forecasting evaluation comparing the factor kernel to the Stock-Watson benchmark. Section~\ref{sec:conclusion} concludes. The appendices contain proofs (Appendix~\ref{sec:appendix}), the double-descent analysis using quarterly FRED-QD data (Appendix~\ref{app:fred_qd}), an illustration of bias and variance behavior under alternative spectral shapes (Appendix~\ref{sec:bias-spectra}), a derivation of the relationship between tail index and concentration ratio (Appendix~\ref{app:pareto_tail}), and the GLS formulation of the semiparametric kernel regression (Appendix~\ref{app:rkhs_formulation}).

\section{Double-Descent and Macroeconomic Forecasting}\label{sec:setup}

Macroeconomic forecasters routinely face many more candidate predictors than observations. Standard datasets such as FRED-MD contain over 120 monthly series against sample sizes of $T \sim 400$--$800$, and including all series directly in a forecasting regression inflates estimation variance and degrades out-of-sample performance. The literature's response has been to reduce the effective parameter count: extract a small number of common factors \citep{stock2002forecasting, stock2002macroeconomic}, shrink coefficients via ridge regression or the LASSO \citep{medeiros2021forecasting, coulombe2022machine}, or select variables via information criteria or Bayesian model averaging. All of these approaches keep the number of estimated parameters well below $T$, appealing to the familiar bias-variance tradeoff to justify parsimony.

Inspired by recent developments in the machine learning literature, this paper investigates a different approach: rather than shrinking toward more parsimonious models, we ask what happens when the number of regressors $N$ is pushed past the \emph{interpolation threshold} $N = T$ (the point at which the model can fit every observation exactly) and into the overparameterized regime. We formalize this setup below and characterize the conditions under which this approach can deliver good out-of-sample forecasts.

Let $\mathbf{x}_t \in \R^N$ be a vector of macroeconomic predictors at date $t$ and let $y_t$ be a scalar target series. We assume that the forecasting model is linear:
\begin{equation}\label{eq:model}
	y_t = \mathbf{x}_t'\boldsymbol{\beta}^* + \varepsilon_t, \qquad t = 1, \ldots, T,
\end{equation}
where $\boldsymbol{\beta}^* \in \R^N$ is the vector of population projection coefficients and $\varepsilon_t$ is the forecast error, with $\E[\varepsilon_t \mid \mathbf{x}_t] = 0$ and $\E[\varepsilon_t^2] = \sigma^2$. In matrix notation, 
\begin{equation}\label{eq:model_stacked}
	\mathbf{y} = \mathbf{X}\boldsymbol{\beta}^* + \boldsymbol{\varepsilon}
\end{equation}
where  $\mathbf{y} \in \R^{T}$ and $\mathbf{X} \in \R^{T \times N}$ stacks the $T$ observations row-wise. 

\subsection{Ridgeless estimation and identification}

The behavior of the least-squares estimator depends critically on the relationship between $N$ and $T$. When $N < T$ (\emph{under-parameterized} regime), $\mathbf{X}$ has full column rank, $\mathbf{X}'\mathbf{X}$ is invertible, and OLS delivers the unique unbiased estimator $\hat{\boldsymbol{\beta}} = (\mathbf{X}'\mathbf{X})^{-1}\mathbf{X}'\mathbf{y}$. The model is identified in the usual sense and the standard bias-variance tradeoff applies. When $N = T$ (\emph{interpolation point}), $\mathbf{X}$ is square and, if full rank, the system $\mathbf{X}\boldsymbol{\beta} = \mathbf{y}$ has a unique solution that fits the training data with zero residuals, $\hat{\boldsymbol{\beta}} = \mathbf{X}^{-1} \mathbf{y}$. In contrast, when $N > T$ (\emph{overparameterized} regime), the system $\mathbf{X}\boldsymbol{\beta} = \mathbf{y}$ is underdetermined: in other words, there are infinitely many coefficient vectors $\hat{\boldsymbol{\beta}}$ that produce a perfect in-sample fit, and since $\mathbf{X}'\mathbf{X}$ is rank-deficient standard OLS cannot be applied directly.

Among the infinitely many solutions that perfectly fit the training data when $N > T$, a natural choice is the one that solves the normal equations with the smallest coefficient norm, i.e. the \emph{minimum norm estimator}:
\begin{equation}\label{eq:min_norm}
    \hat{\boldsymbol{\beta}} = \argmin_{\boldsymbol{\beta}} \lVert\boldsymbol{\beta}\rVert^2 \quad \text{subject to} \quad \mathbf{X}'\mathbf{X}\boldsymbol{\beta} = \mathbf{X}'\mathbf{y},
\end{equation}
which has closed form solution
\begin{equation}
    \boldsymbol{\hat{\beta}}=(\mathbf{X}'\mathbf{X})^{\dagger}\mathbf{X}'\mathbf{y}=\mathbf{X}'(\mathbf{X}\mathbf{X}')^{\dagger}\mathbf{y},
\end{equation}
where $(\mathbf{X}'\mathbf{X})^{\dagger}$ is the pseudoinverse of $(\mathbf{X}'\mathbf{X})$. This estimator turns out to correspond, in the limit, to the ridge estimator $\hat{\boldsymbol{\beta}}_\lambda = (\mathbf{X}'\mathbf{X} + \lambda \mathbf{I}_N)^{-1}\mathbf{X}'\mathbf{y}$ as the penalty parameter $\lambda$ approaches zero. For this reason, it is also referred to as the \emph{ridgeless estimator}.

Depending on whether $\mathbf{X}$ is full row or full column rank, we end up with the following three cases:
\begin{equation}\label{eq:ridgeless}
	\hat{\boldsymbol{\beta}} = \begin{cases}
		(\mathbf{X}'\mathbf{X})^{-1}\mathbf{X}'\mathbf{y}, & N < T, \\[4pt]
		\mathbf{X}^{-1}\mathbf{y},                         & N = T, \\[4pt]
		\mathbf{X}'(\mathbf{X}\mathbf{X}')^{-1}\mathbf{y}, & N > T.
	\end{cases}
\end{equation}

The underparameterized ($N<T$) and overparameterized ($N>T$) regimes imply a qualitatively different behavior for the mean squared forecast error (MSFE), and the transition from one regime to the next can give rise to the double-descent pattern we mentioned in the Introduction. We will characterize this formally in Section~\ref{sec:double_descent}. Before turning to that, we first examine what happens as we transition across the interpolation threshold ($N = T$) in terms of the identifiability of $\boldsymbol{\beta}^*$.

As it is well known, when $N < T$ the model is identified in the usual sense: $\mathbf{X}$ has full column rank $N$, $\mathbf{X}'\mathbf{X}$ is invertible and OLS uniquely determines $\hat{\boldsymbol{\beta}}$, with $\E[\hat{\boldsymbol{\beta}} \mid \mathbf{X}] = \boldsymbol{\beta}^*$. The estimator is unbiased and $\boldsymbol{\beta}^*$ is point-identified. The interpolation threshold marks a fundamental change in identification. When $N>T$, the data matrix $\boldsymbol{X}$ has full row rank $T$, it admits only a right inverse and can no longer uniquely identify $\boldsymbol{\beta}^*$. The minimum norm estimator will be biased: 
\begin{equation}\label{eq:bias_over}
\E[\boldsymbol{\hat{\beta}}\mid \boldsymbol{X}]=\bm{P}_r \boldsymbol{\beta}^*,
\end{equation}
where $\bm{P}_r = \boldsymbol{X}'(\boldsymbol{X}\boldsymbol{X}')^{-1}\boldsymbol{X}$ is the orthogonal projector onto the row space of $\boldsymbol{X} \subset \R^N$. Intuitively, with $T$ observations and $N > T$ predictors, the training data can pin down at most $T$ linearly independent combinations of the $N$ parameters. The remaining $N - T$ combinations fall in the null space of $\mathbf{X}$ and are invisible to any estimator. $\boldsymbol{\beta}^*$ is therefore set-identified and the identified set is the affine subspace

\[
\mathcal{I}(\boldsymbol{\beta}^*) = \{\tilde{\boldsymbol{\beta}} \in \R^N : \mathbf{X}\tilde{\boldsymbol{\beta}} = \mathbf{X}\boldsymbol{\beta}^*\} = \boldsymbol{\beta}^* + \mathrm{null}(\mathbf{X}),
\]
The data can only help identify the component of $\boldsymbol{\beta}^*$ in the row space of $\mathbf{X}$, which spans the $T$ directions that can be inferred from the estimation sample. The remaining $N - T$ directions, lying in the null space of $\mathbf{X}$, are unrecoverable regardless of the available sample size. 
We can then decompose $\boldsymbol{\beta}^*$ as\footnote{The null-space membership $\boldsymbol{\beta}^*_\perp\in\mathrm{null}(\mathbf{X})$ follows from $\mathbf{X}(\mathbf{I}-\mathbf{P}_r) = \mathbf{X} - \mathbf{X}\mathbf{X}'(\mathbf{X}\mathbf{X}')^{-1}\mathbf{X} = \mathbf{X} - \mathbf{X} = \mathbf{0}$.}
\[
\boldsymbol{\beta}^* = \mathbf{P}_r\boldsymbol{\beta}^* + \boldsymbol{\beta}^*_{\perp},
\qquad
\boldsymbol{\beta}^*_{\perp} := (\mathbf{I} - \mathbf{P}_r)\boldsymbol{\beta}^* \in \mathrm{null}(\mathbf{X}),
\]
The minimum-norm estimator recovers $\mathbf{P}_r\boldsymbol{\beta}^*$ and sets the null-space component to zero,\footnote{Substitute $\mathbf{y}=\mathbf{X}\boldsymbol{\beta}^*+\boldsymbol{\varepsilon}$ into $\hat{\boldsymbol{\beta}}=\mathbf{X}'(\mathbf{X}\mathbf{X}')^{-1}\mathbf{y}$ to get $\hat{\boldsymbol{\beta}} = \mathbf{P}_r\boldsymbol{\beta}^* + \mathbf{X}'(\mathbf{X}\mathbf{X}')^{-1}\boldsymbol{\varepsilon}$; taking $\E[\,\cdot\mid\mathbf{X}]$ yields $\E[\hat{\boldsymbol{\beta}}\mid\mathbf{X}]=\mathbf{P}_r\boldsymbol{\beta}^*$.} so its bias is precisely the unidentified component:
\[
\E[\hat{\boldsymbol{\beta}} \mid \mathbf{X}] - \boldsymbol{\beta}^* = -\boldsymbol{\beta}^*_{\perp}.
\]

Since every element of $\mathcal{I}(\boldsymbol{\beta}^*)$ produces the same in-sample fit and the same forecasts for test points within the training span, any selection rule that picks one element over another is essentially an identifying restriction on the unidentified component. The ridgeless estimator sets all unidentifiable coefficient directions to zero:
\[
(\mathbf{I} - \mathbf{P}_r)\hat{\boldsymbol{\beta}} = \mathbf{0}.
\]
The choice of the selection rule has no consequence for in-sample fit, since the fitted values are an unbiased estimator of the conditional mean of $Y$: $\mathbb{E}[\boldsymbol{X}\boldsymbol{\hat{\beta}}\mid \boldsymbol{X}]=\boldsymbol{X}\bm{P}_r\boldsymbol{\beta}=\boldsymbol{X}\boldsymbol{\beta}$; It matters only through its effect on the bias component $\boldsymbol{\beta}^*_\perp$ when forecasting future observations whose predictor values fall outside the span of the training data.

\subsection{Double descent}\label{sec:double_descent}
Suppose we forecast a new observation $(\mathbf{x}_0, y_0)$ with $\mathbf{x}_0 \sim (0, \boldsymbol{\Sigma})$ and $y_0 = \mathbf{x}_0'\boldsymbol{\beta}^* + \varepsilon_0$. The MSFE expands as
\begin{equation}\label{eq:test_risk}
    \mathrm{MSFE} = \E\!\big[(\hat{\boldsymbol{\beta}} - \boldsymbol{\beta}^*)'\boldsymbol{\Sigma}(\hat{\boldsymbol{\beta}} - \boldsymbol{\beta}^*)\big] + \sigma^2,
\end{equation}
and decomposes into bias and variance components:\footnote{Write $\hat{\boldsymbol{\beta}} - \boldsymbol{\beta}^* = (\hat{\boldsymbol{\beta}} - \E[\hat{\boldsymbol{\beta}} \mid \mathbf{X}]) + (\E[\hat{\boldsymbol{\beta}} \mid \mathbf{X}] - \boldsymbol{\beta}^*)$ and expand the quadratic form in~\eqref{eq:test_risk}. Taking $\E[\,\cdot \mid \mathbf{X}]$, the cross-term vanishes because $\E[\hat{\boldsymbol{\beta}} - \E[\hat{\boldsymbol{\beta}} \mid \mathbf{X}] \mid \mathbf{X}] = \mathbf{0}$. The first remaining term is $\E_\mathbf{X}[\tr(\boldsymbol{\Sigma}\,\Var(\hat{\boldsymbol{\beta}} \mid \mathbf{X}))]$ (the variance component) and the second is $\E_\mathbf{X}[\|\boldsymbol{\Sigma}^{1/2}(\E[\hat{\boldsymbol{\beta}} \mid \mathbf{X}] - \boldsymbol{\beta}^*)\|^2]$ (the squared bias).}
\begin{equation}\label{eq:risk_bv}
    \mathrm{MSFE} = \underbrace{\E_{\mathbf{X}}\!\left[\|\boldsymbol{\Sigma}^{1/2}(\E[\hat{\boldsymbol{\beta}} \mid \mathbf{X}] - \boldsymbol{\beta}^*)\|^2\right]}_{\text{Bias}^2} + \underbrace{\E_{\mathbf{X}}\!\left[\tr\!\big(\boldsymbol{\Sigma}\,\Var(\hat{\boldsymbol{\beta}} \mid \mathbf{X})\big)\right]}_{\text{Variance}} + \sigma^2.
\end{equation}
Importantly, the $\boldsymbol{\Sigma}$-weighting reflects the fact that forecast accuracy depends not on parameter error per se but on how that error maps into prediction error through the predictor covariance. This is the key to understanding how double descent works. The two components take qualitatively different forms on either side of the interpolation threshold. In the \emph{underparameterized regime} ($N < T$), OLS is unbiased (Bias$^2 = 0$) and
\[
\mathrm{Variance} = \E[\sigma^2\,\mathrm{tr}\!\big(\boldsymbol{\Sigma}(\mathbf{X}'\mathbf{X})^{-1}\big)],
\]
In the \emph{overparameterized regime} ($N > T$), using $\E[\hat{\boldsymbol{\beta}} \mid \mathbf{X}] = \mathbf{P}_r\boldsymbol{\beta}^*$,
\begin{align}
\mathrm{Bias}^2 &=\E[ \|\boldsymbol{\Sigma}^{1/2}\boldsymbol{\beta}^*_\perp\|^2], \label{eq:bias_OP}\\
\mathrm{Variance} &= \E[\sigma^2\,\mathrm{tr}\!\big(\boldsymbol{\Sigma}\,\mathbf{X}'(\mathbf{X}\mathbf{X}')^{-2}\mathbf{X}\big)], \label{eq:var_OP}
\end{align}
where $\boldsymbol{\beta}^*_\perp = (\mathbf{I} - \mathbf{P}_r)\boldsymbol{\beta}^*$ is the unrecovered component defined in the preceding subsection. The derivation of these variance expressions is given in Appendix~\ref{proof:bias_variance}.

The spike in MSFE at the interpolation threshold is a guaranteed consequence of the variance divergence as $N \to T$. Seeing this is easy under the (standard) assumption that the rows of $\bm X$ follow a Gaussian distribution with zero mean and variance $\bm \Sigma$, in which case we can write the variance term in the $N < T$ case as:
\[
    \E[\sigma^2\,\mathrm{tr}\!\big(\boldsymbol{\Sigma}(\mathbf{X}'\mathbf{X})^{-1}\big)] = \sigma^2 \frac{\mathrm{tr}(\bm \Sigma \bm \Sigma^{-1})}{T-N-1} = \sigma^2 \frac{N}{T-N-1}
\]
which diverges as $N \to T-1$. A similar pattern occurs in the $N \to T^+$ case. To see this, again assuming Gaussianity for simplicity, write $\bm X = \bm Z \bm\Sigma^{1/2}$ where $\bm Z$ is $T \times N$ with i.i.d.\ $\mathcal{N}(0,1)$ entries. Since $\bm X\bm X^\top = \bm Z\bm\Sigma \bm Z^\top \preceq \lambda_{\max}(\bm\Sigma)\,\bm Z\bm Z^\top$, we have $(\bm X\bm X^\top)^{-1} \succeq \lambda_{\max}(\bm\Sigma)^{-1}(\bm Z\bm Z^\top)^{-1}$, where $\lambda_{s} (\bm \Sigma)$ (for $s \in \{\text{min}, \text{max} \}$) denotes the minimum/maximum eigenvalue of $\bm \Sigma$.\footnote{We stick to this notation throughout the remainder of the paper} Since $\bm Z\bm Z^\top \sim W_T(N, \bm I_T)$, taking traces and expectations gives the lower bound:
\[
    \E\!\big[\mathrm{tr}\{(\bm X\bm X^\top)^{-1}\}\big] \;\geq\; \frac{1}{\lambda_{\max}(\bm\Sigma)}\,\frac{T}{N-T-1}
\]
for $N > T+1$, which diverges as $N \to T+1$. The bound discards all spectral structure of $\bm\Sigma$ except its largest eigenvalue, but suffices to establish that the variance diverges at the interpolation threshold for any $\bm\Sigma$ with bounded $\lambda_{\max}$. 
At $N = T$, both the underparameterized formula $\sigma^2 \frac{N}{T-N-1}$ and the overparameterized bound $\frac{1}{\lambda_{\max}(\bm\Sigma)}\,\frac{T}{N-T-1}$ are undefined, since both denominators are negative. This is the interpolation threshold at which the double descent peak occurs.

If bias and variance are both bounded as $N/T$ grows beyond the interpolation point, the MSFE will exhibit a \emph{double descent}, that is a pattern in which it reduces twice: a first descent in the underparameterized regime, a spike at the interpolation threshold, and a second descent as the model moves deeper into the overparameterized regime.

Whether the second descent would eventually produce a lower MSFE than the underparameterized regime, i.e. \emph{benign overfitting}, is not guaranteed. To achieve benign overfitting it is not sufficient for the bias and variance terms to be bounded, they have to approach zero, which means the MSFE will have to achieve the noise floor $\sigma^{2}$. As we shall see, this can happen when $\boldsymbol{\Sigma}$ has a spectral shape characterized by a few large eigenvalues and many small, but non-vanishingly small, eigenvalues. In such a situation as $N$ grows, the row space of $\mathbf{X}$ progressively aligns with the high-variance directions, pushing the unrecovered component $\boldsymbol{\beta}^*_{\perp}$ into low-variance tail directions where it contributes little to forecast error. Simultaneously, the training noise gets absorbed thinly across many regressors, reducing the variance term.

\subsection{An empirical illustration}\label{sec:motivation}

Section~\ref{sec:double_descent} established that the MSFE spike at the interpolation threshold is guaranteed, and that a second descent follows whenever the bias term~\eqref{eq:bias_OP} and variance term~\eqref{eq:var_OP} remain bounded as $N$ grows beyond $T$. We now ask whether this pattern is visible in macroeconomic data. Using the FRED-MD dataset, which contains $N = 122$ monthly series spanning 1959:01 to 2025:09 ($T = 798$ observations), we trace the full MSFE curve from the underparameterized region, through the interpolation threshold, and into the overparameterized regime. We split the sample in half, with $T_{\mathrm{train}} = 399$ and $T_{\mathrm{test}} = 399$.\footnote{In this motivating exercise the single-split design is meant to make the double-descent MSFE curve visible. Whether it also delivers systematic forecasting gains is addressed in Section~\ref{sec:empirical}, where we use a full rolling-window out-of-sample design across all targets and horizons. A parallel analysis using the quarterly FRED-QD dataset is presented in Appendix~\ref{app:fred_qd}.} All models include an AR(1) lag as a baseline predictor.

In the notation of equation~\eqref{eq:model_stacked}, $\bm{y}$ is the $T_{\mathrm{train}} \times 1$ vector of the target variable and $\mathbf{X}$ is the $T_{\mathrm{train}} \times N$ matrix stacking the remaining $N = 122$ FRED-MD series as predictors, with each variable standardized using training-window moments and transformed using the FRED-MD provided transformation codes. We consider four target series: CPI inflation (CPIAUCSL), real personal income (RPI),  industrial production (INDPRO), and the unemployment rate (UNRATE). As with the predictors, all four target series were transformed using the FRED-MD transformation codes.

We begin by taking an AR(1) model (one parameter), and gradually increasing model size toward $N = 122$ by adding $r$ principal components (PCs) of $\mathbf{X}$, ranked by eigenvalue, one at a time, where $r$ runs from $1$ to $N$. This yields an ordered path from the AR(1)+1 PC model to the AR(1)+all $N$ PCs model (which is algebraically equivalent to using all $N$ original predictors). The first set of PCs captures the dominant common factors, while later PCs increasingly capture idiosyncratic dynamics. 
\begin{figure}[t!]
    \centering
    \includegraphics[width=1\linewidth]{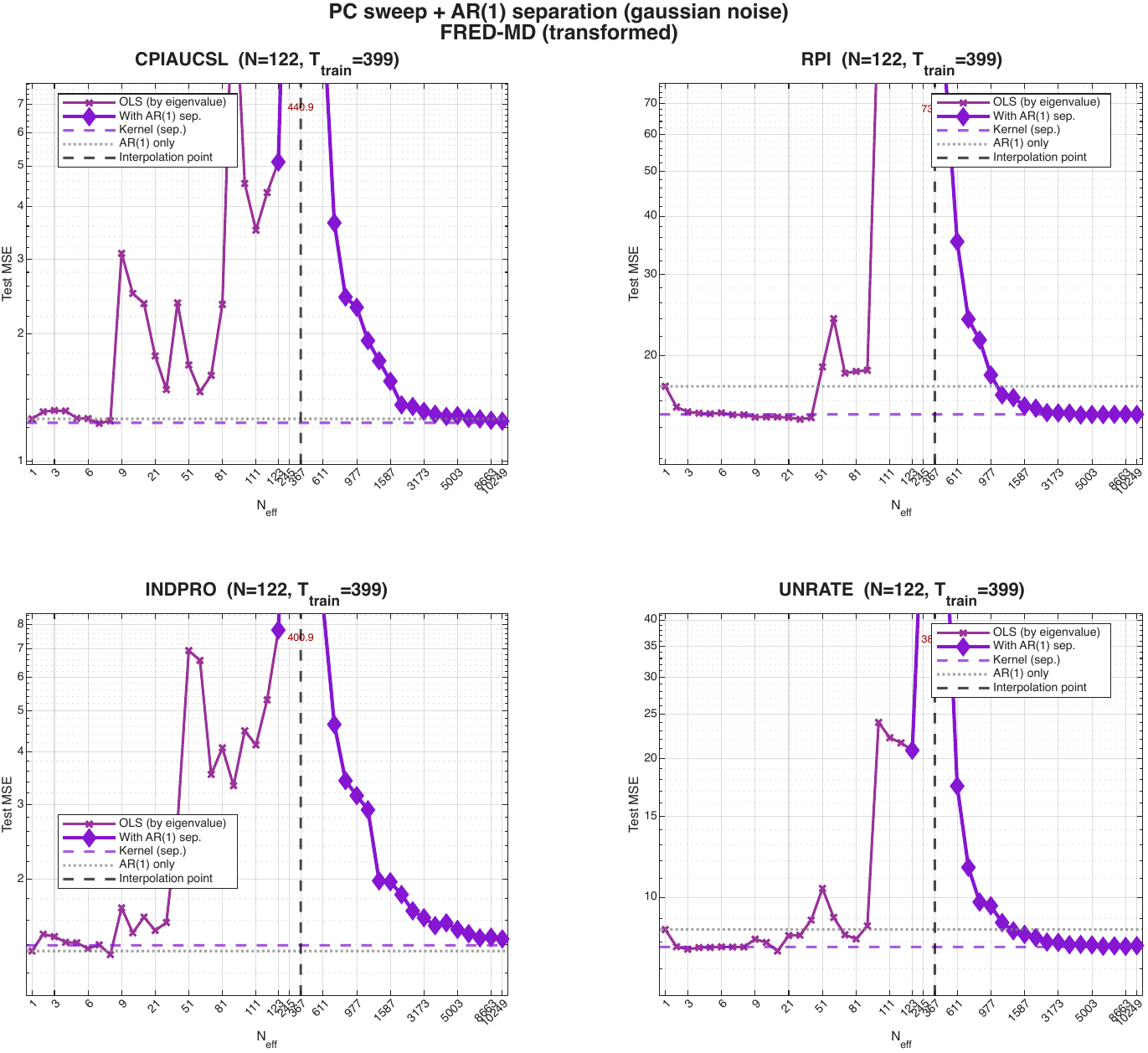}
    \caption{Double-descent MSFE curves for four FRED-MD series.
    The thin purple line traces the PC sweep from $r=1$ to $r=N$;
    the thick purple line traces results based on $B$ synthetic copies of the data $\bm X$.
    The dashed vertical line is the interpolation threshold ($N= T$);
    the purple dashed horizontal line is the asymptotic limit of the estimator, computed from its RKHS (kernel) representation described in Section~\ref{sec:augmentation_andRKHS};
    the gray dotted horizontal line shows the AR(1) baseline.}
    \label{fig:dd_four_panel}
\end{figure}
At $r = N$ the model uses all available variables and we have reached the boundary of the original data. To push into the overparameterized regime (where the number of effective parameters exceeds the number of training observations), we augment the design matrix with synthetic copies of $\bm X$. Each copy preserves the estimated common component (extracted using $\hat{k}$ Bai--Ng factors) but replaces the idiosyncratic errors with fresh Gaussian draws. Appending $B$ such copies creates a design matrix with $N(B+1)$ columns, so the effective number of parameters grows in discrete jumps of $N$. The construction and its theoretical properties are developed in Section~\ref{sec:augmentation_andRKHS}; here we simply use it as a device to extend the MSFE curve beyond the interpolation threshold.

Throughout, $N_{\mathrm{eff}}$ denotes the total number of slope coefficients being estimated (cross-sectional predictors plus the AR(1) lag): $N_{\mathrm{eff}}=r+1$ along the PC path and $N_{\mathrm{eff}}=(B+1)N+1$ with augmentation. The interpolation threshold is reached when $N_{\mathrm{eff}}=T_{\mathrm{train}}$. Since we only have $N=122$ PCs, we start augmenting the dataset with artificial covariates after we have included all PCs. After crossing the interpolation point this augmentation path then extends the analysis into the deeply overparameterized regime.


We apply this procedure to each of the four targets and plot the resulting MSFE against the effective number of parameters. Figure~\ref{fig:dd_four_panel} shows a clear and consistent double-descent pattern across all four series: MSFE rises sharply near the interpolation threshold and then declines in the overparameterized region, confirming that the second descent is a robust feature of macroeconomic data.

The natural next question is whether this decline is deep enough to constitute benign overfitting: in a true out of sample setting, does the MSFE on the right branch eventually approach the noise floor $\sigma^2$, or does it stabilize at a higher level? A rigorous assessment is carried out in Section~\ref{sec:empirical}. Before turning to that, we characterize the theoretical conditions for benign overfitting and show how the factor structure of macro panels naturally satisfies them.

\section{Conditions for Benign Overfitting}\label{sec:bllt_conditions}

This section characterizes the conditions under which benign overfitting occurs. The material draws on the papers cited and is included for completeness, providing necessary background for Section~\ref{sec:factor_bartlett}. Readers already familiar with the benign overfitting literature may skip directly to Section~\ref{sec:factor_bartlett}.

We start with analyzing separately the bias and the variance term in expression (\ref{eq:risk_bv}), and we discuss heuristically how and in which circumstances they may  converge to zero. Then we formally introduce the \citet{bartlett2020benign} (BLLT) conditions for i.i.d. data. Finally, we discuss how the BLLT conditions apply also to dependent data, drawing on the results of \citet{nakakita2022benign}.

Before proceeding, we introduce terminology that will be used throughout. The \emph{spectrum} of the $N \times N$ predictor covariance matrix $\boldsymbol{\Sigma}$ refers to its ordered eigenvalues $\lambda_1 \geq \lambda_2 \geq \cdots \geq \lambda_N > 0$. We partition the spectrum into two regions: the \emph{head} consists of the first $k$ eigenvalues, which are large and capture the dominant directions of variation (in factor models, these correspond to the common factors); the \emph{tail} consists of the remaining $N - k$ eigenvalues, which are small and represent residual variation (in factor models, these correspond to the idiosyncratic components). The split point $k$ is chosen to separate eigenvalues of order $O(N)$ from those of order $O(1)$. The key to benign overfitting is that the head must be low-dimensional (small $k$) while the tail must be both flat (eigenvalues of similar magnitude) and large-dimensional (large $N - k$), so that noise can be dispersed across many weak directions.

\subsection{A heuristic description}
As we discussed above, double descent merely requires that the MSFE eventually decreases past the interpolation point. It says nothing about the level to which it decreases. Benign overfitting is a strictly stronger statement: it requires that the MSFE converge to the noise floor $\sigma^2$ as $N/T \to \infty$, meaning that \emph{both} the bias and the variance components must converge to zero. 

\paragraph{Bias term.} As $N/T \to \infty$ the unrecovered component $\boldsymbol{\beta}^*_{\perp}$ does not vanish in Euclidean norm, which means the minimum norm estimator remains biased. However, the crucial observation is that the bias term in the MSFE rescales the bias by $\boldsymbol{\Sigma}$, i.e., each direction of $\boldsymbol{\beta}^*_{\perp}$ is weighted by the corresponding predictor variance. This means that even though the bias does not vanish in Euclidean norm, the $\boldsymbol{\Sigma}$-weighted bias might, depending on the eigenvalue structure of $\boldsymbol{\Sigma}$ and the alignment of $\boldsymbol{\beta}^*$ with the factor space. 

The key mechanism is as follows. Suppose $\boldsymbol{\Sigma}$ features a strong head of large eigenvalues and a flat tail of small, non-vanishing eigenvalues, and suppose the signal $\boldsymbol{\beta}^*$ is concentrated in the head directions. As $N$ increases, the projection operator $\bm{P}_r$ evolves because $\mathbf{X}$ gains new columns. The row space becomes progressively better aligned with the head space where $\boldsymbol{\beta}^*$ has support, and the residual $(I - \bm{P}_r)\boldsymbol{\beta}^*$ shrinks geometrically. Moreover, since the remaining residual lies in low-variance tail directions with tiny eigenvalues $\lambda_j$, the $\boldsymbol{\Sigma}$-weighted bias $\|\boldsymbol{\Sigma}^{1/2}(\bm I - \bm{P}_r)\boldsymbol{\beta}^*\|^2$ decreases even faster than the unweighted residual.

This mechanism, however, requires that the tail eigenvalues do not vanish too quickly. If they decay exponentially fast, adding new predictors contributes negligible variance to the row space, and $\bm{P}_r$ stops evolving before fully aligning with the head space. In this case, a fixed portion of $\boldsymbol{\beta}^*$ remains unrecovered regardless of $N$, and the bias does not vanish. Conversely, if the tail is too heavy (eigenvalues decay too slowly), the unrecovered component $\boldsymbol{\beta}^*_{\perp}$ may lie in directions with non-negligible variance, preventing the weighted bias from vanishing. 

Thus, benign overfitting requires the tail to be flat: disperse enough (many directions) to allow the row space to expand and cover the signal support, yet light enough (small eigenvalues) to ensure the weighted bias vanishes. Appendix~\ref{sec:bias-spectra} illustrates these requirements through three concrete spectral configurations. In a factor-structured panel, this structure arises naturally. The high-variance head directions correspond to the common factors that drive cross-sectional covariation, while the flat tail corresponds to the idiosyncratic components with similar small variances. If the variable we are forecasting loads predominantly on the factors, then $\boldsymbol{\beta}^*$ is concentrated in the head, $\boldsymbol{\beta}^*_{\perp}$ ends up in the idiosyncratic tail directions, and its contribution to MSFE is negligible even if $\|\boldsymbol{\beta}^*_{\perp}\|$ is not.\footnote{This requires the predictive signal $\boldsymbol{\beta}^*$ to have significant support in fewer than $T$ directions. If it is spread across more than $T$ directions, the $T$-dimensional row space of $\mathbf{X}$ cannot cover them all, and a fixed portion of the signal remains unrecovered regardless of $N$.}

\paragraph{Variance term.} The variance component measures how much in-sample noise is amplified into out-of-sample forecast error by the minimum-norm interpolator. When $N > T$, the interpolator fits the training noise exactly but distributes it across all $N$ coefficient directions. Here, again, the eigenvalues of $\boldsymbol{\Sigma}$ play a key role. 

If $\boldsymbol{\Sigma}$ has a flat tail of small but non-vanishing eigenvalues, then as $N$ grows, the minimum-norm solution distributes the fit over more columns, keeping the $\ell_2$ norm of the coefficient vector small. Since each tail direction contributes little variance, the noise gets absorbed thinly across many regressors, and the out-of-sample variance decreases. Conversely, if the tail is concentrated on a few large eigenvalues, noise accumulates in high-variance directions and the variance remains large. In a factor-structured panel, the idiosyncratic components provide precisely this flat spectral tail: many series with similar small variances across which the interpolated noise is spread evenly. 

\subsection{BLLT conditions}
The BLLT conditions formalize exactly what is the spectral shape of $\bm{\Sigma}$ for which benign overfitting may happen: conditions (i) and (ii) ensure the spectrum has a well-defined factor/idiosyncratic head-tail structure with the signal confined to the head, while condition (iii) ensures the idiosyncratic tail is flat enough to disperse noise. We now state these conditions formally, following \citet{bartlett2020benign}. Let $\lambda_1 \geq \lambda_2 \geq \cdots \geq \lambda_N > 0$ be the eigenvalues of $\bm{\Sigma}$. Define two measures of effective rank for the tail beyond coordinate $k$:
\begin{equation}\label{eq:effective_ranks}
    r_k(\bm{\Sigma}) := \frac{\sum_{i>k} \lambda_i}{\lambda_{k+1}}, \qquad
    R_k(\bm{\Sigma}) := \frac{\big(\sum_{i>k} \lambda_i\big)^2}{\sum_{i>k} \lambda_i^2}.
\end{equation}
The first, $r_k$, measures the total tail variance relative to the largest tail eigenvalue; the second, $R_k$, measures how uniform the tail is. When all tail eigenvalues are equal, $r_k = R_k = N - k$. When the tail is concentrated on a few directions, $R_k \ll r_k$.
Note that both $r_k$ and $R_k$ are scale-invariant: they depend only on the ratios among the tail eigenvalues. Consequently, the BLLT ratios below are unaffected by an overall rescaling of $\bm{\Sigma}$.

Define the \emph{split index}
\[
    k^* := \min\big\{k \geq 0 : r_k(\bm{\Sigma}) \geq b\,T\big\},
\]
for a universal constant $b > 0$. The split index identifies the point in the ordered spectrum where the tail becomes sufficiently spread out that noise can be dispersed across roughly $T$ weak directions. It separates the ``head'' (strong directions) from the ``tail'' (weak, noise-dispersing directions).

\begin{condition}[BLLT asymptotic conditions for benign overfitting]\label{cond:bllt}
    As $T \to \infty$ with covariance $\bm{\Sigma}_T$, the minimum-norm interpolator achieves $\mathrm{MSFE} \to \sigma^2$ if:
    \begin{equation}\label{eq:bllt_sufficient}
        \frac{r_0(\bm{\Sigma}_T)}{T} \to 0, \qquad \frac{k_T^*}{T} \to 0, \qquad \frac{T}{R_{k_T^*}(\bm{\Sigma}_T)} \to 0,
    \end{equation}
    and only if the slightly stronger conditions
    \begin{equation}\label{eq:bllt_necessary}
        \frac{r_0(\bm{\Sigma}_T)\log(1 + r_0(\bm{\Sigma}_T))}{T} \to 0, \qquad \frac{k_T^*}{T} \to 0, \qquad \frac{T}{R_{k_T^*}(\bm{\Sigma}_T)} \to 0
    \end{equation}
    hold.
\end{condition}

The three ratios capture:
\begin{enumerate}
    \item[(i)] \textbf{Overall effective rank} ($r_0/T \to 0$): the effective rank of the full spectrum, relative to sample size, must be controlled. Since $r_0 = tr(\bm{\Sigma})/\lambda_1$, this bounds the ratio of total variance to the leading eigenvalue.
    \item[(ii)] \textbf{Low-dimensional head} ($k^*/T \to 0$): only a vanishing fraction of the $T$ sample directions are needed to capture the strong spectral components.
    \item[(iii)] \textbf{Flat tail} ($T/R_{k^*} \to 0$): beyond the split index, the tail eigenvalues must be sufficiently uniform so that $R_{k^*} \gg T$. This ensures interpolated noise is spread across many directions of comparable variance rather than concentrated on a few dominant ones.
\end{enumerate}

These quantities and conditions follow \citet{bartlett2020benign} (their Definition~3, Definition~4, and Theorem~1). Their determinant normalization is without loss of generality: since $r_k$, $R_k$, and $k^*$ depend only on eigenvalue ratios, rescaling $\bm{\Sigma}$ does not affect whether \eqref{eq:bllt_sufficient}--\eqref{eq:bllt_necessary} hold.

\subsection{Extension to dependent data}\label{sec:dependent}

The BLLT theory assumes that the rows of $\bm{X}$ are drawn independently from a sub-Gaussian distribution. In macroeconomic applications the rows of $\bm{X}$ are consecutive time-series observations and therefore serially dependent. \citet{nakakita2022benign} extend the theory to dependent (sub-Gaussian) data and show that the BLLT spectral conditions still apply, with an adjustment for temporal dependence.

The key idea is a decorrelation argument: a multivariate whitening transformation removes the autocorrelation structure and restores the i.i.d.\ setting to which BLLT applies directly. The results then translate back to the original scale, and the BLLT bounds apply directly to the untransformed data, with an inflated variance term that accounts for the additional uncertainty from temporal dependence. The only additional requirement is a coherence condition (Assumption~\ref{coherence}) ensuring that the temporal dependence structure is such that an appropriate decorrelation transformation exists. 

Let $\{\mathbf{x}_t\}_{t=1}^T$ be a stationary $N$-dimensional process with autocovariance function
\[
    \bm{\Gamma}_X(h) := \E[\mathbf{x}_t \mathbf{x}_{t-h}'] \in \mathbb{R}^{N \times N}, \qquad h = 0, 1, 2, \ldots
\]
Let $\bm{\Sigma} = \bm{U}\bm{\Lambda}_\Sigma \bm{U}'$ be the spectral decomposition of $\bm{\Sigma} = \bm{\Gamma}_X(0)$, with eigenvectors $\bm{u}_1, \ldots, \bm{u}_N$ and eigenvalues $\lambda_1 \geq \cdots \geq \lambda_N > 0$. Each eigenvector $\bm{u}_i$ defines a linear combination of the $N$ original series; the index $i = 1, \ldots, N$ thus runs over these $N$ directions, which in a factor model correspond to the $k$ common factors (large eigenvalues) and the $N-k$ idiosyncratic components (small eigenvalues). The \emph{projected series} $\tilde{\mathbf{x}}_{it} := \bm{u}_i' \mathbf{x}_t$ is the scalar time series along the $i$-th such direction, with autocovariance
\begin{equation}\label{eq:gamma_i}
    \gamma_i(h) := \mathrm{Cov}(\tilde{\mathbf{x}}_{it},\, \tilde{\mathbf{x}}_{i,t-h}) = \bm{u}_i'\,\bm{\Gamma}_X(h)\,\bm{u}_i.
\end{equation}
For each $i$, let $\bm{\Xi}_{i,T}$ denote the $T \times T$ autocovariance matrix of the projected series $(\tilde{\mathbf{x}}_{i1}, \ldots, \tilde{\mathbf{x}}_{iT})'$.
When $\gamma_i(h)$ does not depend on $i$, we can write $\bm{\Xi}_{i,T} = \bm{\Xi}_T$ for all $i$ and the covariance of $\mathrm{vec}(\bm{X})$ has the separable Kronecker form $\bm{\Sigma} \otimes \bm{\Xi}_T$. In general, $\gamma_i(h)$ varies across directions, and the autocovariance takes the non-separable form
\[
    \mathrm{Cov}\big(\mathrm{vec}(\bm{X})\big) = \sum_{i=1}^N \Xi_{i,T} \otimes (u_i u_i').
\]

\paragraph{Decorrelation.}
\citet{nakakita2022benign} model the design matrix through the factorization
\[
    \bm{X} = \big[\bm{\Xi}_{1,T}^{1/2}\,z_1,\;\ldots,\;\bm{\Xi}_{N,T}^{1/2}\,z_N\big]\,\bm{\Lambda}_\Sigma^{1/2}\,\bm{U}^\top,
\]
where $\bm{z}_i \sim \mathcal{N}(0, \bm{I}_T)$ are i.i.d.\ In the separable case $\bm{\Xi}_{i,T} = \bm{\Xi}_T$, this reduces to $\bm{X} = \bm{\Xi}_T^{1/2}\,\bm{Z}\,\bm{\Lambda}_\Sigma^{1/2}\,\bm{U}^\top$, and pre-multiplying by $\bm{\Xi}_T^{-1/2}$ yields $\tilde{\bm{X}} = Z\Lambda_\Sigma^{1/2}U^\top$, which has i.i.d.\ rows with covariance $\bm{\Sigma}$, exactly the BLLT setting. The minimum-norm interpolant of the original problem coincides with that of the whitened problem, so the BLLT spectral conditions apply directly after decorrelation.
In the non-separable case, each projected series requires its own whitening transformation $\bm{\Xi}_{i,T}^{-1/2}$, and no single transformation decorrelates all directions simultaneously. The following assumption ensures these direction-specific structures are sufficiently similar that a single reference transformation approximately decorrelates all of them. 

\begin{assumption}\label{coherence}
{Coherent autocovariance.} The autocovariance matrices $\{\bm{\Xi}_{i,T}\}$ are mutually coherent: there exists a reference matrix $\bm{\Xi}_{0,T}$ and a constant $\epsilon \in (0,1]$ such that
\[
\epsilon \;\leq\; \lambda_{\min}(\bm{\Xi}_{0,T}^{-1}\bm{\Xi}_{i,T}) \;\leq\; 
\lambda_{\max}(\bm{\Xi}_{0,T}^{-1}\bm{\Xi}_{i,T}) \;\leq\; 1/\epsilon
\qquad \text{for all } i = 1, \ldots, N,
\]
where $\lambda_{\max}(\cdot)$ and $\lambda_{\min}(\cdot)$ denote the largest and smallest eigenvalues.
\end{assumption}

Coherence bounds how much $\gamma_i(h)$ (defined in~\eqref{eq:gamma_i}) can vary across eigendirections: it rules out situations where some directions are near-integrated while others are white noise. This condition is satisfied by any stationary VAR with companion matrix $\bm A$ and spectral radius $\rho(\bm A)<1-\delta$ for some $\delta>0$ and eigenvalues bounded away from zero, conditions met by standard macroeconomic VARs after applying the McCracken--Ng stationarity transformations. More broadly, the condition holds for any stationary VARMA process with bounded spectral density.

Crucially, \citet{nakakita2022benign} show that when coherence is satisfied, there is no need to actually perform any transformation: the usual BLLT bounds hold directly on the untransformed data, although with an inflated variance term on account of the temporal dependence.

\section{Factor Structure and Benign Overfitting}\label{sec:factor_bartlett}

This Section specializes the general setup of Section~\ref{sec:setup} and Section~\ref{sec:bllt_conditions} to factor-structured predictors. We derive the key spectral implications of both exact and approximate factor structures and verify when the BLLT conditions hold.

As detailed in Section~\ref{sec:bllt_conditions}, benign overfitting requires three conditions: (i) the overall effective rank must be small relative to the sample size; (ii) the signal must be concentrated in a low-dimensional subspace; (iii) the residual spectrum must be sufficiently flat to disperse noise. This Section shows that with factor-structured predictors, the first two conditions hold automatically as consequences of the factor structure, while the third condition, the tail condition, depends on the dispersion of the idiosyncratic variances. The tail flatness condition is easier to satisfy under exact factor models than under approximate factor models. In the exact case, the tail eigenvalues coincide with the idiosyncratic variances, so flatness requires only that no series has idiosyncratic variance orders of magnitude larger than others. In the approximate case, residual correlations among idiosyncratic components can concentrate variance into a few large tail eigenvalues, creating spikes that reduce the effective dimension of the tail and violate the flatness requirement. Readers not interested in the technical details can skip to Section~\ref{sec:augmentation_andRKHS}.

Our analysis is related to but distinct from \citet{bunea2021interpolating}, who study benign overfitting in a factor regression model, where the response variable depends directly on the latent factors. By contrast, we study linear regression where the \emph{predictors} have a factor structure but the response may depend on all $N$ predictors. This distinction matters for the bias condition: in a factor regression the signal is confined to the factor space by assumption, whereas in our setting the signal alignment with the factor space is an empirical question. Moreover, \citet{bunea2021interpolating} is based on an exact factor model within an i.i.d. setting, while we allow for an approximate factor structure and temporal dependence, both of which are typical features of macroeconomic data. 

\subsection{Spectral shape of data with a factor structure}
Recall the linear regression model shown in equation~\eqref{eq:model_stacked}:
\[
    \mathbf{y} = \mathbf{X}\boldsymbol{\beta}^* + \boldsymbol{\varepsilon}.
\]
Our goal is to estimate this model directly, without first extracting factors or reducing dimensionality. That is, the factor structure of the regressors is not used in the estimation step; rather, it provides the spectral configuration that enables benign overfitting. Specifically, we assume each row of $\mathbf{X}$ has a factor structure: $\mathbf{x}_t = \bm{\Lambda} \mathbf{f}_t + \mathbf{e}_t, t = 1, \ldots, T,$
where $\mathbf{f}_t \in \R^k$ is a vector of $k$ latent common factors, $\bm{\Lambda} \in \R^{N \times k}$ is the matrix of factor loadings, and $\mathbf{e}_t \in \R^N$ is the idiosyncratic error. The population covariance is
\[
    \bm{\Sigma} = \bm{\Lambda}\bm{\Lambda}^\prime + \bm{\Psi},
\]
where $\bm{\Psi} = \E[\mathbf{e}_t \mathbf{e}_t^\prime]$ is the idiosyncratic covariance matrix. The distinction between exact and approximate factor models revolves around the structure of $\bm{\Psi}$: the exact model assumes idiosyncratic errors are cross-sectionally uncorrelated ($\bm{\Psi}$ diagonal), while the approximate model permits cross-sectional correlations ($\bm{\Psi}$ non-diagonal but with bounded eigenvalues).
\begin{assumption}[\bf{Exact factor structure}]\label{ass:factor}
    \begin{enumerate}
        \item[(i)] $\E[\mathbf{f}_t] = \bm{0}$ and $\E[\mathbf{f}_t \mathbf{f}_t'] = \bm{I}_k$.
        \item[(ii)] $\E[\mathbf{e}_t] = \bm{0}$ and $\E[\mathbf{e}_t \mathbf{e}_t'] = \bm{\Psi} = \diag(\psi_1^2, \ldots, \psi_N^2)$ with $0 < \underline{\psi}^2 \leq \psi_j^2 \leq \bar{\psi}^2 < \infty$ for all $j$.
        \item[(iii)] $\mathbf{f}_t$ and $\mathbf{e}_s$ are mutually independent for all $t, s$.
        \item[(iv)] $\bm{\Lambda}'\bm{\Lambda} / N \to \bm{\Sigma}_\Lambda$ as $N \to \infty$, where $\bm{\Sigma}_\Lambda$ is $k \times k$ positive definite.
        \item[(v)] $N^{-1} \tr(\bm{\Psi}) \to \bar{\psi}_0^2 \in (0, \infty)$ as $N \to \infty$.
    \end{enumerate}
\end{assumption}
The diagonal structure of $\bm{\Psi}$ in part (ii) defines the exact factor model: idiosyncratic errors are cross-sectionally uncorrelated. This can be relaxed:
\begin{assumption}[\bf{Approximate factor structure}]\label{ass:factor_approx}
    Assumption~\ref{ass:factor} holds with (ii) replaced by:
    \begin{enumerate}
        \item[(ii$'$)] $\E[\mathbf{e}_t] = \bm{0}$ and $\E[\mathbf{e}_t \mathbf{e}_t'] = \bm{\Psi}$, where $\bm{\Psi}$ is positive definite with eigenvalues uniformly bounded: $0 < \underline{\psi}^2 \leq \lambda_{\min}(\bm{\Psi}) \leq \lambda_{\max}(\bm{\Psi}) \leq \bar{\psi}^2 < \infty$ for all $N$.
    \end{enumerate}
\end{assumption}
The approximate factor structure, introduced by \citet{ChamberlainRothschild1983}, relaxes the exact model by allowing weak cross-sectional correlation in the idiosyncratic errors. Under either assumption, the eigenvalues of $\bm{\Sigma}$ consist of $k$ \emph{spike} eigenvalues driven by the factors and $N - k$ \emph{tail} eigenvalues driven by the idiosyncratic component. Weyl's inequality gives
\begin{equation}\label{eq:weyl}
    \psi_{\min}^2 + \lambda_j(\bm{\Lambda}\bm{\Lambda}') \;\leq\; \lambda_j(\bm{\Sigma}) \;\leq\; \psi_{\max}^2 + \lambda_j(\bm{\Lambda}\bm{\Lambda}'), \qquad j = 1, \ldots, N,
\end{equation}
where $\lambda_j(\bm{\Lambda}\bm{\Lambda}') = 0$ for $j > k$, and $\psi_{\min}^2 := \lambda_{\min}(\bm{\Psi})$, $\psi_{\max}^2 := \lambda_{\max}(\bm{\Psi})$ denote the smallest and largest eigenvalues of the idiosyncratic covariance matrix $\bm{\Psi}$ (these satisfy $\underline{\psi}^2 \leq \psi_{\min}^2 \leq \psi_{\max}^2 \leq \bar{\psi}^2$, where $\underline{\psi}^2$ and $\bar{\psi}^2$ are the assumption-level bounds in Assumptions~\ref{ass:factor}--\ref{ass:factor_approx}).
The key spectral implication of a factor structure is a sharp separation between factor-driven and idiosyncratic eigenvalues, formalized in the following lemma.
\begin{lemma}[Spiked spectrum under factor structure]\label{lemma:spiked_spectrum}
Under either Assumption~\ref{ass:factor} or Assumption~\ref{ass:factor_approx}, the covariance $\bm{\Sigma} = \bm{\Lambda}\bm{\Lambda}' + \bm{\Psi}$ has a spiked eigenvalue structure:
\begin{enumerate}
    \item[(a)] The $k$ leading eigenvalues satisfy $\lambda_j(\bm{\Sigma}) = \lambda_j(\bm{\Lambda}\bm{\Lambda}') + O(1) = O(N)$ for $j = 1, \ldots, k$.
    \item[(b)] The remaining $N - k$ eigenvalues satisfy $\psi_{\min}^2 \leq \lambda_j(\bm{\Sigma}) \leq \psi_{\max}^2 = O(1)$ for $j = k+1, \ldots, N$.
\end{enumerate}
\end{lemma}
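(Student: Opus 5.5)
The proof rests on Weyl's inequality \eqref{eq:weyl} together with a count of the eigenvalues of the rank-$k$ matrix $\bm{\Lambda}\bm{\Lambda}'$.

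First we locate the eigenvalues of $\bm{\Lambda}\bm{\Lambda}'$. Its nonzero eigenvalues coincide with those of the $k\times k$ matrix $\bm{\Lambda}'\bm{\Lambda}$, since $\bm{\Lambda}\bm{\Lambda}'$ and $\bm{\Lambda}'\bm{\Lambda}$ share nonzero spectra. By Assumption~\ref{ass:factor}(iv), which Assumption~\ref{ass:factor_approx} retains, $\bm{\Lambda}'\bm{\Lambda}/N \to \bm{\Sigma}_\Lambda \succ 0$. Eigenvalues are continuous in the matrix entries, so $\lambda_j(\bm{\Lambda}\bm{\Lambda}')/N \to \lambda_j(\bm{\Sigma}_\Lambda)\in(0,\infty)$ for $j=1,\dots,k$. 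Hence these eigenvalues are of exact order $N$: they are bounded above and below by positive multiples of $N$ for $N$ large. For $j>k$ we have $\lambda_j(\bm{\Lambda}\bm{\Lambda}')=0$, because the rank is at most $k$.

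Next we bound the idiosyncratic part. Under Assumption~\ref{ass:factor}(ii), $\bm{\Psi}$ is diagonal, so its eigenvalues are the $\psi_j^2$ and lie in $[\underline{\psi}^2,\bar{\psi}^2]$. Under (ii$'$) the same bounds are assumed directly for the eigenvalues. In both cases $\underline{\psi}^2\le\psi_{\min}^2\le\psi_{\max}^2\le\bar\psi^2$, with constants independent of $N$.

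With these two facts, Weyl's inequality for the sum of symmetric matrices $\bm{\Lambda}\bm{\Lambda}'$ and $\bm{\Psi}$ gives $\lambda_j(\bm{\Lambda}\bm{\Lambda}')+\psi^2_{\min}\le\lambda_j(\bm{\Sigma})\le\lambda_j(\bm{\Lambda}\bm{\Lambda}')+\psi^2_{\max}$ for every $j$.
\begin{itemize}
\item For $j\le k$, this says $\lambda_j(\bm{\Sigma})-\lambda_j(\bm{\Lambda}\bm{\Lambda}')\in[\psi^2_{\min},\psi^2_{\max}]$, which is $O(1)$. Combined with the order-$N$ behaviour from the first step, $\lambda_j(\bm{\Sigma})=O(N)$, and in fact $\asymp N$. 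This is part (a).
\item For $j>k$, the factor term vanishes and we get exactly $\psi^2_{\min}\le\lambda_j(\bm{\Sigma})\le\psi^2_{\max}\le\bar\psi^2=O(1)$. This is part (b).
\end{itemize}

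The only step needing care is the first one. Part (a) requires the spikes to be of order $N$ and genuinely separated from the tail, which needs the lower bound, not just $O(N)$. That lower bound comes from positive definiteness of $\bm{\Sigma}_\Lambda$ via eigenvalue continuity, for example by applying Weyl's inequality to $\bm{\Lambda}'\bm{\Lambda}/N-\bm{\Sigma}_\Lambda$. Consequently the statement holds for all sufficiently large $N$. The rest is a direct application of \eqref{eq:weyl}, and it is identical under the exact and approximate assumptions because only the extreme eigenvalues of $\bm{\Psi}$ enter.
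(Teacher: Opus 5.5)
Your proposal is correct and follows essentially the same route as the paper's proof. Both use the rank-$k$ structure of $\bm{\Lambda}\bm{\Lambda}'$, spikes of order $N$ from Assumption~\ref{ass:factor}(iv), and Weyl's inequality with the extreme eigenvalues of $\bm{\Psi}$. Your explicit eigenvalue-continuity step, which shows the spikes are of exact order $N$, is a slightly more careful version of the paper's claim that even the smallest eigenvalue of $\bm{\Lambda}'\bm{\Lambda}$ is of order $N$.
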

The proof is given in Appendix~\ref{proof:lemma_spiked_spectrum}. To verify the BLLT conditions, we define the tail moments
\[
    \bar{\lambda}_{\text{tail}} := \frac{1}{N-k}\sum_{j=k+1}^{N} \lambda_j, \qquad
    \overline{\lambda}^2_{\text{tail}} := \frac{1}{N-k}\sum_{j=k+1}^{N} \lambda_j^2,
\]
and the tail concentration ratio
\begin{equation}\label{eq:concentration}
    \mathcal{C} := \frac{\overline{\lambda}^2_{\text{tail}}}{(\bar{\lambda}_{\text{tail}})^2} \;\geq\; 1,
\end{equation}
with equality if and only if all tail eigenvalues are identical. The concentration ratio $\mathcal{C}$ measures the dispersion of the tail eigenvalues: $\mathcal{C} = 1$ indicates perfect tail flatness (all tail eigenvalues equal), while $\mathcal{C} \to \infty$ indicates tail concentration on a few dominant directions. 

The effective ranks satisfy\footnote{From the definitions \eqref{eq:effective_ranks}: $r_k = \sum_{i>k}\lambda_i / \lambda_{k+1} = (N-k)\bar{\lambda}_{\text{tail}} / \lambda_{k+1}$. For $R_k$: $R_k = (\sum_{i>k}\lambda_i)^2 / \sum_{i>k}\lambda_i^2 = ((N-k)\bar{\lambda}_{\text{tail}})^2 / ((N-k)\overline{\lambda}^2_{\text{tail}}) = (N-k)(\bar{\lambda}_{\text{tail}})^2 / \overline{\lambda}^2_{\text{tail}} = (N-k)/\mathcal{C}$.}
\begin{align}
    r_k(\bm{\Sigma}) &= \frac{(N-k)\,\bar{\lambda}_{\text{tail}}}{\lambda_{k+1}}, \qquad
    R_k(\bm{\Sigma}) = \frac{N-k}{\mathcal{C}}. \label{eq:Rk_general}
\end{align}
These objects are the key inputs for verifying the BLLT ratios under exact and approximate factor structures.

\subsection{BLLT conditions under a factor structure}
We now verify the BLLT conditions under the exact and approximate factor structures.

\begin{theorem}[BLLT conditions under factor structure]\label{prop:factor}
    Under Assumption~\ref{coherence} and either Assumption~\ref{ass:factor} (exact factor structure) or Assumption~\ref{ass:factor_approx} (approximate factor structure) with $N/T \to \infty$, the three BLLT conditions \eqref{eq:bllt_sufficient} are satisfied as follows:
    \begin{enumerate}
        \item[(i)] \emph{Overall effective rank.} The factor structure forces $r_0(\bm{\Sigma}) = O(1)$, so the first BLLT ratio vanishes: $r_0/T \to 0$.
        \item[(ii)] \emph{Low-dimensional head.} The spectral gap between the $k$ spike eigenvalues ($O(N)$) and the tail ($O(1)$) implies $k^* = k$. Since $k$ is fixed, the second BLLT ratio vanishes: $k^*/T \to 0$.
        \item[(iii)] \emph{Tail flatness.} The third BLLT condition requires
        \begin{equation}\label{eq:flatness_factor}
            \frac{T}{R_{k^*}(\bm{\Sigma})} = \frac{T\,\mathcal{C}}{N - k} \;\to\; 0.
        \end{equation}
        This condition is satisfied under any of the following sufficient conditions on the idiosyncratic covariance structure:
        \begin{enumerate}
            \item[(a)] \emph{Homoscedastic:} Under the exact factor model, $\bm{\Psi} = \psi^2 I_N$; under the approximate factor model, all eigenvalues of $\bm{\Psi}$ are equal to $\psi^2$. Then $\mathcal{C} = 1$ and $R_k = N - k$, so \eqref{eq:flatness_factor} holds since $T/(N-k) \to 0$.
            \item[(b)] \emph{Bounded heterogeneity:} $\bar{\psi}^2/\underline{\psi}^2 = O(1)$. Under the exact factor model, this bounds the ratio of diagonal entries of $\bm{\Psi}$; under the approximate factor model, it bounds the ratio of eigenvalues of $\bm{\Psi}$. Then $\mathcal{C} = O(1)$, so \eqref{eq:flatness_factor} holds since $T/(N-k) \to 0$.
            \item[(c)] \emph{Diverging concentration:} $\mathcal{C} \to \infty$ with $\mathcal{C} = o(N/T)$. Then \eqref{eq:flatness_factor} holds since $T\mathcal{C}/(N-k) = o(T \cdot N/T / N) = o(1)$.
        \end{enumerate}
    \end{enumerate}
    In all three cases, the BLLT conditions hold and benign overfitting occurs: the minimum-norm interpolator achieves MSFE converging to $\sigma^2$. The rate of convergence depends on $\mathcal{C}$: it is fastest in case~(a) and slowest in case~(c).
\end{theorem}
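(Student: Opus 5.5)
The plan is to derive all three ratios from Lemma~\ref{lemma:spiked_spectrum} and the identities in \eqref{eq:Rk_general}, and then invoke Condition~\ref{cond:bllt}. Assumption~\ref{coherence} together with the result of \citet{nakakita2022benign} lets us apply the i.i.d.\ BLLT conclusion to the serially dependent design. Dependence only inflates the variance bound by a constant that depends on $\epsilon$, so it does not affect whether the ratios vanish. The work is therefore purely spectral.

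For (i), write $r_0 = \tr(\bm{\Sigma})/\lambda_1$. The trace equals $\tr(\bm{\Lambda}'\bm{\Lambda}) + \tr(\bm{\Psi})$. By Assumption~\ref{ass:factor}(iv)--(v) both terms are $O(N)$: the first is $N\tr(\bm{\Sigma}_\Lambda)(1+o(1))$, and the second is at most $N\bar\psi^2$ under (ii$'$). By Lemma~\ref{lemma:spiked_spectrum}(a) and Weyl's inequality \eqref{eq:weyl}, $\lambda_1 \ge \lambda_1(\bm{\Lambda}\bm{\Lambda}') = \lambda_{\max}(\bm{\Lambda}'\bm{\Lambda}) = N\lambda_{\max}(\bm{\Sigma}_\Lambda)(1+o(1))$, which is of exact order $N$ because $\bm{\Sigma}_\Lambda$ is positive definite. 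Hence $r_0 = O(1)$ and $r_0/T\to 0$. Since $\log(1+r_0)=O(1)$, the stronger necessary version in \eqref{eq:bllt_necessary} also holds.

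For (ii), I will show $k^*=k$ for large $N$ and $T$. For $j<k$, $r_j = \sum_{i>j}\lambda_i/\lambda_{j+1}$. The numerator is $O(N)$ and the denominator is at least $c N$, where $c$ is the smallest eigenvalue of $\bm{\Sigma}_\Lambda$ up to $1+o(1)$. So $r_j = O(1) < bT$ eventually, and $k^*\ge k$. At $j=k$, $r_k \ge (N-k)\psi_{\min}^2/\psi_{\max}^2 \ge (N-k)\underline\psi^2/\bar\psi^2$ by Lemma~\ref{lemma:spiked_spectrum}(b). This exceeds $bT$ eventually because $N/T\to\infty$, so $k^*= k$. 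Because $k$ is fixed, $k^*/T\to 0$.

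For (iii), $R_{k^*}=R_k=(N-k)/\mathcal{C}$ by \eqref{eq:Rk_general}, and it remains to bound $\mathcal{C}$ in each case. In case (a) the tail eigenvalues of $\bm{\Sigma}$ need not equal $\psi^2$ exactly under the approximate model, but Weyl's inequality \eqref{eq:weyl} pins them in $[\psi^2,\psi^2]$, so they do. In the exact model with $\bm{\Psi}=\psi^2 I$, the tail eigenvalues equal $\psi^2$ exactly. Either way $\mathcal{C}=1$. In case (b) all tail eigenvalues lie in $[\underline\psi^2,\bar\psi^2]$, so $\mathcal{C}\le \bar\psi^4/\underline\psi^4 = O(1)$. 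Case (c) is direct substitution. In all three cases $T\mathcal{C}/(N-k)\to0$ because $N/T\to\infty$ and $k$ is fixed.

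The main obstacle is the last clause of (ii): showing that $r_j$ is below $bT$ for every $j<k$. This needs the spikes to be of exact order $N$ from below, not merely $O(N)$ as Lemma~\ref{lemma:spiked_spectrum}(a) states. I will supply that lower bound from the positive definiteness of $\bm{\Sigma}_\Lambda$ as sketched. A second subtlety is the rate statement. The BLLT excess-risk bound has a term of order $T/R_{k^*}=T\mathcal{C}/(N-k)$, which is increasing in $\mathcal{C}$, and the other terms do not depend on $\mathcal{C}$. So I will justify the ordering (a) fastest, (c) slowest only informally, by monotonicity in $\mathcal{C}$ of this upper bound. I will not claim exact rates.
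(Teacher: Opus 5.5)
Your proposal is correct and follows essentially the same route as the paper's proof. Like the paper, you use Weyl's inequality to get $r_0=\tr(\bm{\Sigma})/\lambda_1=O(1)$, the $O(N)$-versus-$O(1)$ spectral gap to get $k^*=k$, and $R_k=(N-k)/\mathcal{C}$ with all tail eigenvalues confined to $[\underline{\psi}^2,\bar{\psi}^2]$ to control $\mathcal{C}$, and then you invoke the BLLT conclusion, extended to dependent data via Assumption~\ref{coherence} and \citet{nakakita2022benign}. Your version is slightly tighter than the paper's: you supply the exact-order lower bounds $\lambda_k(\bm{\Lambda}\bm{\Lambda}')\geq N\lambda_{\min}(\bm{\Sigma}_\Lambda)(1+o(1))$ and $r_k\geq (N-k)\underline{\psi}^2/\bar{\psi}^2$, which the paper only asserts loosely as ``$O(N)$,'' and your bound $\mathcal{C}\leq\bar{\psi}^4/\underline{\psi}^4$ is what the paper's stated ingredients actually deliver.
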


The proof is given in Appendix~\ref{proof:prop_factor}.

Conditions (i) and (ii) are satisfied by construction in a factor model, by virtue of Lemma~\ref{lemma:spiked_spectrum}. The key to condition (ii) is that $k^* = k$: the factor structure induces a sharp spectral jump from $O(N)$ eigenvalues (the factor spikes) to $O(1)$ eigenvalues (the idiosyncratic tail) at position $k$, so the natural split index is exactly the number of factors. In finite samples this gap can be weak but asymptotically the separation diverges under both factor model assumptions. 

For condition (iii), the critical requirement is that the idiosyncratic variances are not too dispersed across series. Under the exact factor model, where $\bm{\Psi}$ is diagonal, the concentration ratio $\mathcal{C}$ is determined directly by the dispersion of the diagonal entries $\psi_j^2$. If $\bar\psi^2/\underline\psi^2 = O(1)$, then $\mathcal{C} = O(1)$ and the flatness condition holds automatically as $N/T \to \infty$. Homoscedasticity ($\bm{\Psi} = \psi^2 I_N$) is the cleanest special case, giving $\mathcal{C} = 1$ and $R_k = N - k$. 

Under the approximate factor model, $\bm{\Psi}$ can have off-diagonal entries reflecting residual cross-sectional correlations, and $\mathcal{C}$ is determined by the eigenvalues $\mu_1, \ldots, \mu_N$ of $\bm{\Psi}$ rather than its diagonal entries (so that $\psi_{\min}^2 = \min_j \mu_j$ and $\psi_{\max}^2 = \max_j \mu_j$). The condition $\bar{\psi}^2/\underline{\psi}^2 = O(1)$ now bounds the ratio of the largest to smallest eigenvalues of $\bm{\Psi}$. Residual correlations can inflate $\mathcal{C}$ relative to the exact factor case: if idiosyncratic components are strongly positively correlated, their joint variance concentrates into fewer directions, making the tail less flat and benign overfitting less likely. In the extreme case where all idiosyncratic components are perfectly correlated, $\bm{\Psi}$ has rank one and $\mathcal{C} \to \infty$, violating the flatness condition entirely. In practice, if the factor structure captures most of the cross-sectional dependence, the residuals will be approximately uncorrelated and $\mathcal{C}$ will be close to its exact-factor-structure value.

\subsection{A tale of tails}

In Theorem~\ref{prop:factor}, the first two BLLT conditions from Condition~\ref{cond:bllt} — overall effective rank (i) and low-dimensional head (ii) — are automatically satisfied whenever the data-generating process follows a factor structure, an assumption that is reasonable in macroeconomic panels. The only empirically critical condition is the third BLLT condition (iii), the tail-flatness requirement $T/R_{k^*} \to 0$, which depends on the concentration ratio~\eqref{eq:concentration}. 

The concentration ratio $\mathcal{C}$ governs how effectively the spectral tail disperses interpolation noise. When tail eigenvalues are uniform ($\mathcal{C}=1$), noise is spread across many comparable directions and averages out at rate $T/(N-k)$. When the tail is heterogeneous ($\mathcal{C}>1$), noise concentrates in a few dominant directions and averaging is weaker. Equivalently, $R_{k^*}=(N-k)/\mathcal{C}$ is the number of \emph{effective} tail directions, and the flatness condition requires $T\mathcal{C}/(N-k)\to 0$.

    Which of cases~(a)--(c) in condition (iii) of Theorem~\ref{prop:factor} applies depends on how the spectrum of $\bm{\Psi}$ behaves as $N \to \infty$. As shown in the proof of Theorem~\ref{prop:factor}, the interlacing bounds imply that the tail eigenvalues of $\bm{\Sigma}$ are asymptotically governed by the ordered eigenvalues $\{\mu_j\}_{j=1}^N$ of $\bm{\Psi}$, so the concentration ratio is asymptotically determined by the cross-sectional moments of the $\mu_j$:
    \[
        \mathcal{C} \;\to\; \frac{\frac{1}{N-k}\sum_{j>k} \mu_j^2}{\Big(\frac{1}{N-k}\sum_{j>k} \mu_j\Big)^2} \qquad \text{as } N \to \infty.
    \]
    In case (a) we have $\mathcal{C} = 1$ and the tail is perfectly flat. We are in case~(b) if and only if this limit remains bounded, and in case~(c) if it diverges but slowly enough that $\mathcal{C} = o(N/T)$. If none of these cases apply---that is, if $\mathcal{C}$ grows faster than $N/T$---then condition~(iii) fails and benign overfitting does not obtain.

     Appendix~\ref{app:pareto_tail}  shows that a sufficient condition for case~(b) is that the eigenvalues $\{\mu_j\}$ have a Pareto tail with index $\alpha > 2$.
    This ensures $\mathcal{C} = O(1)$. If $1 < \alpha \leq 2$, $\mathcal{C}$ diverges but slowly enough to remain $o(N/T)$, placing us in case~(c). The case $\alpha \leq 1$ violates the bounded eigenvalue assumptions of the factor model and is ruled out. 
    

We apply these theoretical diagnostics to the FRED-MD and FRED-QD panels in Appendix~\ref{app:tail_diagnostics}, estimating the tail index $\alpha$ using Hill \citep{hill1975simple} and log-rank \citep{gabaix2011rank} methods at various split indices $k^*$. The finite-sample evidence is suggestive but inconclusive: at moderate splits ($k^* = 15$--$30$), tail-index estimates enter the range $1 < \alpha \leq 2$ consistent with case~(c), but the flatness ratio $T/R_{k^*}$ remains far from the asymptotic regime required for benign overfitting. In other words, with the available sample sizes ($N \sim 100$--$240$, $T \sim 260$--$800$), there is not enough data to reach the benign overfitting regime directly. This motivates the feature-expansion analysis in Section~\ref{sec:augmentation_andRKHS}, which achieves $N \gg T$ by construction and provides a direct implementation of the factor kernel without relying on asymptotic approximations.

\section{Feature Expansion and RKHS Regression}\label{sec:augmentation_andRKHS}

Section~\ref{sec:factor_bartlett} showed that, when the data-generating process follows a factor structure, benign-overfitting conditions are satisfied in high dimensions, but in standard macro panels $N$ and $T$ are of similar order, so the condition $N\gg T$ must be engineered rather than assumed. This section addresses the issue in two steps. First, Section~\ref{sec:feature_expansion} introduces a synthetic data augmentation procedure that artificially expands the feature dimension to achieve $N \gg T$, thereby enabling benign overfitting in finite samples. Second, Section~\ref{sec:rkhs} shows that as the number of synthetic replicates grows, this augmentation converges to a closed-form kernel estimator in a Reproducing Kernel Hilbert Space (RKHS). This limiting representation is valuable because it delivers the benefits of benign overfitting without requiring practitioners to generate or store synthetic data. More broadly, the kernel duality reveals that benign overfitting, when it works, succeeds because overparameterization implicitly constructs a well-behaved kernel, not because overparameterization is intrinsically desirable.

\subsection{Feature expansion via synthetic data}\label{sec:feature_expansion}

The procedure operates in two steps. First, we estimate the factor model parameters from the observed data. Second, we use the estimated model to generate $B$ synthetic datasets by drawing fresh idiosyncratic noise while holding the estimated factor path fixed, thereby expanding the cross-sectional dimension without altering the time-series structure.

\medskip

\noindent Recall from Section~\ref{sec:factor_bartlett} the factor model
\begin{equation}\label{eq:factor_model_sec5}
    \mathbf{x}_t = \boldsymbol{\Lambda}\mathbf{f}_t + \mathbf{e}_t, \qquad t = 1, \ldots, T,
\end{equation}
where $\mathbf{f}_t \in \R^k$ contains the $k$ common factors, $\boldsymbol{\Lambda} \in \R^{N \times k}$ is the loadings matrix, and $\mathbf{e}_t \in \R^N$ is the idiosyncratic error with $\E[\mathbf{e}_t \mathbf{e}_t'] = \boldsymbol{\Psi}$. We estimate $\hat{\boldsymbol{\Lambda}}$ and the factor path $\hat{\mathbf{f}}_1,\ldots,\hat{\mathbf{f}}_T$ by principal components, with the number of factors $k$ selected via the information criterion of \citet{bai2002determining}. Given the estimated factors, the idiosyncratic covariance is estimated from the residuals $\hat{\mathbf{e}}_t = \mathbf{x}_t - \hat{\boldsymbol{\Lambda}}\hat{\mathbf{f}}_t$:
\begin{equation}\label{eq:psi_hat}
    \hat{\psi}_i^2 \;=\; \frac{1}{T}\sum_{t=1}^{T}\hat{e}_{it}^2, \qquad i = 1,\ldots,N,
\end{equation}
giving $\hat{\boldsymbol{\Psi}} = \mathrm{diag}(\hat{\psi}_1^2,\ldots,\hat{\psi}_N^2)$. 

\medskip

\noindent Given $\hat{\boldsymbol{\Lambda}}$, $\hat{\boldsymbol{\Psi}}$, and the estimated factor path $\hat{\mathbf{f}}_1,\ldots,\hat{\mathbf{f}}_T$, we generate $B$ synthetic datasets by keeping the common component fixed and resampling only the idiosyncratic noise:
\[
    \mathbf{x}_t^{*(b)} = \hat{\boldsymbol{\Lambda}}\hat{\mathbf{f}}_t +
    \mathbf{e}_t^{*(b)}, \qquad
    \mathbf{e}_t^{*(b)} \sim \mathcal{N}(\mathbf{0}, \hat{\boldsymbol{\Psi}}),
\]
for $b = 1, \ldots, B$ and $t = 1, \ldots, T$. The common component $\hat{\boldsymbol{\Lambda}}\hat{\mathbf{f}}_t$ is identical across replications and carries all temporal dependence through the estimated factor path. Only the idiosyncratic noise is redrawn, independently across $t$ and $b$. We have also experimented with a bootstrap approach to resampling the errors, which allows for any correlation pattern across the residuals, and found the results to be very similar.
 
Each synthetic dataset $\mathbf{X}^{*(b)} = [\mathbf{x}_1^{*(b)}, \ldots, \mathbf{x}_T^{*(b)}]' \in \R^{T \times N}$ has the same dimensions as the original data matrix $\mathbf{X} = [\mathbf{x}_1, \ldots, \mathbf{x}_T]' \in \R^{T \times N}$. The augmented predictor matrix stacks the original and all $B$ synthetic datasets horizontally:
\begin{equation}\label{eq:synth_aug}
    \bZ = \Big[\mathbf{X} \;\Big|\; \mathbf{X}^{*(1)} \;\Big|\;
    \cdots \;\Big|\; \mathbf{X}^{*(B)}\Big]
    \in \R^{T \times (B+1)N}.
\end{equation}
This expands the feature dimension from $N$ to $(B+1)N$ while keeping the sample size fixed at $T$. For sufficiently large $B$, the ratio $(B+1)N/T$ can be made arbitrarily large, placing the regression problem firmly in the overparameterized regime where $N \gg T$.

\subsection{RKHS representation}\label{sec:rkhs}

A Reproducing Kernel Hilbert Space (RKHS) is a Hilbert space of functions in which evaluation at any point is a continuous linear functional. Concretely, an RKHS is defined by a positive definite kernel $k(\cdot, \cdot)$: for any set of observations $\mathbf{z}_1, \ldots, \mathbf{z}_T$, the kernel matrix $\bm{K}$ with entries $K_{ts} = k(\mathbf{z}_t, \mathbf{z}_s)$ encodes pairwise similarities, and the minimum-norm interpolator in the RKHS predicts via $\hat{y}_{\mathrm{new}} = \mathbf{k}'(\mathbf{z}_{\mathrm{new}}) \bm{K}^{-1} \mathbf{y}$, where $\mathbf{k}(\mathbf{z}_{\mathrm{new}})$ collects the kernel evaluations between the new point and each training point. The key insight of this subsection is that the minimum-norm interpolator applied to the augmented system of Section~\ref{sec:feature_expansion} is equivalent to kernel regression with a specific factor-structured kernel, and that this equivalence becomes exact as $B \to \infty$.

The most direct approach to exploiting the augmented matrix $\bZ$ in~\eqref{eq:synth_aug} is to estimate the regression model
\[
    y_t = \mathbf{z}_t'\boldsymbol{\beta} + \varepsilon_t, \qquad t = 1, \ldots, T,
\]
where $\mathbf{z}_t \in \R^{(B+1)N}$ is the $t$-th row of $\bZ$, stacking $\mathbf{x}_t$ with the $B$ synthetic replicates $\mathbf{x}_t^{*(1)}, \ldots, \mathbf{x}_t^{*(B)}$, using the minimum-norm interpolator:
\begin{equation}\label{eq:synth_estimator}
    \hat{\boldsymbol{\beta}} = \bZ'(\bZ\bZ')^{-1}\mathbf{y},
\end{equation}
which is the ridgeless estimator~\eqref{eq:ridgeless} applied to the $(B+1)N$-dimensional augmented system. However, this direct approach becomes computationally demanding as $B$ grows: $\hat{\boldsymbol{\beta}}$ lives in a space that expands with every new replicate. A key duality result shows that predictions from~\eqref{eq:synth_estimator} can be computed equivalently via kernel regression operating only in the $T$-dimensional observation space, without ever forming $\hat{\boldsymbol{\beta}}$ explicitly. This section establishes that equivalence and derives the limiting kernel as $B \to \infty$.

\begin{lemma}[Kernel duality of minimum-norm 
interpolation]\label{lem:kernel_duality}
    Let $\bZ \in \mathbb{R}^{T \times P}$ with
    $P > T$ and $\bZ\bZ'$ invertible. The
    minimum-norm interpolator
    $\hat{\boldsymbol{\beta}} = \bZ'(\bZ\bZ')^{-1}\mathbf{y}$ predicts
    a new observation $\mathbf{z}_{\textup{new}} \in
    \mathbb{R}^P$ as
    \[
        \hat{y}_{\textup{new}} \;=\;
        \mathbf{z}_{\textup{new}}'\hat{\boldsymbol{\beta}} \;=\;
        \mathbf{k}' (\bZ\bZ')^{-1} \mathbf{y},
    \]
    where $\mathbf{k} \in \mathbb{R}^T$ has entries
    $k_s = \mathbf{z}_{\textup{new}}'\mathbf{z}_s = \langle \mathbf{z}_{\textup{new}}, \mathbf{z}_s \rangle$, the inner product between the new observation and the $s$-th observation in the augmented feature space. That is, the
    prediction depends on $\bZ$ only through the
    Gram matrix $\bZ\bZ'$ and the kernel
    evaluations $k_s$ between the new observation and
    the training points.
\end{lemma}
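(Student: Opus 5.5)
The argument is a direct substitution followed by a regrouping of the matrix products; no earlier result is needed beyond the closed form of the ridgeless estimator in \eqref{eq:ridgeless}.

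\emph{Step 1 (the estimator is well defined and interpolates).} Since $\bZ\bZ'$ is assumed invertible, $\bZ$ has full row rank $T$. Then $\hat{\boldsymbol{\beta}} = \bZ'(\bZ\bZ')^{-1}\mathbf{y}$ satisfies $\bZ\hat{\boldsymbol{\beta}} = \mathbf{y}$. It lies in the row space of $\bZ$, so it is the minimum-norm solution, matching the $N>T$ branch of \eqref{eq:ridgeless} with $P$ in place of $N$.

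\emph{Step 2 (regroup the prediction).} I would write
\[
\hat{y}_{\textup{new}} = \mathbf{z}_{\textup{new}}'\hat{\boldsymbol{\beta}} = \mathbf{z}_{\textup{new}}'\bZ'(\bZ\bZ')^{-1}\mathbf{y} = (\bZ\mathbf{z}_{\textup{new}})'(\bZ\bZ')^{-1}\mathbf{y},
\]
using associativity and $(\bZ\mathbf{z}_{\textup{new}})' = \mathbf{z}_{\textup{new}}'\bZ'$.

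\emph{Step 3 (identify the kernel vector).} The $s$-th entry of $\bZ\mathbf{z}_{\textup{new}}\in\R^T$ is $\mathbf{z}_s'\mathbf{z}_{\textup{new}} = \langle \mathbf{z}_{\textup{new}},\mathbf{z}_s\rangle = k_s$. Hence $\bZ\mathbf{z}_{\textup{new}} = \mathbf{k}$, and the displayed formula follows.

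\emph{Step 4 (Gram-matrix interpretation).} The $(t,s)$ entry of $\bZ\bZ'$ is $\mathbf{z}_t'\mathbf{z}_s$. Both $\mathbf{k}$ and $\bZ\bZ'$ are therefore built only from inner products in the augmented feature space, which gives the final sentence of the lemma.

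There is no genuine obstacle. The only point needing care is that invertibility of $\bZ\bZ'$ is what makes the closed form valid; the condition $P>T$ is necessary for that invertibility but does not by itself guarantee it.
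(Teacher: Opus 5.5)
Your proposal is correct and follows the paper's proof essentially line for line: you verify that $\bZ'(\bZ\bZ')^{-1}\mathbf{y}$ interpolates with minimal norm, then regroup $\mathbf{z}_{\textup{new}}'\bZ'$ as $(\bZ\mathbf{z}_{\textup{new}})' = \mathbf{k}'$. One small correction to your closing remark: invertibility of $\bZ\bZ'$ requires only $P \geq T$ (a square nonsingular $\bZ$ also works), so $P>T$ is not strictly necessary for it.
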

\begin{proof}
    See Appendix~\ref{proof:kernel_duality}.
\end{proof}

Lemma~\ref{lem:kernel_duality} establishes the kernel duality: predictions can be computed without ever forming the $(B+1)N$-dimensional coefficient vector $\hat{\boldsymbol{\beta}}$, depending only on the $T \times T$ Gram matrix $\bZ\bZ'$ and kernel evaluations between the new observation and
the augmented data from the whole estimation sample. For the synthetic augmentation in~\eqref{eq:synth_aug}, the Gram matrix entries are
\[
    (\bZ\bZ')_{ts} = \mathbf{x}_t^\prime\mathbf{x}_s + \sum_{b=1}^B \mathbf{x}_t^{*(b)\prime} \mathbf{x}_s^{*(b)}.
\]
As $B \to \infty$, the sum of synthetic inner products converges to an expectation over the idiosyncratic noise distribution, yielding a closed-form kernel that depends only on the estimated factor structure. Proposition~\ref{thm:synth_kernel} characterizes this limiting kernel.

\begin{proposition}[Synthetic augmentation implements 
factor-structured kernel ridge regression]
\label{thm:synth_kernel}
    As $B \to \infty$, the minimum-norm interpolator~\eqref{eq:synth_estimator}
    on $\bZ$ converges to kernel ridge regression:
    \[
        \hat{y} \;\to\; \bm{K}_{\textup{synth}}
        (\bm{K}_{\textup{synth}} + \lambda \bm{I}_T)^{-1}\mathbf{y},
    \]
    where the kernel matrix has entries
    \begin{equation}\label{eq:synth_kernel_entries}
        K_{\textup{synth}}(t,s) \;=\;
        \hat{\mathbf{f}}_t'\hat{\boldsymbol{\Lambda}}'\hat{\boldsymbol{\Lambda}}
        \hat{\mathbf{f}}_s,
    \end{equation}
    and the ridge parameter is
    $\lambda = \textup{tr}(\hat{\boldsymbol{\Psi}})$. The temporal
    structure of $\bm{K}_{\textup{synth}}$ is inherited
    entirely from the estimated factor path
    $\hat{\mathbf{f}}_1, \ldots, \hat{\mathbf{f}}_T$, without
    assuming a parametric law of motion for the
    factors.
\end{proposition}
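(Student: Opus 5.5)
The plan is to combine the kernel duality of Lemma~\ref{lem:kernel_duality} with a law of large numbers over the synthetic replicates, after a normalization by $B$.

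\textbf{Scale invariance.} By Lemma~\ref{lem:kernel_duality}, any prediction from~\eqref{eq:synth_estimator} has the form $\mathbf{k}'(\bZ\bZ')^{-1}\mathbf{y}$. This expression is unchanged if both $\mathbf{k}$ and $\bZ\bZ'$ are divided by $B$, since the factors of $B$ cancel. So it suffices to find the almost-sure limits of $B^{-1}\bZ\bZ'$ and $B^{-1}\mathbf{k}$, conditional on the estimated quantities $\hat{\boldsymbol{\Lambda}}$, $\hat{\mathbf{f}}_{1:T}$ and $\hat{\boldsymbol{\Psi}}$.

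\textbf{Step 1: the Gram matrix.} Write $\mathbf{c}_t := \hat{\boldsymbol{\Lambda}}\hat{\mathbf{f}}_t$ for the estimated common component. For each replicate $b$,
\[
\mathbf{x}_t^{*(b)\prime}\mathbf{x}_s^{*(b)} = \mathbf{c}_t'\mathbf{c}_s + \mathbf{c}_t'\mathbf{e}_s^{*(b)} + \mathbf{e}_t^{*(b)\prime}\mathbf{c}_s + \mathbf{e}_t^{*(b)\prime}\mathbf{e}_s^{*(b)}.
\]
These terms are i.i.d.\ across $b$ with finite mean, because the noise is Gaussian. Their expectation is $\mathbf{c}_t'\mathbf{c}_s + \mathbf{1}\{t=s\}\,\tr(\hat{\boldsymbol{\Psi}})$: the cross terms have mean zero, and $\mathbf{e}_t^{*(b)}$ and $\mathbf{e}_s^{*(b)}$ are independent for $t\neq s$. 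The original-data term $\mathbf{x}_t'\mathbf{x}_s$ is fixed, so $B^{-1}\mathbf{x}_t'\mathbf{x}_s \to 0$. Applying the strong law of large numbers entrywise to the $T\times T$ matrix gives
\[
B^{-1}\bZ\bZ' \to \bm{K}_{\rm synth} + \tr(\hat{\boldsymbol{\Psi}})\,\bm{I}_T \quad \text{a.s.},
\]
with $K_{\rm synth}(t,s)=\hat{\mathbf{f}}_t'\hat{\boldsymbol{\Lambda}}'\hat{\boldsymbol{\Lambda}}\hat{\mathbf{f}}_s$ as in~\eqref{eq:synth_kernel_entries}.

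\textbf{Step 2: the kernel vector.} Evaluate the predictor at a point whose synthetic blocks carry fresh idiosyncratic noise, independent of the training draws. This is the case for a new date, or for the smoothed fit at training date $t$. The same decomposition then yields
\[
B^{-1}k_s \to \mathbf{c}_{\rm new}'\mathbf{c}_s \quad \text{a.s.}
\]
There is no diagonal $\tr(\hat{\boldsymbol{\Psi}})$ term, because the new noise is independent of $\mathbf{e}_s^{*(b)}$. Stacking over training dates evaluated at $\mathbf{c}_t$ produces the rows of $\bm{K}_{\rm synth}$.

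\textbf{Step 3: passing to the limit.} The limit matrix is positive definite, with smallest eigenvalue at least $\tr(\hat{\boldsymbol{\Psi}})>0$. Matrix inversion is continuous on positive definite matrices, so the continuous mapping theorem gives
\[
\hat{y} \to \bm{K}_{\rm synth}(\bm{K}_{\rm synth}+\lambda\bm{I}_T)^{-1}\mathbf{y}, \qquad \lambda=\tr(\hat{\boldsymbol{\Psi}}).
\]
The ridge parameter arises solely from the diagonal idiosyncratic term in Step 1. Because every limit depends on the data only through $\hat{\mathbf{f}}_1,\ldots,\hat{\mathbf{f}}_T$ and the estimated loadings, the temporal structure of $\bm{K}_{\rm synth}$ is inherited from the estimated factor path, with no law of motion for the factors assumed.

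\textbf{Main obstacle.} The delicate step is the precise meaning of $\hat{y}$. In sample the interpolator reproduces $\mathbf{y}$ exactly for every $B$, so the kernel ridge form can only arise as the prediction at a point whose idiosyncratic draws are independent of the training replicates. I would state this convention explicitly, namely that $\hat{y}$ is the fit evaluated at the common component with fresh noise, and check that the diagonal term appears in the Gram limit but not in $\mathbf{k}$. A secondary technical point is that the convergence is conditional on the first-stage estimates. This is harmless, since the synthetic noise is drawn independently given $\hat{\boldsymbol{\Psi}}$ and only finite moments are needed for the strong law.
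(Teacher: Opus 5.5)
Your proposal is correct and follows essentially the same route as the paper: an entrywise law of large numbers for $B^{-1}\bZ\bZ'$ (Lemma~\ref{lem:synth_gram}), then kernel duality via Lemma~\ref{lem:kernel_duality}, with $B$ cancelling between $\mathbf{k}$ and $\bZ\bZ'$. Two of your steps make precise what the paper leaves implicit in ``$\mathbf{k}\approx B\,\mathbf{k}_{\rm synth}$ by the same concentration argument'': your convention that the evaluation point carries fresh or zero idiosyncratic noise (needed because the in-sample interpolator returns $\mathbf{y}$ exactly, and it keeps the $\tr(\hat{\boldsymbol{\Psi}})$ diagonal in the Gram limit but out of $\mathbf{k}$), and your continuous-mapping step using that the limit has smallest eigenvalue at least $\tr(\hat{\boldsymbol{\Psi}})$.
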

\begin{proof}
    See Appendix~\ref{proof:synth_kernel}.
\end{proof}

Proposition~\ref{thm:synth_kernel} reveals that benign overfitting in factor models works through the implicit construction of a factor-structured kernel. The factor structure ensures the kernel preserves the low-dimensional signal (the common component) while the tail flatness condition ensures noise is dispersed across many effective directions. The split index $k^*$ identifies the boundary between the signal subspace (which the kernel must preserve) and the noise-dispersing tail (which the kernel must spread evenly), while the tail flatness measure $R_{k^*}$ quantifies how effectively the kernel disperses noise. This explains why overparameterization succeeds when it does: not because adding parameters is intrinsically beneficial, but because it constructs a kernel that separates signal from noise in an appropriate way. 

The kernel~\eqref{eq:synth_kernel_entries} measures temporal proximity through the estimated factor structure rather than calendar time: periods $t$ and $s$ receive high weight when their factor realizations $\hat{\mathbf{f}}_t$ and $\hat{\mathbf{f}}_s$ are close in the metric induced by $\hat{\boldsymbol{\Lambda}}'\hat{\boldsymbol{\Lambda}}$. The ridge parameter $\lambda = \textup{tr}(\hat{\boldsymbol{\Psi}})$ provides automatic regularization scaled by idiosyncratic variance. This is different from the kernel estimators typically used in the econometrics literature, as those are local methods which borrow from adjacent dates, while this kernel estimator is a global method which borrows from economically similar dates, regardless of calendar distance. This property is valuable in regime-dependent environments where comparable macroeconomic conditions reappear non-consecutively.

Finally, since the augmentation strategy converges to kernel ridge regression with a closed-form kernel, practitioners can implement the kernel estimator directly without generating synthetic data. This avoids the computational cost and Monte Carlo variability of finite-$B$ augmentation. A detailed RKHS formulation and the connection to the forecasting implementation are provided in Appendix~\ref{app:rkhs_formulation}.

\section{Forecasting}\label{sec:empirical}

We now evaluate the factor kernel of Proposition~\ref{thm:synth_kernel} in a comprehensive rolling-window out-of-sample forecasting exercise covering all available series in the FRED-MD dataset. Unlike the motivating example of Section~\ref{sec:motivation}, which used a single holdout period to trace out the full double-descent curve, this will constitute a proper pseudo real-time forecasting evaluation: every forecast will be computed using only information available at the time it is made, the window will roll forward one observation at a time, and performance will be measured by the mean squared forecast error (MSFE) accumulated over the full evaluation sample.

\subsection{Data and evaluation design}

We use the February 2026 vintage of the FRED-MD monthly dataset \citep{mccracken2016fred}, which contains $N_{\mathrm{raw}} = 127$ monthly macroeconomic and financial series. All series are transformed using the \citet{mccracken2016fred} transformation codes (log differences, second differences, etc., as indicated by \texttt{tcode}). We end the sample on October 2025 to avoid real-time data revision issues that are most severe at the end of the sample.\footnote{Series with more than 20\% missing observations are excluded; remaining missing predictor values are imputed with column medians computed in the training window. After cleaning, a typical target uses about $N \approx 120$ predictors.}

\medskip

\noindent For each of the $N_{\mathrm{raw}}$ series and each forecast horizon $h \in \{1, 3, 6, 12\}$ months, we compute direct $h$-step-ahead forecasts. The general forecasting model is
\begin{equation}\label{eq:general_forecast}
    y_{t+h} = \phi\, y_{t} + f(\mathbf{x}_{t}) + \varepsilon_{t+h},
\end{equation}
where $y_t$ is the target variable available at forecast origin $t$, and $\mathbf{x}_t$ is the predictor vector observed at the same origin, obtained by pulling together all the remaining series in the panel. The AR(1) term $\phi\, y_t$ is common to all models; they differ only in the specification of $f(\cdot)$. Because the forecast is direct (we are forecasting $y_{t+h}$), the model is estimated separately for each horizon $h$. We use a rolling window estimation with $W = 120$ months. At forecast origin $t$, the model is estimated on observations $s=t-W-h+1,\ldots,t-h$. With data spanning January 1959 to October 2025, the first forecast origin is January 1969 for $h=1$ (forecasting February 1969), providing approximately 56 years of out-of-sample evaluation. The number of out-of-sample forecasts is $P=T-h-W$.

At each forecast origin, variables are standardized using moments from the current training window only: predictor means and standard deviations for $\mathbf{x}_s$, and separate mean and standard deviation for the dependent variable and its lag. Standardization is recomputed at each origin. Squared forecast errors are accumulated in standardized units, so MSFE is dimensionless and comparable across targets.

For each model $M$ and each target--horizon pair $(y, h)$, the out-of-sample MSFE is
\begin{equation}\label{eq:msfe_oos}
    \widehat{\mathrm{MSFE}}_{y,h}^{(M)} = \frac{1}{P} \sum_{p=1}^{P} \bigl(\hat{y}^{(M)}_{t_p+h\mid t_p} - y_{t_p+h}\bigr)^2,
\end{equation}
computed in standardized units. The relative performance of a model $M$ against a benchmark $B$ is summarized by the RMSE ratio
\begin{equation}\label{eq:imp_ratio}
    \mathrm{Imp}_{y,h} = \frac{\widehat{\mathrm{RMSE}}_{y,h}^{(M)}}{\widehat{\mathrm{RMSE}}_{y,h}^{(B)}} = \sqrt{\frac{\widehat{\mathrm{MSFE}}_{y,h}^{(M)}}{\widehat{\mathrm{MSFE}}_{y,h}^{(B)}}},
\end{equation}
where values below one indicate that model $M$ improves over the benchmark $B$. In our evaluation, $M = \mathrm{FK}$ and $B = \mathrm{FM}$.

\subsection{Models}\label{sec:emp_methods}

We compare two models. The first one, which will serve as our benchmark, is a factor model estimated via OLS (FM). This is a special case of~\eqref{eq:general_forecast} with $f(\mathbf{x}_t) = \boldsymbol{\gamma}_h'\hat{\mathbf{f}}_t$, a linear function of the factors extracted from the panel of predictors. This is the diffusion-index forecast of \citet{stock2002forecasting}, augmented with an AR(1) term. For each origin $t$ and horizon $h$, we estimate by OLS
\[
    y_{s+h} = \phi_h y_s + \boldsymbol{\gamma}_h'\hat{\mathbf{f}}_s + u_{s+h}, \qquad s=t-W-h+1,\ldots,t-h,
\]
where $\hat{\mathbf{f}}_s$ are the factors extracted from the rolling window using principal components. The forecast is
\[
    \hat{y}^{\mathrm{FM}}_{t+h\mid t} = \hat\phi_h y_t + \hat{\boldsymbol{\gamma}}_h'\hat{\mathbf{f}}_t.
\]
At each forecast origin, the number of factors is selected with the Bai--Ng $\mathrm{IC}_{p2}$ criterion \citep{bai2002determining} from the standardized predictor panel in the rolling window, with $k_{\max}=20$. This factor-selection step determines the number of factors used in both models.

\medskip

\noindent Our second model is a semiparametric factor-based kernel regression estimated via GLS (FK). This is also a special case of~\eqref{eq:general_forecast} with $f(\mathbf{x}_t)$ replaced by a nonparametric kernel component, the closed-form limit of the synthetic augmentation strategy from Proposition~\ref{thm:synth_kernel}. As derived below, this yields $f(\mathbf{x}_t) = \hat{\mathbf{k}}_{t,h}'\hat{\boldsymbol{\alpha}}_h$, where $\hat{\mathbf{k}}_{t,h}$ collects the kernel evaluations between origin $t$ and the training-window dates; the RKHS derivation is in Appendix~\ref{app:rkhs_formulation}.

For a fixed target, horizon $h$, and forecast origin $t$,  define
\[
    \mathbf{y}_h = (y_{t-W+1+h},\ldots,y_{t+h})', \qquad \mathbf{y}_{-1} = (y_{t-W+1},\ldots,y_t)'.
\]
Let $\hat{\mathbf{K}}_h$ be the $W\times W$ estimated factor kernel matrix over window dates, with entries $\hat{\mathbf{K}}_h(s,r)=\hat{\mathbf{f}}_s'\hat{\boldsymbol{\Lambda}}'\hat{\boldsymbol{\Lambda}}\hat{\mathbf{f}}_r$, and let
\[
    \hat{\mathbf{K}}_h^{(\lambda)} = \hat{\mathbf{K}}_h + \lambda \mathbf{I}_W,
\]
where $\lambda = \mathrm{tr}(\hat{\boldsymbol{\Psi}})$ is the ridge parameter determined by the trace of the estimated idiosyncratic covariance matrix, as established in Proposition~\ref{thm:synth_kernel}. The semiparametric factor kernel model with GLS AR(1) is
\[
    \mathbf{y}_h = \phi_h \mathbf{y}_{-1} + \hat{\mathbf{K}}_h\boldsymbol{\alpha}_h + \mathbf{u}_h.
\]
Using the Moore--Penrose inverse of $\hat{\mathbf{K}}_h^{(\lambda)}$, the GLS estimates are
\begin{equation}\label{eq:app_gls_beta}
    \hat{\phi}_h = \big(\mathbf{y}_{-1}'(\hat{\mathbf{K}}_h^{(\lambda)})^+\mathbf{y}_{-1}\big)^{-1}\mathbf{y}_{-1}'(\hat{\mathbf{K}}_h^{(\lambda)})^+\mathbf{y}_h,
\end{equation}
\begin{equation}\label{eq:app_gls_alpha}
    \hat{\boldsymbol{\alpha}}_h = (\hat{\mathbf{K}}_h^{(\lambda)})^+\big(\mathbf{y}_h - \hat\phi_h \mathbf{y}_{-1}\big).
\end{equation}
For forecast origin $t$, define the kernel vector to the window dates,
\[
    \hat{\mathbf{k}}_{t,h} = \big(\hat{\mathbf{K}}_h(t,t-W+1),\ldots,\hat{\mathbf{K}}_h(t,t)\big)'.
\]
Then the direct forecast is
\begin{equation}\label{eq:app_gls_forecast}
    \hat{y}^{\mathrm{FK}}_{t+h\mid t} = \hat\phi_h y_t + \hat{\mathbf{k}}_{t,h}'\hat{\boldsymbol{\alpha}}_h.
\end{equation}
Larger idiosyncratic dispersion (larger $\lambda$) implies stronger regularization automatically. The key difference from the benchmark is that the factor kernel measures temporal proximity through the estimated factor structure: periods $t$ and $s$ receive high weight when their factor realizations $\hat{\mathbf{f}}_t$ and $\hat{\mathbf{f}}_s$ are close, regardless of calendar distance. This state-dependent pooling is valuable in regime-dependent environments where comparable macroeconomic conditions reappear non-consecutively.

\subsection{Results}\label{sec:results}
Figure~\ref{fig:oos_hist} shows the cross-sectional distribution of $\mathrm{Imp}_{y,h}$ across all FRED-MD series, separately for each horizon. A distribution skewed to the left of one indicates that the factor kernel systematically outperforms the FM benchmark in the cross-section. The vertical dashed line marks the break-even point.

\begin{figure}[h!]
    \centering
    \includegraphics[width=\linewidth]{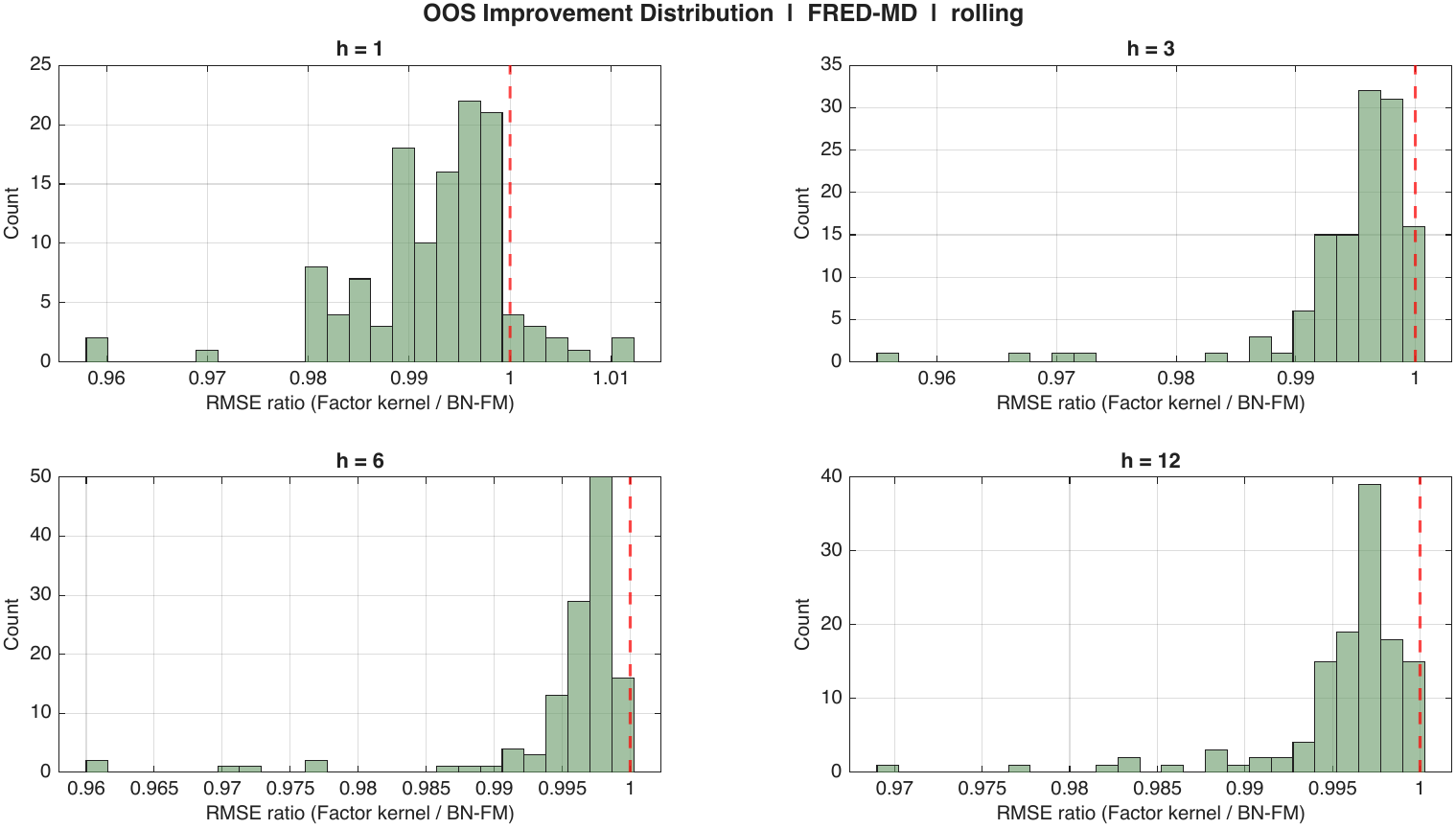}
    \caption{Distribution of the relative improvement of the factor kernel over the
             FM benchmark across all FRED-MD variables, by forecast horizon
             ($h = 1, 3, 6, 12$ months; one panel per horizon).
             The horizontal axis shows $\mathrm{RMSE}_{\mathrm{FK}} /
             \mathrm{RMSE}_{\mathrm{FM}}$; values to the left of the dashed vertical
             line (red) indicate that the factor kernel achieves a lower MSFE than the
             FM benchmark.
             MSFE is computed over the out-of-sample evaluation period using a rolling
             estimation window of $W = 120$ months,
             Bai--Ng factor selection, and rolling standardization.}
    \label{fig:oos_hist}
\end{figure}

\paragraph{Does the factor kernel consistently beat the FM benchmark?}
The FM model is a strong baseline: it uses the same factors and the same AR(1) term as FK, but combines them through linear OLS rather than kernel-weighted GLS. Any FK advantage therefore comes from the kernel metric and from state-dependent pooling (dates with similar factor realizations receive more weight), as discussed in Section~\ref{sec:rkhs} and formalized in equations~\eqref{eq:app_gls_beta}--\eqref{eq:app_gls_forecast}.

Figure~\ref{fig:oos_hist} shows that, across all FRED-MD variables, the distribution of $\mathrm{Imp}_{y,h}$ is centered to the left of one at every horizon considered: the factor kernel delivers a lower MSFE than the FM benchmark for 92\% of targets at $h=1$ and 99\% at $h=6$ and $h=12$. The mean RMSE ratio is 0.99 at $h=1$, indicating a 1\% mean improvement. While the magnitude of improvement is modest, it is highly consistent across targets and becomes increasingly pervasive at longer horizons. Statistical significance testing using the Diebold--Mariano test (two-sided, 5\% level) reveals that 34\% of FK wins at $h=1$ are statistically significant, rising to 54\% at $h=6$ and 50\% at $h=12$. The increase in both win rate and statistical significance at longer horizons indicates that FK's advantage becomes both more reliable and more pervasive as the forecast horizon lengthens.

\paragraph{Which variables benefit most from the factor kernel?}
To interpret and understand the heterogeneity in the performance of the factor kernel, we organize the FRED-MD variables in terms of their \textit{persistence} (lag-1 autocorrelation of the standardized variable in the second half of the sample) and map this against relative out-of-sample performance. Figure~\ref{fig:oos_ac1} shows a clear negative relationship between persistence and RMSE ratio, with the correlation strengthening from $-0.34$ at $h=1$ to $-0.54$ at $h=6$. Highly persistent variables benefit substantially more from the factor kernel's state-dependent pooling: the kernel identifies economically similar states across time, which is particularly valuable when the target exhibits strong temporal dependence. FK's advantage becomes increasingly pervasive (winning for nearly all targets) and statistically robust at longer horizons, even as the magnitude of improvement becomes more modest and uniform across targets. Persistence maintains its role as the primary predictor of FK advantage even at $h=12$.

\begin{figure}[t!]
    \centering
    \includegraphics[width=\linewidth]{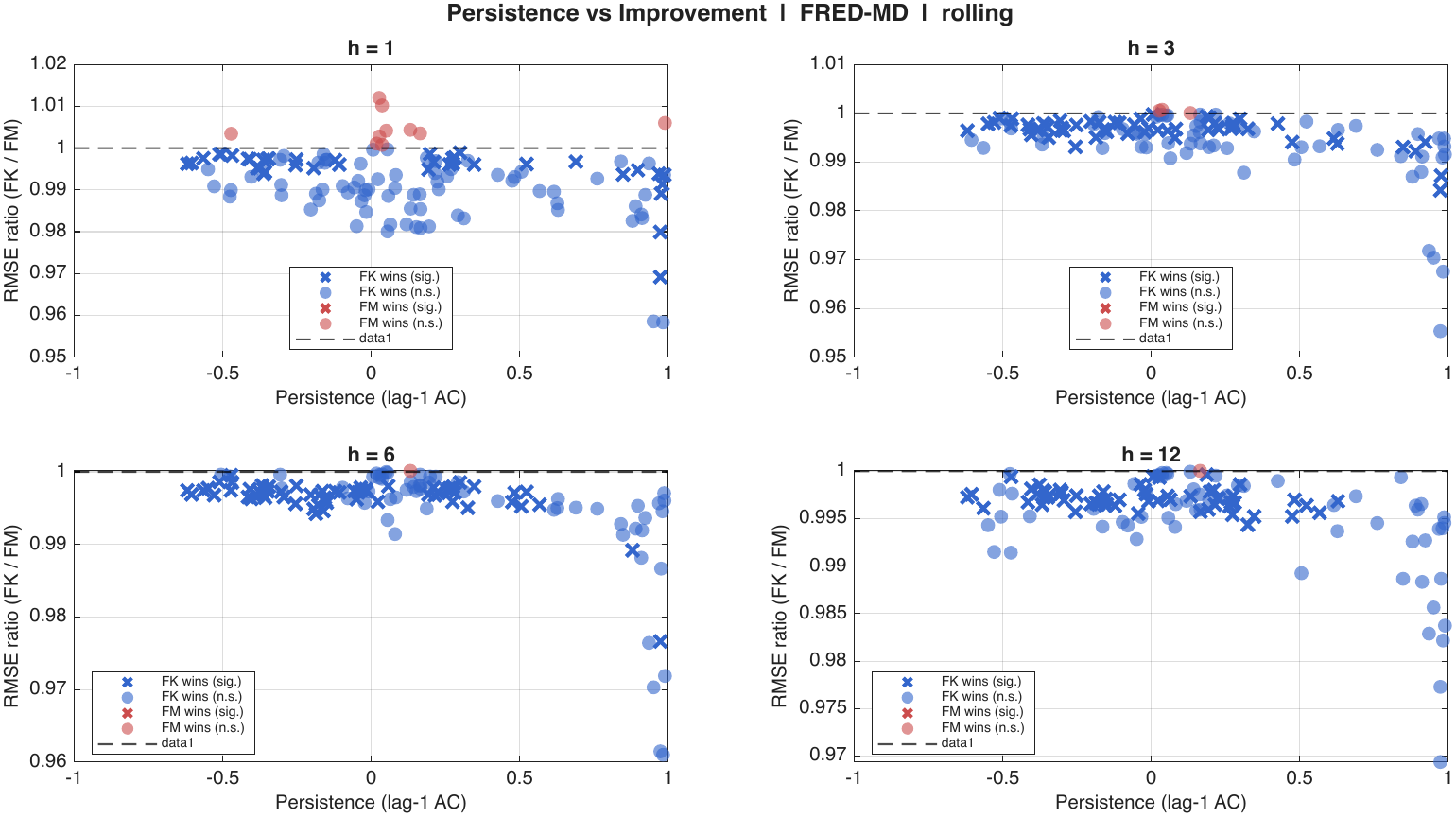}
    \caption{Variable persistence vs.\ relative improvement of the factor kernel
             over the FM benchmark, by forecast horizon ($h = 1, 3, 6, 12$ months).
             Each point represents one FRED-MD variable.
             The horizontal axis shows the lag-1 autocorrelation of the (standardized)
             target series, computed on the second half of the available sample.
             The vertical axis shows $\mathrm{RMSE}_{\mathrm{FK}} /
             \mathrm{RMSE}_{\mathrm{FM}}$.
             Blue markers: factor kernel wins; red markers: FM wins.
             Crosses indicate statistical significance at the 5\% level by the Diebold--Mariano test.
             The dashed horizontal line at 1 marks the break-even point.
             Variables are standardized using rolling training-window moments;
             MSFE is accumulated over the rolling evaluation window.}
    \label{fig:oos_ac1}
\end{figure}

\subsection{Quarterly data results}\label{sec:results_qd}

In this section, we repeat the forecasting evaluation exercise above using the FRED-QD quarterly dataset \citep{mccracken2016fred}. The quarterly data provide a complementary perspective on the factor kernel's performance in a setting where the cross-sectional dimension systematically exceeds the time-series dimension, placing the estimation problem squarely in the interpolation regime throughout the evaluation period.

\medskip 

\noindent The February 2026 vintage of FRED-QD contains $N_{\mathrm{raw}} = 248$ quarterly macroeconomic and financial series. After excluding series with more than 20\% missing observations and imputing remaining missing values with training-window medians, we are left with approximately $N \approx 240$ predictors. The sample spans 1960:Q1 to 2025:Q3 and the rolling window length is set to $W = 40$ quarters (10 years). Forecast horizons are $h \in \{1, 2, 4, 8\}$ quarters. The first forecast origin is 1969:Q1, providing approximately 56 years of out-of-sample evaluation. Because $N \approx 240 \gg W = 40$, every rolling window lies in the overparameterized regime $N > T$, so the ridgeless estimator fits the training data exactly at each forecast origin. This contrasts with the monthly data, where $N \approx 120 < W = 120$ in most windows, placing the estimation in the underparameterized regime.

The models, estimation procedures, and performance metrics are identical to those described in Sections~\ref{sec:emp_methods} and \ref{sec:results}. We rely on the Bai--Ng $\mathrm{IC}_{p2}$ criterion to select the number of factors at each origin with $k_{\max} = 20$. The factor kernel is constructed from the estimated factors and loadings as in Proposition~\ref{thm:synth_kernel}, and the ridge parameter $\lambda = \mathrm{tr}(\hat{\boldsymbol{\Psi}})$ is determined by the idiosyncratic variance. The relative performance measure $\mathrm{Imp}_{y,h} = \mathrm{RMSE}_{\mathrm{FK}} / \mathrm{RMSE}_{\mathrm{FM}}$ is computed for each target and horizon.

Figure~\ref{fig:oos_hist_QD} displays the cross-sectional distribution of $\mathrm{Imp}_{y,h}$ across all FRED-QD series, separately for each horizon. As in the monthly case, a distribution concentrated to the left of one indicates systematic outperformance by the factor kernel. The results show that the factor kernel delivers lower MSFE than the FM benchmark for 96\% of targets at $h=1$, $h=2$, and $h=8$, and 99\% at $h=4$, with the mean RMSE ratio of 0.95 at $h=1$ indicating a 5\% mean improvement. The mean improvement is most pronounced at short horizons and attenuates to approximately 3\% at longer horizons, consistent with the well-documented decline in factor-model predictability as the forecast horizon lengthens. Statistical significance testing reveals that 9\% of FK wins at $h=1$ are statistically significant by the Diebold--Mariano test, rising to 30\% at $h=8$, indicating that the improvements become more statistically robust at longer horizons.

\begin{figure}[h!]
    \centering
    \includegraphics[width=\linewidth]{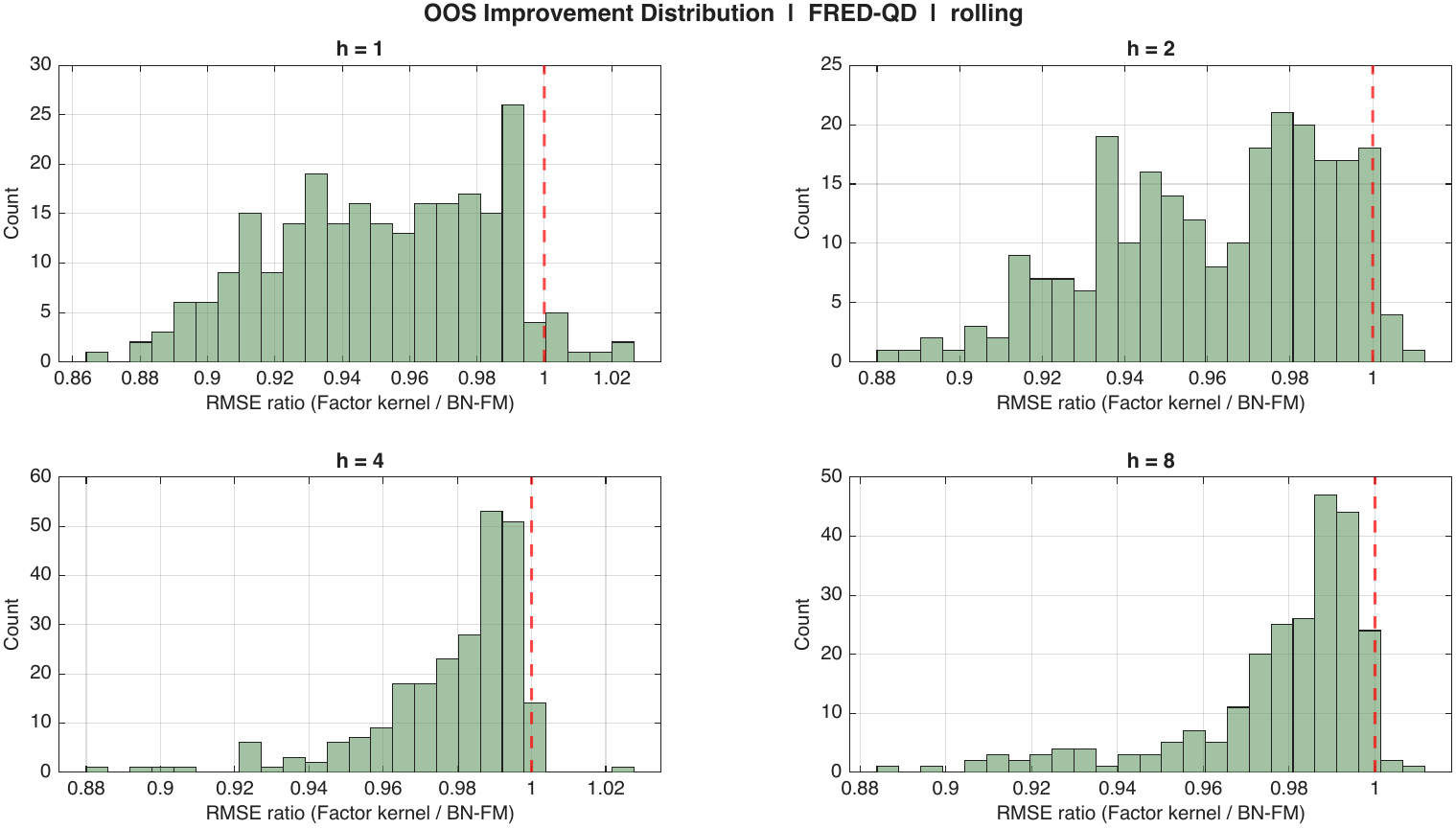}
    \caption{Distribution of the relative improvement of the factor kernel over the
             FM benchmark across all FRED-QD variables, by forecast horizon
             ($h = 1, 2, 4, 8$ quarters; one panel per horizon).
             The horizontal axis shows $\mathrm{RMSE}_{\mathrm{FK}} /
             \mathrm{RMSE}_{\mathrm{FM}}$; values to the left of the dashed vertical
             line (red) indicate that the factor kernel achieves a lower MSFE than the
             FM benchmark.
             MSFE is computed over the out-of-sample evaluation period using a rolling
             estimation window of $W = 40$ quarters,
             Bai--Ng factor selection, and rolling standardization. The quarterly data
             place the estimation in the interpolation regime ($N \gg W$) throughout.}
    \label{fig:oos_hist_QD}
\end{figure}

\begin{figure}[h!]
    \centering
    \includegraphics[width=\linewidth]{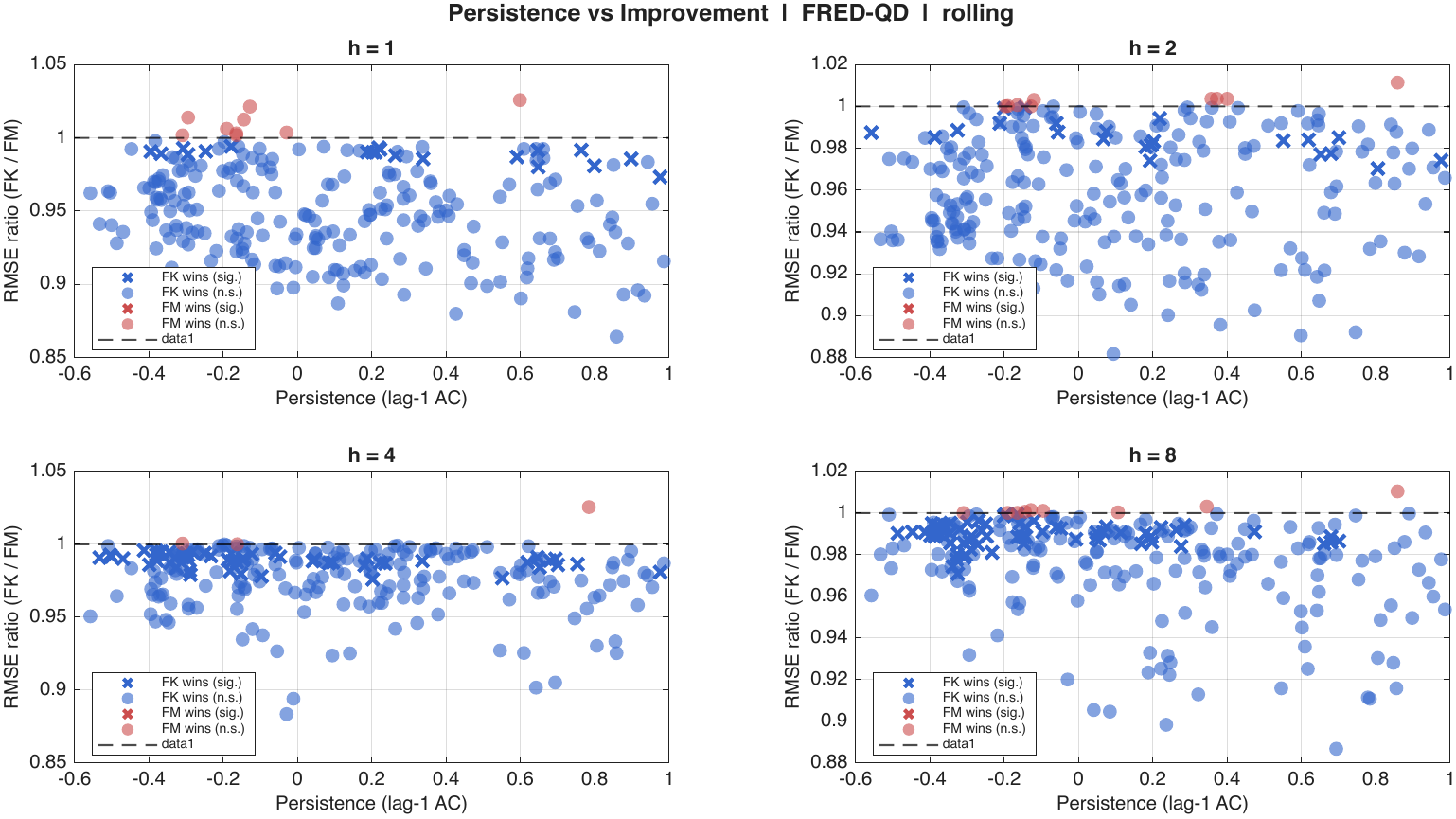}
    \caption{Target persistence vs.\ relative improvement of the factor kernel
             over the FM benchmark, by forecast horizon ($h = 1, 2, 4, 8$ quarters).
             Each point represents one FRED-QD target variable.
             The horizontal axis shows the lag-1 autocorrelation of the (standardized)
             target series, computed on the second half of the available sample.
             The vertical axis shows $\mathrm{RMSE}_{\mathrm{FK}} /
             \mathrm{RMSE}_{\mathrm{FM}}$.
             Blue markers: factor kernel wins; red markers: FM wins.
             Crosses indicate statistical significance at the 5\% level by the Diebold--Mariano test.
             The dashed horizontal line at 1 marks the break-even point.
             Variables are standardized using rolling training-window moments;
             MSFE is accumulated over the out-of-sample evaluation period using a rolling
             estimation window of $W = 40$ quarters.}
    \label{fig:oos_ac1_QD}
\end{figure}

\medskip 

\noindent Figure~\ref{fig:oos_ac1_QD} shows the relationship between persistence and relative out-of-sample performance. Target persistence emerges as the primary predictor of FK advantage, with the correlation between persistence and RMSE ratio strengthening from $-0.17$ at $h=1$ to $-0.32$ at $h=8$. Highly persistent targets benefit more from the factor kernel's state-dependent pooling, consistent with the pattern observed in the monthly data. The relationship is strongest at longer horizons, where the kernel's ability to identify economically similar states becomes most valuable.

\subsection{Subsample analysis}\label{sec:subsample}

To assess whether the factor kernel's advantage is stable across different economic regimes, we partition the 56-year evaluation period into three 20-year windows: 1969--1989 (pre-Great Moderation), 1989--2009 (Great Moderation and Financial Crisis), and 2009--2025 (post-Crisis era). For each target, horizon, and time period, we recompute the MSFE using only forecasts whose target dates fall within that period, and we test for statistical significance using the Diebold--Mariano test applied to the subsample of forecast errors.

Figure~\ref{fig:subsample_MD} shows the results for FRED-MD. Each panel corresponds to one forecast horizon, and each point represents one target variable. The horizontal axis reports target persistence (lag-1 autocorrelation), and the vertical axis reports the RMSE ratio $\mathrm{RMSE}_{\mathrm{FK}} / \mathrm{RMSE}_{\mathrm{FM}}$ computed over the specified time period. Points are color-coded by period: orange for 1969--1989, blue for 1989--2009, and green for 2009--2025. Filled circles indicate that the difference is not statistically significant at the 5\% level; crosses indicate statistical significance.

\begin{figure}[h!]
	\centering
	\includegraphics[width=\linewidth]{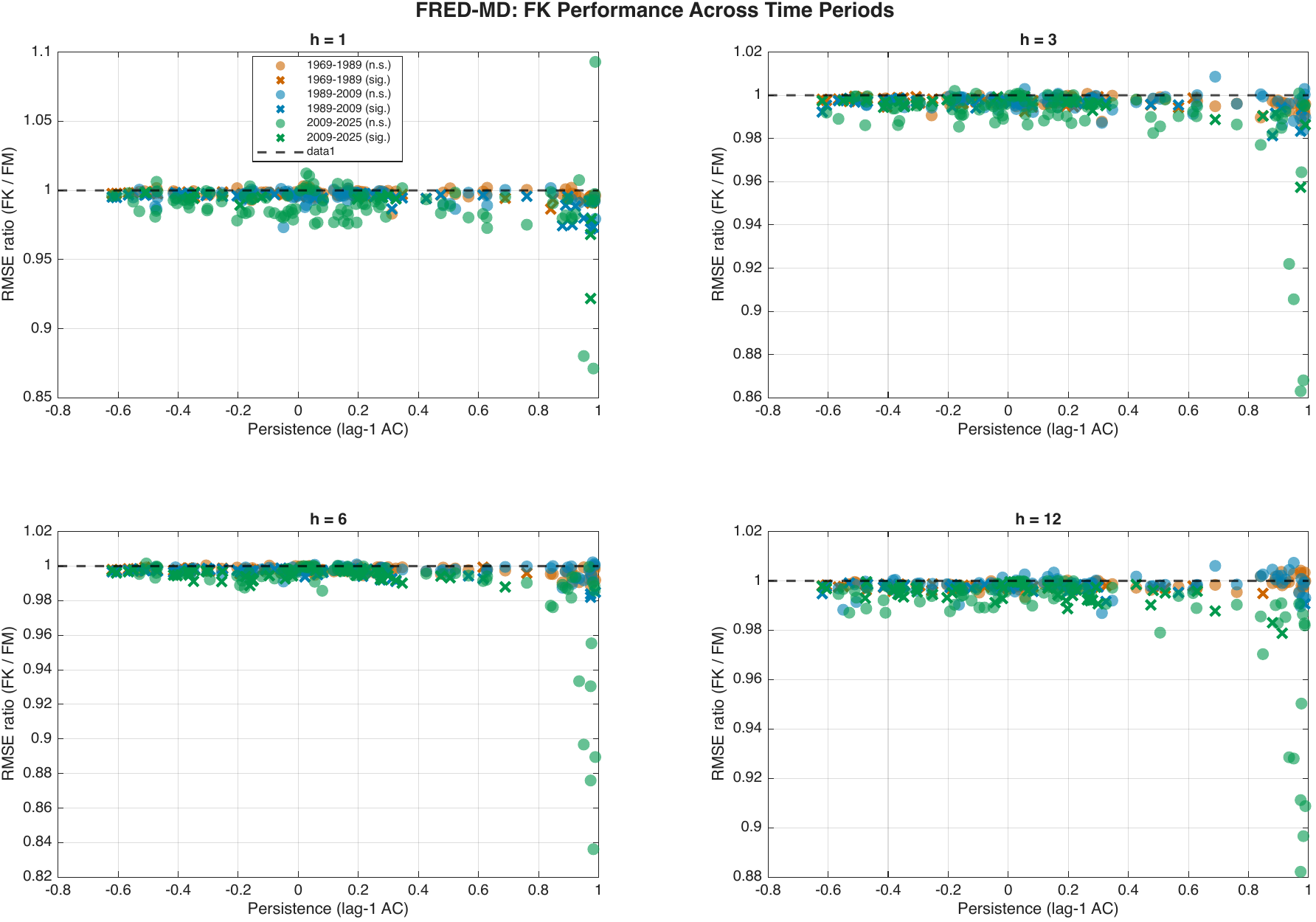}
	\caption{Subsample analysis for FRED-MD: Persistence vs.\ relative improvement
		of the factor kernel over the FM benchmark, by forecast horizon and time period.
		Each point represents one target variable evaluated over a specific 20-year window.
		Orange: 1969--1989 (pre-Great Moderation); Blue: 1989--2009 (Great Moderation
		and Financial Crisis); Green: 2009--2025 (post-Crisis era).
		Filled circles: not statistically significant at 5\% level (DM test);
		Crosses: statistically significant.
		The dashed horizontal line at 1 marks the break-even point.}
	\label{fig:subsample_MD}
\end{figure}

The results reveal that the factor kernel's advantage is broadly stable across time periods, with the negative relationship between persistence and RMSE ratio evident in all three subsamples. At $h=1$, the FK win rates are 91\%, 94\%, and 89\% for the three periods respectively, indicating consistent performance over time. The mean improvements are 0.3\%, 0.5\%, and 1.1\%, suggesting that the magnitude of FK's advantage has grown in the most recent period. Statistical significance is lower in recent periods, with 12\% of wins significant in 2009--2025 compared to 32\% in 1969--1989, reflecting the shorter sample period and lower volatility in the post-Crisis era.

Figure~\ref{fig:subsample_QD} presents the corresponding results for FRED-QD. The quarterly data, which place the estimation in the interpolation regime throughout ($N \gg W$), show similar patterns. The FK win rates are 87\%, 85\%, and 95\% for the three periods at $h=1$, with mean improvements of 0.7\%, 1.2\%, and 6.7\%. The persistence-based predictability of FK advantage remains strong across all periods, with correlations ranging from $-0.17$ to $-0.32$. Notably, the mean improvement is substantially larger in the most recent period (2009--2025), suggesting that the factor kernel's state-dependent pooling has become increasingly valuable in the post-Crisis era.

\begin{figure}[h!]
	\centering
	\includegraphics[width=\linewidth]{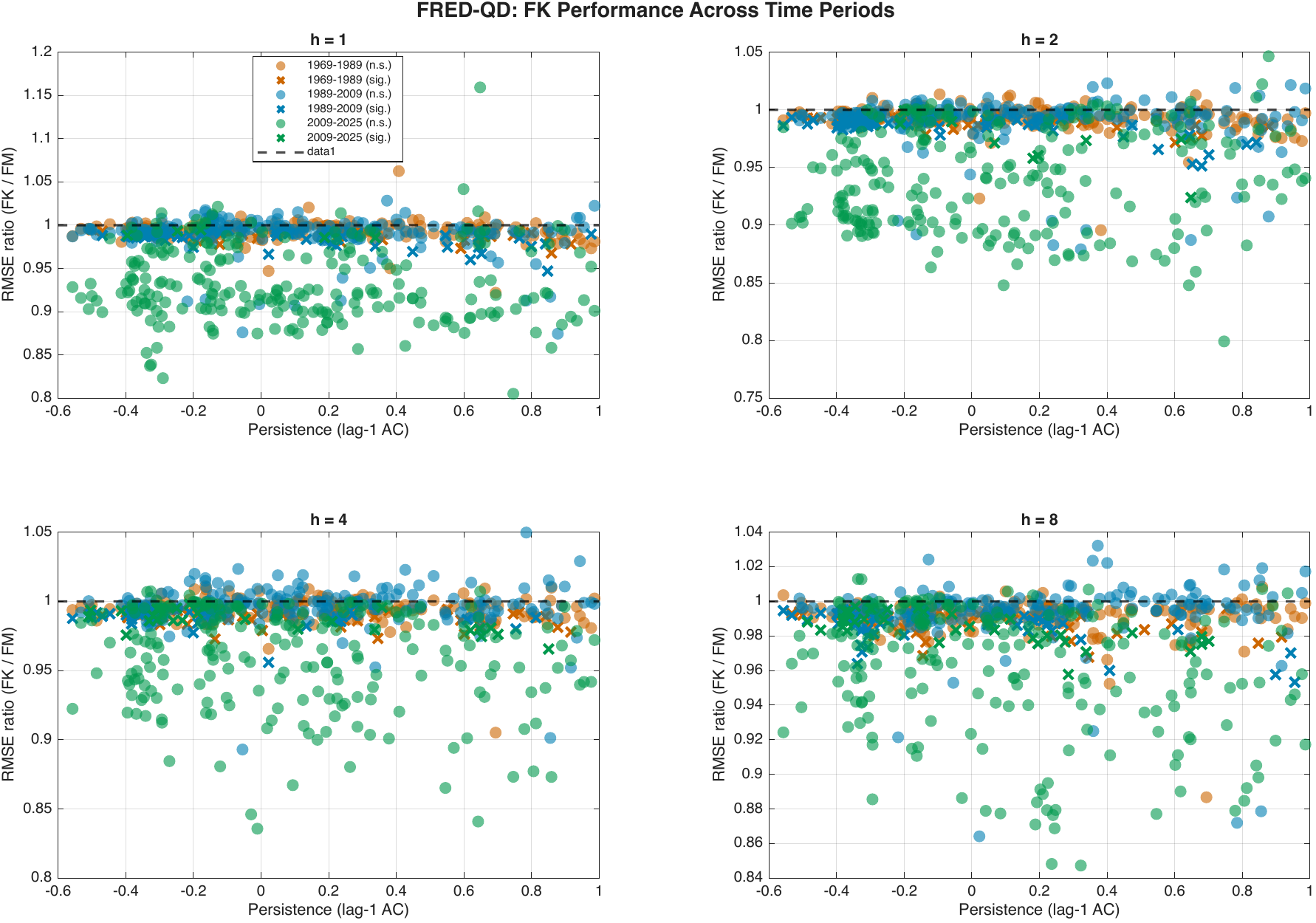}
	\caption{Subsample analysis for FRED-QD: Persistence vs.\ relative improvement
		of the factor kernel over the FM benchmark, by forecast horizon and time period.
		Each point represents one target variable evaluated over a specific 20-year window.
		Orange: 1969--1989 (pre-Great Moderation); Blue: 1989--2009 (Great Moderation
		and Financial Crisis); Green: 2009--2025 (post-Crisis era).
		Filled circles: not statistically significant at 5\% level (DM test);
		Crosses: statistically significant.
		The dashed horizontal line at 1 marks the break-even point.
		The quarterly data place the estimation in the interpolation regime ($N \gg W$)
		throughout all periods.}
	\label{fig:subsample_QD}
\end{figure}

Overall, the subsample analysis confirms that the factor kernel's systematic outperformance is not driven by a particular economic regime or historical episode. The advantage persists across the high-volatility pre-Great Moderation period, the relatively stable Great Moderation, the Financial Crisis, and the post-Crisis era. This robustness across diverse macroeconomic conditions strengthens the case that the factor kernel's state-dependent pooling mechanism provides a genuine improvement over linear factor models, rather than exploiting regime-specific features of the data.

\section{Conclusion}\label{sec:conclusion}

This paper makes three contributions. First, it documents clear double-descent patterns in macroeconomic forecasting data. In the in-sample motivating exercise, the MSFE rises near interpolation and then falls in the overparameterized region, with meaningful differences across variables depending on how strongly each series aligns with the common-factor structure. This provides direct empirical evidence that the double descent phenomenon is relevant in macro panels and not only in canonical machine-learning datasets.

Second, it develops a theoretical link between benign overfitting and factor models. Under exact factor structure, the spectral conditions for benign overfitting are naturally satisfied. Under approximate factor structure, the same mechanism can still operate, but its strength depends on the concentration of the idiosyncratic spectrum. This clarifies when interpolation can be benign in macroeconomic environments and why the answer is target dependent.

Third, it shows that synthetic data augmentation has a precise kernel interpretation and therefore yields an implementable forecasting method. The key insight is that benign overfitting in factor models works by implicitly defining a factor-structured kernel: the augmentation strategy converges to kernel ridge regression with a closed-form kernel that measures temporal proximity through the estimated factor structure rather than calendar time. This kernel can be implemented directly without generating synthetic data, avoiding computational cost and Monte Carlo variability.

In the out-of-sample rolling-window evaluation, the factor-kernel model systematically outperforms the standard Stock-Watson factor model across both monthly and quarterly datasets at all horizons, with win rates exceeding 90\%. The improvements are modest but highly consistent: mean gains are approximately 1\% in the monthly data (underparameterized regime) and 5\% in the quarterly data (interpolation regime).
Statistical significance increases substantially with forecast horizon, with approximately half of wins becoming statistically significant at longer horizons in the monthly data and 30\% in the quarterly data. The evaluation period spans approximately 56 years (1969-2025), providing robust evidence of the factor kernel's performance across diverse macroeconomic conditions. The quarterly results provide direct evidence that the factor kernel performs well in the interpolation regime, where the minimum-norm estimator fits training data exactly yet generalizes effectively when the predictor covariance has the spectral structure implied by factor models.

Across both datasets, we find that highly persistent target series benefit substantially more from the factor kernel's state-dependent pooling, which identifies economically similar states across time regardless of calendar distance. This suggests that the kernel's ability to pool information across similar economic conditions is particularly valuable when the target exhibits strong temporal dependence.

An interesting direction for future research concerns the relationship between signal alignment and forecast performance. While the BLLT theory predicts that targets with signal concentrated in the leading eigenspace should benefit most from the factor kernel, the empirical relationship in our data is weak. Understanding in which situations the factor kernel outperforms linear methods, and reconciling the role of signal alignment versus state-dependent pooling in driving forecasting performance, remains an open question.

\newpage
\appendix

\renewcommand{\thesection}{\Alph{section}}
\makeatletter
\renewcommand\@seccntformat[1]{%
  \ifnum\pdfstrcmp{#1}{section}=0
    Appendix~\csname the#1\endcsname\quad
  \else
    \csname the#1\endcsname\quad
  \fi}
\makeatother

\section{Proofs}\label{sec:appendix}

\subsection{Derivation of variance terms}\label{proof:bias_variance}

We derive the explicit forms of the variance component in equation~\eqref{eq:risk_bv} for both the underparameterized ($N < T$) and overparameterized ($N > T$) regimes. Recall that the MSFE decomposes as
\[
    \mathrm{MSFE} = \underbrace{\E_{\mathbf{X}}\!\left[\|\boldsymbol{\Sigma}^{1/2}(\E[\hat{\boldsymbol{\beta}} \mid \mathbf{X}] - \boldsymbol{\beta}^*)\|^2\right]}_{\text{Bias}^2} + \underbrace{\E_{\mathbf{X}}\!\left[\tr\!\big(\boldsymbol{\Sigma}\,\Var(\hat{\boldsymbol{\beta}} \mid \mathbf{X})\big)\right]}_{\text{Variance}} + \sigma^2.
\]

\paragraph{Variance term.}
We define the conditional variance contribution as
\[
\text{Variance} = \E\!\left[(\hat{\boldsymbol{\beta}}-\E[\hat{\boldsymbol{\beta}}\mid \mathbf{X}])^{\!\top}\boldsymbol{\Sigma}\,(\hat{\boldsymbol{\beta}}-\E[\hat{\boldsymbol{\beta}}\mid \mathbf{X}])\,\middle|\,\mathbf{X}\right].
\]
We will repeatedly use the identity:
\[
\E\!\left[\boldsymbol{\varepsilon}'A\,\boldsymbol{\varepsilon}\,\middle|\,\mathbf{X}\right]= \sigma^{2}\,\tr(A).
\]

\paragraph{Case $N<T$.}
From $\hat{\boldsymbol{\beta}}-\E[\hat{\boldsymbol{\beta}}\mid \mathbf{X}]=(\mathbf{X}'\mathbf{X})^{-1}\mathbf{X}'\boldsymbol{\varepsilon}$, we have
\begin{align*}
\text{Variance}&= \E\!\left[\boldsymbol{\varepsilon}'\mathbf{X}(\mathbf{X}'\mathbf{X})^{-1}\boldsymbol{\Sigma}(\mathbf{X}'\mathbf{X})^{-1}\mathbf{X}'\boldsymbol{\varepsilon}\,\middle|\,\mathbf{X}\right]\\
&= \sigma^{2}\,\tr\!\big(\mathbf{X}(\mathbf{X}'\mathbf{X})^{-1}\boldsymbol{\Sigma}(\mathbf{X}'\mathbf{X})^{-1}\mathbf{X}'\big)\\
&= \sigma^{2}\,\tr\!\big(\boldsymbol{\Sigma}(\mathbf{X}'\mathbf{X})^{-1}\big),
\end{align*}
where the last equality follows from the cyclicity of trace.

\paragraph{Case $N>T$.}
From $\hat{\boldsymbol{\beta}}-\E[\hat{\boldsymbol{\beta}}\mid \mathbf{X}]=\mathbf{X}'(\mathbf{X}\mathbf{X}')^{-1}\boldsymbol{\varepsilon}$, we have
\begin{align*}
\text{Variance}&= \E\!\left[\boldsymbol{\varepsilon}'(\mathbf{X}\mathbf{X}')^{-1}\mathbf{X}\boldsymbol{\Sigma} \mathbf{X}'(\mathbf{X}\mathbf{X}')^{-1}\boldsymbol{\varepsilon}\,\middle|\,\mathbf{X}\right]\\
&= \sigma^{2}\,\tr\!\big((\mathbf{X}\mathbf{X}')^{-1}\mathbf{X}\boldsymbol{\Sigma} \mathbf{X}'(\mathbf{X}\mathbf{X}')^{-1}\big)\\
&= \sigma^{2}\,\tr\!\big(\boldsymbol{\Sigma}\,\mathbf{X}'(\mathbf{X}\mathbf{X}')^{-2}\mathbf{X}\big),
\end{align*}
where the last equality again follows from cyclicity of trace.

Averaging over $\mathbf{X}$ yields:
\[
\E_{\mathbf{X}}[\text{Variance}] =
\begin{cases}
\sigma^{2}\,\E_{\mathbf{X}}\!\big[\tr(\boldsymbol{\Sigma}(\mathbf{X}'\mathbf{X})^{-1})\big], & N<T,\\[6pt]
\sigma^{2}\,\E_{\mathbf{X}}\!\big[\tr(\boldsymbol{\Sigma}\,\mathbf{X}'(\mathbf{X}\mathbf{X}')^{-2}\mathbf{X})\big], & N>T.
\end{cases}
\]

These are the variance expressions used in equations~\eqref{eq:bias_OP} and~\eqref{eq:var_OP} in Section~\ref{sec:double_descent}.

\subsection{Proof of Lemma~\ref{lemma:spiked_spectrum}}\label{proof:lemma_spiked_spectrum}

\begin{proof}
Since $\bm{\Lambda}$ is $N \times k$, the matrix 
$\bm{\Lambda}\bm{\Lambda}'$ has rank at most $k$, so 
$\lambda_j(\bm{\Lambda}\bm{\Lambda}') = 0$ for all $j > k$. Its 
$k$ nonzero eigenvalues coincide with those of the 
$k \times k$ matrix $\bm{\Lambda}'\bm{\Lambda}$, whose $(j,l)$ 
entry is $\sum_{i=1}^N \lambda_{ij}\lambda_{il}$. 
Assumption~\ref{ass:factor}(iv) ensures 
$\bm{\Lambda}'\bm{\Lambda}/N \to D$ with $D$ positive definite, so 
all $k$ eigenvalues of $\bm{\Lambda}'\bm{\Lambda}$ are $O(N)$, 
including the smallest. Furthermore, Assumption~\ref{ass:factor}(ii) and Assumption~\ref{ass:factor_approx}(ii$'$) imply that $\bar{\psi}^2 \;=\; O(1)$. Weyl's inequality applied to 
$\bm{\Sigma} = \bm{\Lambda}\bm{\Lambda}' + \bm{\Psi}$ then gives the spectral 
gap: for $j \leq k$,
$\lambda_j(\bm{\Sigma}) \;\geq\; 
    \lambda_j(\bm{\Lambda}\bm{\Lambda}') + \lambda_N(\bm{\Psi}) 
    \;=\; O(N),
$
while for $j > k$,
$\underline{\psi}^2 \;=\; \lambda_N(\bm{\Psi}) 
    \;\leq\; \lambda_j(\bm{\Sigma}) \;\leq\; 
    \lambda_{j}(\bm{\Lambda}\bm{\Lambda}') + \lambda_1(\bm{\Psi}) 
    \;=\; 0 + \bar{\psi}^2 \;=\; O(1)$.
\end{proof}

\subsection{Proof of Theorem~\ref{prop:factor}}\label{proof:prop_factor}

\begin{proof}
    \emph{(i) Overall effective rank.} 
    
    By Weyl's inequality~\eqref{eq:weyl}, $\lambda_1(\bm{\Sigma}) \geq \lambda_1(\bm{\Lambda}\bm{\Lambda}')$. Since $\lambda_1(\bm{\Lambda}\bm{\Lambda}') = O(N)$ by Assumption~\ref{ass:factor}(iv) or Assumption~\ref{ass:factor_approx}(iv), we have $\lambda_1(\bm{\Sigma}) = O(N)$. For the trace, $\tr(\bm{\Sigma}) = \tr(\bm{\Lambda}\bm{\Lambda}') + \tr(\bm{\Psi}) = O(N) + O(N) = O(N)$ under both assumptions. Therefore $r_0 = \tr(\bm{\Sigma})/\lambda_1 = O(N)/O(N) = O(1)$. Since $T \to \infty$ by assumption, $r_0/T \to 0$.

    \emph{(ii) Low-dimensional head.} 
    
    Define $\psi_{\min}^2$ and $\psi_{\max}^2$ as follows: under Assumption~\ref{ass:factor} (exact model), $\psi_{\min}^2 = \min_j \psi_j^2$ and $\psi_{\max}^2 = \max_j \psi_j^2$ are the minimum and maximum diagonal entries of $\bm{\Psi}$; under Assumption~\ref{ass:factor_approx} (approximate model), $\psi_{\min}^2 = \lambda_{\min}(\bm{\Psi})$ and $\psi_{\max}^2 = \lambda_{\max}(\bm{\Psi})$ are the minimum and maximum eigenvalues of $\bm{\Psi}$. In both cases, $\underline{\psi}^2 \leq \psi_{\min}^2 \leq \psi_{\max}^2 \leq \bar{\psi}^2$ by assumption.
    
    By Weyl's inequality~\eqref{eq:weyl}, for $j \leq k$:
    \[
        \lambda_j(\bm{\Sigma}) \;\geq\; \psi_{\min}^2 + \lambda_j(\bm{\Lambda}\bm{\Lambda}') \;\geq\; \underline{\psi}^2 + \lambda_j(\bm{\Lambda}\bm{\Lambda}').
    \]
    Since $\lambda_k(\bm{\Lambda}\bm{\Lambda}') = O(N)$ (the $k$-th eigenvalue of a rank-$k$ matrix with $\bm{\Lambda}'\bm{\Lambda}/N \to \bm{\Sigma}_\Lambda$ positive definite), we obtain $\lambda_j(\bm{\Sigma}) = O(N)$ for all $j \leq k$. For $j > k$, since $\lambda_j(\bm{\Lambda}\bm{\Lambda}') = 0$, Weyl's inequality gives
    \[
        \lambda_j(\bm{\Sigma}) \;\leq\; \psi_{\max}^2 + 0 \;\leq\; \bar{\psi}^2 = O(1).
    \]
    
    This establishes the spectral gap: $\lambda_k(\bm{\Sigma}) = O(N)$ while $\lambda_{k+1}(\bm{\Sigma}) = O(1)$. For $j < k$, both $\sum_{i>j}\lambda_i$ and $\lambda_{j+1}$ are $O(N)$, so $r_j = O(1) < bT$ for large $T$. At $j = k$, the numerator $\sum_{i>k}\lambda_i = (N-k) \cdot O(1) = O(N)$ while the denominator $\lambda_{k+1} = O(1)$, giving $r_k = O(N) \geq bT$. Therefore $k^* = k$ and $k^*/T \to 0$.

    \emph{(iii) Tail flatness.} 
    
    The argument for condition (iii) is identical under both assumptions, relying on the interlacing inequalities for rank-$k$ perturbations.
    
    Case~(a): Under Assumption~\ref{ass:factor} with $\bm{\Psi} = \psi^2 I_N$, Weyl's inequality gives $\lambda_j(\bm{\Sigma}) = \lambda_j(\bm{\Lambda}\bm{\Lambda}') + \psi^2$ for all $j$. Since $\lambda_j(\bm{\Lambda}\bm{\Lambda}') = 0$ for $j > k$, all tail eigenvalues equal $\psi^2$ and $\mathcal{C} = 1$. Under Assumption~\ref{ass:factor_approx}, if all eigenvalues of $\bm{\Psi}$ satisfy $\mu_j = \psi^2$, then by the interlacing inequalities (stated below), $\lambda_j(\bm{\Sigma}) = \psi^2$ for all $j > k$ and again $\mathcal{C} = 1$.

    Cases~(b)--(c): When $\bm{\Psi}$ is not scalar (exact model) or its eigenvalues are not all equal (approximate model), the tail eigenvalues of $\bm{\Sigma}$ no longer coincide with the eigenvalues of $\bm{\Psi}$. However, the interlacing inequalities\footnote{If $B$ is a positive semidefinite matrix of rank at most $k$, the eigenvalues of $A + B$ interlace with those of $A$ with slack $k$: $\lambda_{j+k}(A+B) \leq \lambda_j(A) \leq \lambda_{j-k}(A+B)$. See \citet{horn2012matrix}, Theorem~4.3.6.} for the rank-$k$ perturbation $\bm{\Lambda}\bm{\Lambda}'$ give
    \[
        \psi_{(j+k)}^2 \;\leq\; \lambda_{j+k}(\bm{\Sigma}) \;\leq\; \psi_{(j)}^2, \qquad j = 1, \ldots, N-k,
    \]
    where $\psi_{(1)}^2 \geq \cdots \geq \psi_{(N)}^2$ are the ordered eigenvalues of $\bm{\Psi}$ (which coincide with the ordered diagonal entries $\psi_1^2, \ldots, \psi_N^2$ under the exact factor model). Since all eigenvalues of $\bm{\Psi}$ lie in $[\underline{\psi}^2, \bar{\psi}^2]$ under both assumptions, the interlacing bounds imply that every tail eigenvalue of $\bm{\Sigma}$ also lies in this interval:
    \[
        \underline{\psi}^2 \;\leq\; \lambda_j(\bm{\Sigma}) \;\leq\; \bar{\psi}^2, \qquad j = k+1, \ldots, N.
    \]
    For any set of values confined to $[\underline{\psi}^2, \bar{\psi}^2]$, the concentration ratio satisfies
    \[
        1 \;\leq\; \mathcal{C} \;=\; \frac{\overline{\lambda_{\mathrm{tail}}^2}}{(\bar{\lambda}_{\mathrm{tail}})^2} \;\leq\; \frac{\bar{\psi}^2}{\underline{\psi}^2},
    \]
    where the lower bound is Jensen's inequality and the upper bound follows from $\overline{\lambda_{\mathrm{tail}}^2} \leq \bar{\psi}^4$ and $\bar{\lambda}_{\mathrm{tail}} \geq \underline{\psi}^2$. Thus $\mathcal{C}$ is controlled by the same primitives $\underline{\psi}^2$ and $\bar{\psi}^2$ under both assumptions, and the flatness condition $T\mathcal{C}/(N-k) \to 0$ holds under the conditions on $\mathcal{C}$ stated in the theorem.
\end{proof}

\subsection{Proof of Lemma~\ref{lem:kernel_duality}}

\begin{proof}\label{proof:kernel_duality}
    The minimum-norm interpolator is 
    $\hat{\beta} = \bZ'(\bZ\bZ')^{-1}y$, which 
    can be verified to satisfy 
    $\bZ\hat{\beta} = y$ and to have minimal 
    $\ell_2$ norm among all solutions (see, e.g., 
    the pseudoinverse characterization in 
    Section~\ref{sec:setup}). The prediction for 
    a new point is
    \begin{align*}
        \hat{y}_{\textup{new}} 
        &= z_{\textup{new}}'\hat{\beta} \\
        &= z_{\textup{new}}'\bZ'(\bZ\bZ')^{-1}y \\
        &= k'(\bZ\bZ')^{-1}y,
    \end{align*}
    where $k = \bZ z_{\textup{new}} \in 
    \mathbb{R}^T$ has entries 
    $k_s = \bZ_s' z_{\textup{new}} = 
    z_{\textup{new}}'\bZ_s'$, i.e., the inner 
    product between the new observation and the 
    $s$-th training observation in the augmented 
    feature space. This is the standard kernel 
    interpolation formula: the Gram matrix 
    $\bZ\bZ'$ plays the role of the kernel matrix 
    evaluated at training points, and $k$ is the 
    kernel evaluated between the new point and 
    each training point.
\end{proof}

\subsection{Proof of 
Proposition~\ref{thm:synth_kernel}}\label{proof:synth_kernel}

We first establish the expected Gram matrix, then 
apply concentration and the Woodbury identity to 
obtain the kernel dual.

\begin{lemma}[Expected Gram matrix of 
synthetic-augmented system]
\label{lem:synth_gram}
    The expected Gram matrix 
    $\E_*[\bZ\bZ']$ satisfies
    \[
        \E_*[(\bZ\bZ')_{ts}] \;=\; 
        (XX')_{ts} + B\Big(
        \hat{\mathbf{f}}_t'\hat{\bm{\Lambda}}'\hat{\bm{\Lambda}}
        \hat{\mathbf{f}}_s + \textup{tr}(\hat{\bm{\Psi}}) 
        \cdot \mathbf{1}[t = s]\Big).
    \]
\end{lemma}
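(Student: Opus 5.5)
The plan is to expand the Gram matrix entry by entry along the block structure of $\bZ$ in \eqref{eq:synth_aug}, and then take expectations under $\E_*$. Here $\E_*$ is the expectation over the synthetic idiosyncratic draws only, conditional on the observed data and on the estimated quantities $\hat{\bm{\Lambda}}$, $\hat{\mathbf{f}}_1,\ldots,\hat{\mathbf{f}}_T$, $\hat{\bm{\Psi}}$. Row $t$ of $\bZ$ is the concatenation $(\mathbf{x}_t', \mathbf{x}_t^{*(1)\prime},\ldots,\mathbf{x}_t^{*(B)\prime})$, so
\[
(\bZ\bZ')_{ts} = \mathbf{x}_t'\mathbf{x}_s + \sum_{b=1}^B \mathbf{x}_t^{*(b)\prime}\mathbf{x}_s^{*(b)} .
\]
The first term is $(\mathbf{X}\mathbf{X}')_{ts}$. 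It is nonrandom under $\E_*$ and passes through unchanged. By linearity, it then suffices to compute $\E_*[\mathbf{x}_t^{*(b)\prime}\mathbf{x}_s^{*(b)}]$ for a single $b$ and show that it does not depend on $b$; this produces the factor $B$.

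For a fixed $b$, substitute $\mathbf{x}_t^{*(b)} = \hat{\bm{\Lambda}}\hat{\mathbf{f}}_t + \mathbf{e}_t^{*(b)}$ and expand the product into four terms:
\[
\hat{\mathbf{f}}_t'\hat{\bm{\Lambda}}'\hat{\bm{\Lambda}}\hat{\mathbf{f}}_s
+ \hat{\mathbf{f}}_t'\hat{\bm{\Lambda}}'\mathbf{e}_s^{*(b)}
+ \mathbf{e}_t^{*(b)\prime}\hat{\bm{\Lambda}}\hat{\mathbf{f}}_s
+ \mathbf{e}_t^{*(b)\prime}\mathbf{e}_s^{*(b)} .
\]
The first term is deterministic under $\E_*$. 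The two cross terms are linear in mean-zero Gaussian vectors with fixed coefficients, so their expectations vanish. For the last term, use independence of the draws across $t$. When $t\neq s$, $\E_*[\mathbf{e}_t^{*(b)\prime}\mathbf{e}_s^{*(b)}] = \E_*[\mathbf{e}_t^{*(b)}]'\E_*[\mathbf{e}_s^{*(b)}]=0$. When $t=s$, $\E_*[\mathbf{e}_t^{*(b)\prime}\mathbf{e}_t^{*(b)}] = \tr(\E_*[\mathbf{e}_t^{*(b)}\mathbf{e}_t^{*(b)\prime}]) = \tr(\hat{\bm{\Psi}})$, by the trace trick and the covariance $\hat{\bm{\Psi}}$ of the draws. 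Combining the four terms gives $\hat{\mathbf{f}}_t'\hat{\bm{\Lambda}}'\hat{\bm{\Lambda}}\hat{\mathbf{f}}_s + \tr(\hat{\bm{\Psi}})\mathbf{1}[t=s]$, which is the same for every $b$. Summing over $b=1,\ldots,B$ and adding $(\mathbf{X}\mathbf{X}')_{ts}$ yields the stated formula.

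There is no real analytical obstacle here. The one point that needs care is to state explicitly that $\E_*$ conditions on $\mathbf{X}$ and on all estimated quantities. This is what makes $(\mathbf{X}\mathbf{X}')_{ts}$ and the common component deterministic, and what makes the synthetic noise independent of them. The argument also needs only zero mean, covariance $\hat{\bm{\Psi}}$, and independence across $t$; Gaussianity is not used.
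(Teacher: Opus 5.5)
Your proposal is correct and follows the paper's proof essentially step for step. You condition on the data and the estimated factor quantities, expand $\E_*[\mathbf{x}_t^{*(b)\prime}\mathbf{x}_s^{*(b)}]$, drop the cross terms because the draws have mean zero, get $\tr(\hat{\bm{\Psi}})\,\mathbf{1}[t=s]$ from the noise inner product using independence across $t$, and add the $B$ identical contributions to $(\mathbf{X}\mathbf{X}')_{ts}$.
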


\begin{proof}
    The expectation $\mathbb{E}_*$ is conditional 
    on the observed data and the estimated model 
    parameters $\hat{\bm{\Lambda}}$, $\hat{\bm{\Psi}}$, 
    $\hat{\mathbf{f}}_1, \ldots, \hat{\mathbf{f}}_T$. Since 
    $\mathbf{x}_t^{*(b)} = \hat{\bm{\Lambda}}\hat{\mathbf{f}}_t + 
    \mathbf{e}_t^{*(b)}$ with $\mathbf{e}_t^{*(b)}$ independent 
    across $t$ and $b$,
    \begin{align*}
        \E_*[\mathbf{x}_t^{*(b)\prime}\mathbf{x}_s^{*(b)}] 
        &= \E_*[(\hat{\bm{\Lambda}}\hat{\mathbf{f}}_t + 
        \mathbf{e}_t^{*(b)})'(\hat{\bm{\Lambda}}\hat{\mathbf{f}}_s + 
        \mathbf{e}_s^{*(b)})] \\
        &= \hat{\mathbf{f}}_t'\hat{\bm{\Lambda}}'\hat{\bm{\Lambda}}
        \hat{\mathbf{f}}_s + \E_*[\mathbf{e}_t^{*(b)\prime}
        \mathbf{e}_s^{*(b)}] \\
        &= \hat{\mathbf{f}}_t'\hat{\bm{\Lambda}}'\hat{\bm{\Lambda}}
        \hat{\mathbf{f}}_s + \textup{tr}(\hat{\bm{\Psi}}) 
        \cdot \mathbf{1}[t = s],
    \end{align*}
    where the cross-terms 
    $\hat{\mathbf{f}}_t'\hat{\bm{\Lambda}}'\E_*[\mathbf{e}_s^{*(b)}] = 0$ 
    vanish by $\E_*[\mathbf{e}_s^{*(b)}] = 0$, and 
    $\E_*[\mathbf{e}_t^{*(b)\prime}\mathbf{e}_s^{*(b)}] = 
    \textup{tr}(\hat{\bm{\Psi}}) \cdot 
    \mathbf{1}[t=s]$ by independence across $t$. 
    Summing the fixed term $\bm{X}\bm{X}'$ and $B$ independent
    replications gives the result.
\end{proof}

\begin{proof}[Proof of 
Proposition~\ref{thm:synth_kernel}]

    \emph{Step 1: Gram matrix concentration.}
    Since $\bZ = [\bm{X} \mid \bm{X}^{*(1)} \mid \cdots 
    \mid \bm{X}^{*(B)}]$, the Gram matrix decomposes as
    \[
        \bZ\bZ' \;=\; \bm{X}\bm{X}' + \sum_{b=1}^B
        \bm{X}^{*(b)}\bm{X}^{*(b)\prime}.
    \]
    The $B$ replications are independent 
    conditional on the data. By the law of large 
    numbers and Lemma~\ref{lem:synth_gram},
    \[
        \frac{1}{B}\sum_{b=1}^B 
        \bm{X}^{*(b)}\bm{X}^{*(b)\prime} \;\to\; 
        \hat{\mathbf{F}}\hat{\bm{\Lambda}}'\hat{\bm{\Lambda}}
        \hat{\mathbf{F}}' + \textup{tr}(\hat{\bm{\Psi}})\,I_T,
    \]
    where $\hat{\mathbf{F}} \in \mathbb{R}^{T \times k}$ 
    is the matrix with rows 
    $\hat{\mathbf{f}}_1', \ldots, \hat{\mathbf{f}}_T'$. Since 
    $\bm{X}\bm{X}'/B \to 0$, the Gram matrix satisfies
    \[
        \bZ\bZ' \;=\; B\Big(
        \hat{\mathbf{F}}\hat{\bm{\Lambda}}'\hat{\bm{\Lambda}}
        \hat{\mathbf{F}}' + \textup{tr}(\hat{\bm{\Psi}})
        \,I_T\Big) + o_p(B).
    \]

    \emph{Step 2: Kernel duality.}
    By Lemma~\ref{lem:kernel_duality}, the 
    minimum-norm interpolator predicts via 
    $\hat{y}_{\textup{new}} = 
    k'(\bZ\bZ')^{-1}y$, which depends on $\bZ$ 
    only through the Gram matrix and the kernel vector $\mathbf{k}$. Substituting 
    the concentrated Gram matrix:
    \[
        \bZ\bZ' \;\approx\; 
        B\,K_{\textup{synth}} + 
        B\,\textup{tr}(\hat{\bm{\Psi}})\,I_T,
    \]
    where $K_{\textup{synth}}(t,s) = 
    \hat{\mathbf{f}}_t'\hat{\bm{\Lambda}}'\hat{\bm{\Lambda}}
    \hat{\mathbf{f}}_s$. The kernel evaluation vector 
    $k$ satisfies $k \approx B\, 
    k_{\textup{synth}}$ by the same 
    concentration argument. Therefore
    \begin{align*}
        \hat{y}_{\textup{new}} 
        &\approx (B\,k_{\textup{synth}})'
        (B\,K_{\textup{synth}} + 
        B\,\textup{tr}(\hat{\bm{\Psi}})\,I_T)^{-1}y \\
        &= k_{\textup{synth}}'
        (K_{\textup{synth}} + 
        \textup{tr}(\hat{\bm{\Psi}})\,I_T)^{-1}y,
    \end{align*}
    where the factor $B$ cancels. This is kernel 
    ridge regression with kernel 
    $K_{\textup{synth}}$ and ridge parameter 
    $\lambda = \textup{tr}(\hat{\bm{\Psi}})$. The 
    matrix $K_{\textup{synth}} = 
    \hat{\mathbf{F}}\hat{\bm{\Lambda}}'\hat{\bm{\Lambda}}\hat{\mathbf{F}}'$ 
    is positive semidefinite of rank at most $k$, 
    so $K_{\textup{synth}} + \lambda I_T$ is 
    invertible for any $\lambda > 0$.
    
\end{proof}

\section{Double Descent with FRED-QD}\label{app:fred_qd}

This appendix presents the double-descent illustration using the FRED-QD quarterly dataset, which complements the monthly analysis in Section~\ref{sec:motivation}. The quarterly data provide a different perspective on the interpolation regime: with $N \gg T$ throughout, the estimation is always in the overparameterized region.

\begin{figure}[t!]
    \centering
    \includegraphics[width=1\linewidth]{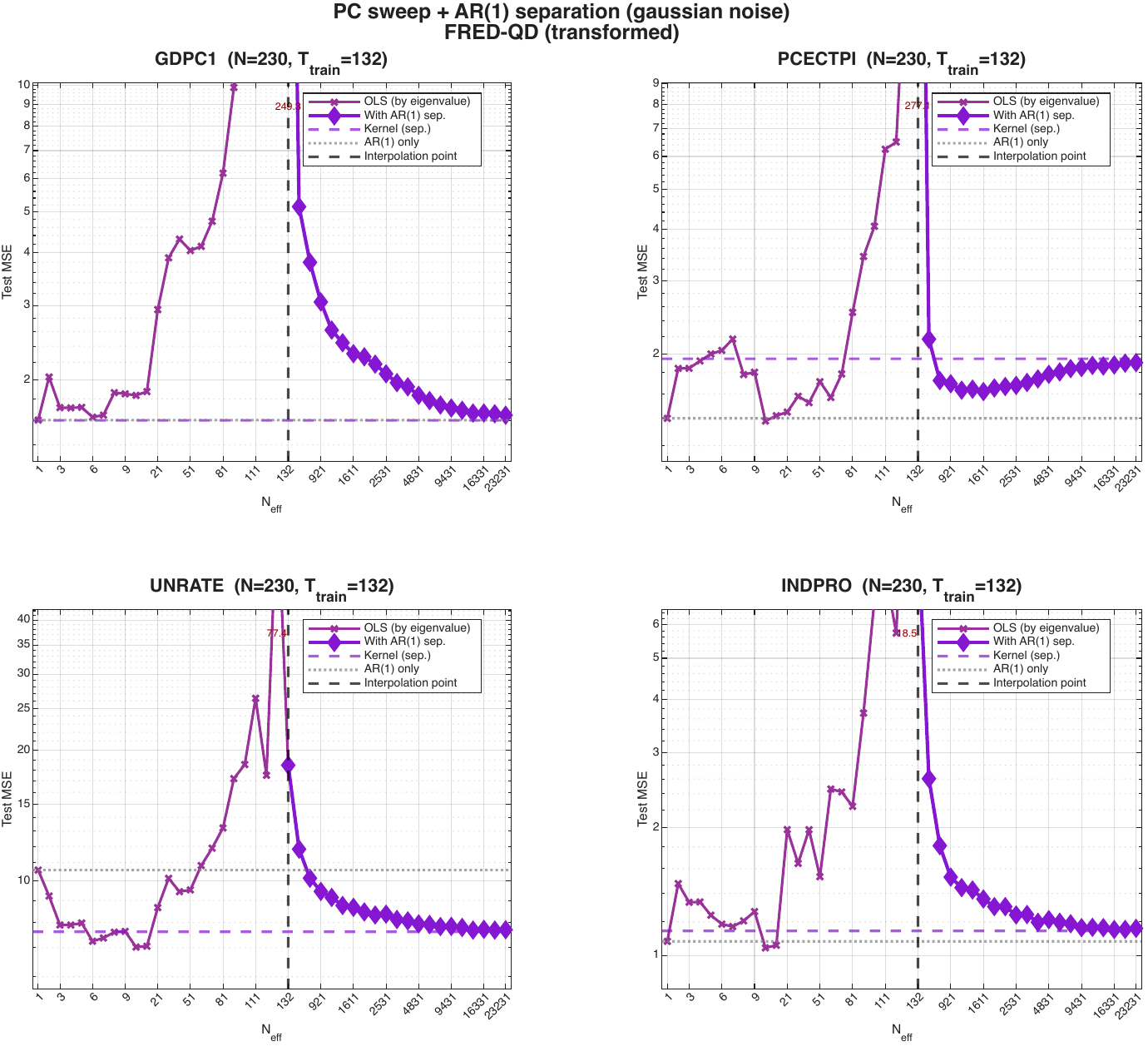}
    \caption{Double-descent MSFE curves for four FRED-QD targets.
    The thin purple line traces the PC sweep from $r=1$ to $r=N$;
    the thick purple line traces results based on $B$ synthetic copies of the data $\bm X$.
    The dashed vertical line is the interpolation threshold ($N_{\mathrm{eff}} = T_{\mathrm{train}}$);
    the purple dashed horizontal line is the asymptotic limit of the estimator, computed from its RKHS (kernel) representation described in Section~\ref{sec:augmentation_andRKHS};
    the gray dotted horizontal line shows the AR(1) baseline.
    The quarterly data place the estimation in the interpolation regime ($N \gg T$) throughout the augmentation path.}
    \label{fig:dd_four_panel_qd}
\end{figure}

We take the FRED-QD quarterly dataset of \citet{mccracken2016fred}, which contains $N_{\mathrm{raw}} = 248$ macroeconomic and financial series spanning 1959:Q1 to 2025:Q1 ($T = 264$ observations); after excluding series with more than 20\% missing values, a typical target uses $N \approx 240$ predictors. We split the whole sample in half, with $T_{\mathrm{train}} = 132$ and $T_{\mathrm{test}} = 132$. All models considered include an AR(1) lag of the target variable as a baseline predictor; we then look at what happens as we add cross-sectional information on top of it.

We begin by taking an AR(1) model (one parameter), and gradually increase model size toward $N \approx 240$ by adding principal components (PCs) of $\bm X$ one at a time, ranked by eigenvalue. This yields an ordered path from AR(1)+1 PC to AR(1)+all $N$ PCs (which is algebraically equivalent to using all $N$ original predictors). The first set of PCs captures the dominant common factors, while later PCs increasingly capture idiosyncratic dynamics. Figure~\ref{fig:dd_four_panel_qd} shows that the resulting test MSE profile traces the usual bias-variance tradeoff.

At $r = N$ the model uses all available variables and we have reached the boundary of the original data. To push into the overparameterized regime, we augment the design matrix with synthetic copies of $\bm X$. Each copy preserves the estimated common component (the factors) but replaces the idiosyncratic errors with fresh Gaussian draws. Appending $B$ such copies creates a design matrix with $N(B+1)$ columns, so the effective number of parameters grows in discrete jumps of $N$. The construction and its theoretical properties are developed in Section~\ref{sec:augmentation_andRKHS}.

In this exercise, we let $N_{\mathrm{eff}}$ denote the number of slope coefficients being estimated (cross-sectional predictors plus the AR(1) lag). Along the PC path, the number of coefficients is $N_{\mathrm{eff}}=r+1$. With augmentation, we end up estimating $N_{\mathrm{eff}}=(B+1)N+1$ parameters. The interpolation threshold is reached when $N_{\mathrm{eff}}=T_{\mathrm{train}}$. With $T_{\mathrm{train}} = 132$ quarters and $N \approx 240$ predictors, we have $N \gg T$ throughout, so even the PC sweep with $r=1$ is already overparameterized, and the augmentation path remains deeply in the interpolation regime.

We apply this procedure to four FRED-QD targets (real GDP, PCE price index, unemployment rate, and industrial production) and plot test MSE against the effective number of parameters. Figure~\ref{fig:dd_four_panel_qd} shows a clear double-descent shape in all four cases: MSE rises sharply near the interpolation threshold and then declines in the overparameterized region.

\section{Illustration: Bias and Variance Under Alternative Spectral Shapes}
\label{sec:bias-spectra}
This appendix illustrates the bias and variance requirements for benign overfitting discussed in Section~\ref{sec:bllt_conditions} through three concrete spectral configurations. Recall that the MSFE decomposes into bias and variance terms. For benign overfitting to occur, both terms must vanish as $N/T \to \infty$. We analyze the minimum-norm estimator in the overparameterized regime $N > T$ under three spectral configurations: (1) isotropic spectrum, (2) geometric decay spectrum, and (3) flat-tail spectrum (factor structure). Throughout, we work in the eigenbasis of $\boldsymbol{\Sigma}$ and assume the signal $\boldsymbol{\beta}^*$ is concentrated in the first $m < T$ coordinates.
We first examine the bias term for all three cases and show that only the flat-tail case achieves vanishing bias. We then verify that the variance term also vanishes for the flat-tail case, confirming that it satisfies both requirements for benign overfitting.

\subsection{Bias term analysis}
The bias term in the MSFE measures how far the expected prediction is from the true signal. In the overparameterized regime $N > T$, the minimum-norm estimator satisfies
\[
\E[\hat{\boldsymbol{\beta}} \mid \mathbf{X}] = \bm{P}_r\boldsymbol{\beta}^*, \qquad
\bm{P}_r := \mathbf{X}^\top(\mathbf{X}\mathbf{X}^\top)^{-1}\mathbf{X},
\]
where $\bm{P}_r$ is the orthogonal projector onto the row space $\mathcal{R} = \text{row}(\mathbf{X}) \subset \mathbb{R}^N$. The estimator recovers only the component of $\boldsymbol{\beta}^*$ that lies in the row space; the orthogonal component $(I - \bm{P}_r)\boldsymbol{\beta}^*$ remains unrecovered. The $\boldsymbol{\Sigma}$-weighted squared bias is
\begin{equation}
\label{eq:bias-general}
\text{Bias}^2(\mathbf{X})
= \big((I - \bm{P}_r)\boldsymbol{\beta}^*\big)^\top\boldsymbol{\Sigma}\big((I - \bm{P}_r)\boldsymbol{\beta}^*\big)
= \sum_{j=1}^N \lambda_j\big(e_j^\top(I - \bm{P}_r)\boldsymbol{\beta}^*\big)^2.
\end{equation}
This weights each unrecovered direction by its variance $\lambda_j$. Since $\mathbf{X}$ is random (drawn from a Gaussian design with covariance $\boldsymbol{\Sigma}$), the row space $\mathcal{R}$ is random, and we study the expected bias $\E[\text{Bias}^2(\mathbf{X})]$ over the distribution of $\mathbf{X}$. The question is: under what spectral configurations does this expectation vanish as $N \to \infty$?

\paragraph{Case 1: Isotropic spectrum (fails ``light enough'' requirement)}

Let $\lambda_j \equiv 1$ for all $j$, so that $\boldsymbol{\Sigma} = I_N$. In this case, the row space of $\mathbf{X}$ is uniformly distributed over all $T$-dimensional subspaces of $\mathbb{R}^N$, and by rotational invariance
\[
\E[\bm{P}_r] = \frac{T}{N}I_N.
\]
Hence the expected squared bias admits the closed form
\begin{equation}
\label{eq:bias-isotropic}
\E[\text{Bias}^2(\mathbf{X})]
= \E\big[(\boldsymbol{\beta}^*)^\top(I - \bm{P}_r)\boldsymbol{\beta}^*\big]
= \Big(1 - \frac{T}{N}\Big)\|\boldsymbol{\beta}^*\|^2, \qquad N > T.
\end{equation}
As $N \to \infty$, the bias converges to $\|\boldsymbol{\beta}^*\|^2 > 0$. 

The tail eigenvalues are not light enough ($\lambda_j = 1$ for all $j > m$). Even though the row space expands and $(I - \bm{P}_r)\boldsymbol{\beta}^*$ shrinks, the $\boldsymbol{\Sigma}$-weighting does not reduce the bias because the unrecovered component lies in directions with unit variance. The tail is not ``light enough''---eigenvalues must be small for the weighted bias to vanish.

\paragraph{Case 2: Geometric decay spectrum (fails ``disperse enough'' requirement)}

Consider a geometrically decaying spectrum
\[
\lambda_j = q^{j-1}, \qquad 0 < q < 1.
\]
No closed-form expression for $\E[\text{Bias}^2(\mathbf{X})]$ is available because $\bm{P}_r$ and $\boldsymbol{\Sigma}$ do not commute.

To characterize the asymptotic behavior, fix an integer $C \geq 1$ and decompose the coordinates into the head $H = \{1, \ldots, m\}$, an early tail block $T_1 = \{m+1, \ldots, m+C\}$, and the remaining tail $T_2 = \{m+C+1, \ldots, N\}$. Since all terms in \eqref{eq:bias-general} are nonnegative,
\[
\E[\text{Bias}^2(\mathbf{X})]
\geq
\E\Bigg[\sum_{j \in T_1}\lambda_j\big(e_j^\top(I - \bm{P}_r)\boldsymbol{\beta}^*\big)^2\Bigg].
\]
For $j \in T_1$ we have $\lambda_j \geq q^{m+C-1}$, which implies
\begin{equation}
\label{eq:bias-geom-lower}
\E[\text{Bias}^2(\mathbf{X})]
\geq
q^{m+C-1}\,\E\Big[\|(I - \bm{P}_r)\boldsymbol{\beta}^*\|_{T_1}^2\Big],
\end{equation}
where $\|v\|_{T_1}^2 := \sum_{j \in T_1}v_j^2$. Under geometric decay, the total variance in the deeper tail $T_2$ is finite:
\[
\sum_{j > m+C}\lambda_j
= \sum_{k=C+1}^\infty q^{m+k-1}
= \frac{q^{m+C}}{1 - q}.
\]
Thus, adding further coordinates in $T_2$ does not introduce a growing number of effective low-variance directions. The residual $(I - \bm{P}_r)\boldsymbol{\beta}^*$ cannot be dispersed over an increasing-dimensional tail, and $\E[\|(I - \bm{P}_r)\boldsymbol{\beta}^*\|_{T_1}^2]$ remains bounded away from zero. Combining this with \eqref{eq:bias-geom-lower} yields
\begin{equation}
\label{eq:bias-geom-order}
\liminf_{N \to \infty}\E[\text{Bias}^2(\mathbf{X})]
\geq
\text{const} \times q^m > 0.
\end{equation}

The tail eigenvalues vanish exponentially fast. Even though they are small (satisfying ``light enough''), there are too few effective directions for the row space to expand and cover the signal support. The tail is not ``disperse enough''---we need many directions with similar small eigenvalues, not exponentially vanishing ones.

\paragraph{Case 3: Flat-tail spectrum / Factor structure (satisfies both requirements)}

Consider a spectrum with a head and a flat tail:
\[
\lambda_j =
\begin{cases}
q^{j-1}, & j \leq m_{\text{head}},\\
\tau, & j > m_{\text{head}},
\end{cases}
\qquad \tau = \tau_{T,N} \downarrow 0,
\]
with $m \leq m_{\text{head}}$ so that the signal is contained in the head.

The squared bias decomposes as
\[
\text{Bias}^2(\mathbf{X})
= \sum_{j \leq m_{\text{head}}}\lambda_j\big((I - \bm{P}_r)\boldsymbol{\beta}^*\big)_j^2
+ \sum_{j > m_{\text{head}}}\tau\big((I - \bm{P}_r)\boldsymbol{\beta}^*\big)_j^2.
\]
Since $(I - \bm{P}_r)$ is an orthogonal projector, $\|(I - \bm{P}_r)\boldsymbol{\beta}^*\|^2 \leq \|\boldsymbol{\beta}^*\|^2$, and therefore
\begin{equation}
\label{eq:bias-flat-upper}
\text{Bias}^2(\mathbf{X})
\leq
\sum_{j \leq m_{\text{head}}}\lambda_j\big((I - \bm{P}_r)\boldsymbol{\beta}^*\big)_j^2
+ \tau\|\boldsymbol{\beta}^*\|^2.
\end{equation}
As $N$ grows, the row space increasingly aligns with the low-dimensional head, so the first term in \eqref{eq:bias-flat-upper} becomes negligible. Moreover, if $\tau = O(T/N)$, then $\tau \to 0$ as $N/T \to \infty$, implying
\[
\E[\text{Bias}^2(\mathbf{X})] \to 0.
\]

The flat tail provides many weak directions (disperse enough) with small eigenvalues (light enough). The row space can expand and progressively align with the head space where $\boldsymbol{\beta}^*$ has support, while the unrecovered residual is dispersed over many low-variance tail directions. The $\boldsymbol{\Sigma}$-weighted bias vanishes because both requirements are satisfied.

This spectral shape arises naturally in factor-structured panels, where the head eigenvalues correspond to common factors and the flat tail corresponds to idiosyncratic components with similar small variances. We now verify that the variance condition also holds for this case.

\subsection{Variance term analysis for flat-tail case}

We now verify that also the variance converges to 0 for the flat-tail (factor structure) case. Recall from Appendix~\ref{proof:bias_variance} that for $N > T$, the variance term is
\[
\text{Variance} = \sigma^2\,\E_{\mathbf{X}}\!\big[\tr(\boldsymbol{\Sigma}\,\mathbf{X}^\top(\mathbf{X}\mathbf{X}^\top)^{-2}\mathbf{X})\big].
\]
For the flat-tail spectrum with $\lambda_j = \tau$ for $j > m_{\text{head}}$, the trace decomposes as
\[
\tr(\boldsymbol{\Sigma}\,\mathbf{X}^\top(\mathbf{X}\mathbf{X}^\top)^{-2}\mathbf{X})
= \sum_{j=1}^{m_{\text{head}}}\lambda_j\,e_j^\top\mathbf{X}^\top(\mathbf{X}\mathbf{X}^\top)^{-2}\mathbf{X} e_j
+ \tau\sum_{j > m_{\text{head}}}e_j^\top\mathbf{X}^\top(\mathbf{X}\mathbf{X}^\top)^{-2}\mathbf{X} e_j.
\]
The key observation is that $\mathbf{X}^\top(\mathbf{X}\mathbf{X}^\top)^{-2}\mathbf{X}$ is positive semidefinite with $\tr(\mathbf{X}^\top(\mathbf{X}\mathbf{X}^\top)^{-2}\mathbf{X}) = \tr((\mathbf{X}\mathbf{X}^\top)^{-1}) = O(1/T)$ as $T \to \infty$. Since the tail contains $N - m_{\text{head}}$ directions, each with eigenvalue $\tau$, and the trace is bounded, the variance contribution from the tail is
\[
\tau\sum_{j > m_{\text{head}}}e_j^\top\mathbf{X}^\top(\mathbf{X}\mathbf{X}^\top)^{-2}\mathbf{X} e_j
\leq \tau \cdot O(1/T) = O(\tau/T).
\]
If $\tau = O(T/N)$, then this term is $O(1/N) \to 0$ as $N \to \infty$. The head contribution involves only $m_{\text{head}}$ fixed directions and also vanishes. Thus, the variance term vanishes as $N/T \to \infty$.

The flat-tail spectrum (and therefore data with an exact factor structure) satisfies both the bias and variance requirements for benign overfitting. The many low-variance tail directions enable both geometric decay of the bias (by allowing row space expansion) and dispersion of interpolated noise (by distributing coefficients across many weak directions).

\section{Tail Index and Concentration Ratio}\label{app:pareto_tail}

This appendix provides a detailed derivation of the relationship between the Pareto tail index $\alpha$ of the eigenvalue distribution of $\bm{\Psi}$ and the concentration ratio $\mathcal{C}$, which determines whether cases (a), (b), or (c) of condition (iii) in Theorem~\ref{prop:factor} apply.

\subsection{Setup and definitions}

Recall from Section~\ref{sec:factor_bartlett} that the concentration ratio is defined as
\[
\mathcal{C} := \frac{\overline{\lambda^2_{\text{tail}}}}{(\bar{\lambda}_{\text{tail}})^2} = \frac{\frac{1}{N-k}\sum_{j=k+1}^{N} \lambda_j^2}{\Big(\frac{1}{N-k}\sum_{j=k+1}^{N} \lambda_j\Big)^2},
\]
where $\lambda_{k+1}, \ldots, \lambda_N$ are the tail eigenvalues of $\bm{\Sigma} = \bm{\Lambda}\bm{\Lambda}' + \bm{\Psi}$. Under the approximate factor model (Assumption~\ref{ass:factor_approx}), the interlacing inequalities imply that these tail eigenvalues are asymptotically governed by the eigenvalues $\mu_1 \geq \mu_2 \geq \cdots \geq \mu_N$ of $\bm{\Psi}$. Specifically, for large $N$,
\[
\mathcal{C} \;\approx\; \frac{\frac{1}{N-k}\sum_{j=k+1}^{N} \mu_j^2}{\Big(\frac{1}{N-k}\sum_{j=k+1}^{N} \mu_j\Big)^2}.
\]
For simplicity, we assume $k$ is fixed and $k \ll N$, so $(N-k) \approx N$ and we can write
\[
\mathcal{C} \;\approx\; \frac{\frac{1}{N}\sum_{j=1}^{N} \mu_j^2}{\Big(\frac{1}{N}\sum_{j=1}^{N} \mu_j\Big)^2} = \frac{\E[\mu^2]}{(\E[\mu])^2},
\]
where we treat $\{\mu_j\}_{j=1}^N$ as a sample from some distribution and use sample averages to approximate population moments.

\subsection{Pareto tail distribution}

We say that the eigenvalues $\{\mu_j\}$ have a \emph{Pareto-type tail} with index $\alpha > 0$ if the tail probability satisfies
\[
\Pr(\mu > x) \sim C x^{-\alpha} \quad \text{as } x \to \infty,
\]
for some constant $C > 0$. Equivalently, the complementary cumulative distribution function (CCDF) decays as a power law with exponent $-\alpha$. The parameter $\alpha$ is called the \emph{tail index}: smaller $\alpha$ corresponds to a heavier tail (more mass in the extreme values).

\paragraph{Moment existence.} For a Pareto-type distribution with tail index $\alpha$:
\begin{itemize}
    \item The $p$-th moment $\E[\mu^p]$ exists (is finite) if and only if $p < \alpha$.
    \item If $p \geq \alpha$, then $\E[\mu^p] = \infty$.
\end{itemize}

This is a standard result from extreme value theory. Intuitively, if the tail decays as $x^{-\alpha}$, then the integral $\int x^p \cdot x^{-\alpha} dx = \int x^{p-\alpha} dx$ converges only when $p - \alpha < -1$, i.e., $p < \alpha$.

\subsection{Three cases based on tail index}

We now analyze how the concentration ratio $\mathcal{C}$ behaves depending on the tail index $\alpha$.

\paragraph{Case 1: $\alpha > 2$.}
When $\alpha > 2$, both the first moment $\E[\mu]$ and the second moment $\E[\mu^2]$ exist and are finite. Therefore,
\[
\mathcal{C} = \frac{\E[\mu^2]}{(\E[\mu])^2} = O(1),
\]
i.e., $\mathcal{C}$ remains bounded as $N \to \infty$. Since $N/T \to \infty$ in the overparameterized regime, we have
\[
\frac{T\mathcal{C}}{N-k} \approx \frac{T \cdot O(1)}{N} = O\Big(\frac{T}{N}\Big) \to 0.
\]
This is \textbf{case (b)} in Theorem~\ref{prop:factor}: bounded heterogeneity. The tail flatness condition (iii) is satisfied, and benign overfitting obtains.

\paragraph{Case 2: $1 < \alpha \leq 2$.}
When $1 < \alpha \leq 2$, the first moment $\E[\mu]$ exists and is finite, but the second moment $\E[\mu^2]$ diverges. To understand the rate of divergence, note that for a sample of size $N$ from a Pareto distribution with tail index $\alpha$, the sample average of squared values behaves as
\[
\frac{1}{N}\sum_{j=1}^{N} \mu_j^2 \sim N^{2/\alpha - 1} \quad \text{as } N \to \infty.
\]
This follows from the fact that the maximum order statistic $\mu_{(1)} = \max_j \mu_j$ grows as $N^{1/\alpha}$, and the sum is dominated by the largest values. Meanwhile, $\frac{1}{N}\sum_{j=1}^{N} \mu_j = O(1)$. Therefore,
\[
\mathcal{C} = \frac{\frac{1}{N}\sum_{j=1}^{N} \mu_j^2}{\Big(\frac{1}{N}\sum_{j=1}^{N} \mu_j\Big)^2} \sim \frac{N^{2/\alpha - 1}}{O(1)} = N^{2/\alpha - 1}.
\]
The tail flatness condition requires
\[
\frac{T\mathcal{C}}{N-k} \approx \frac{T \cdot N^{2/\alpha - 1}}{N} = T \cdot N^{2/\alpha - 2} \to 0.
\]
Since $2/\alpha - 2 < 0$ when $\alpha > 1$, we have $N^{2/\alpha - 2} \to 0$ as $N \to \infty$, so the condition is satisfied as long as $T$ does not grow too fast relative to $N$. Specifically, we need $T = o(N^{2 - 2/\alpha})$. This is \textbf{case (c)} in Theorem~\ref{prop:factor}: diverging concentration with $\mathcal{C} = o(N/T)$. Benign overfitting still obtains, but the convergence is slower than in case (b).

\paragraph{Case 3: $\alpha \leq 1$.}
When $\alpha \leq 1$, even the first moment $\E[\mu]$ diverges. For a sample of size $N$, the sample average behaves as
\[
\frac{1}{N}\sum_{j=1}^{N} \mu_j \sim N^{1/\alpha - 1} \to \infty \quad \text{as } N \to \infty.
\]
This means that the average eigenvalue of $\bm{\Psi}$ grows without bound as $N$ increases. However, both Assumption~\ref{ass:factor}(ii) (exact factor model) and Assumption~\ref{ass:factor_approx}(ii') (approximate factor model) require that the eigenvalues of $\bm{\Psi}$ are uniformly bounded:
\begin{itemize}
    \item Under Assumption~\ref{ass:factor}(ii): $\psi_j^2 \leq \bar{\psi}^2 < \infty$ for all $j$.
    \item Under Assumption~\ref{ass:factor_approx}(ii'): $\lambda_{\max}(\bm{\Psi}) \leq \bar{\psi}^2 < \infty$ for all $N$.
\end{itemize}
If the average eigenvalue diverges, then necessarily the maximum eigenvalue must also diverge (since the average cannot exceed the maximum). Therefore, $\alpha \leq 1$ violates the maintained factor structure assumptions. The data do not satisfy a factor model in this regime.

\subsection{Tail index diagnostics for FRED-MD and FRED-QD}\label{app:tail_diagnostics}

In practice, the tail index $\alpha$ can be estimated from tail eigenvalues using Hill \citep{hill1975simple} or log-rank \citep{gabaix2011rank} methods. Let $\lambda_{(1)}\ge\cdots\ge\lambda_{(m)}$ denote ordered tail eigenvalues (after removing the first $k$ head/spike directions). For a Pareto-type tail, $\Pr(\Lambda>x)\propto x^{-\alpha}$, where $\alpha$ is the tail index (smaller $\alpha$ means heavier tail). The Hill estimator based on top $k_h$ order statistics is $\hat\alpha_{\mathrm{Hill}}^{-1}=k_h^{-1}\sum_{j=1}^{k_h}\log\!\big(\lambda_{(j)}/\lambda_{(k_h+1)}\big)$. The log-rank estimator fits $\log\lambda_{(j)}\approx c+(1/\alpha)(-\log(j/m))$, so the slope equals $1/\alpha$. The two are alternative estimators of the same $\alpha$; the log-rank estimator tends to exceed the Hill estimator in finite samples because it fits the full tail slope rather than being anchored at the top threshold, where the spectrum is still transitioning from factor spikes to the idiosyncratic tail.

The Pareto tail framework is a theoretical device for characterizing \emph{sufficient conditions} under which benign overfitting can occur, not a necessary condition or a literal description of finite-sample data. The tail index estimates $\hat{\alpha}$ reported below are purely descriptive diagnostics: they measure whether the empirical spectral shape is qualitatively consistent with one particular sufficient condition (Pareto tail decay), but failure to satisfy this condition does not rule out benign overfitting and/or factor structure. In particular, low $\hat{\alpha}$ estimates at small $k^*$ do not imply that the data lack a factor structure or that benign overfitting cannot occur. Rather, they indicate that at those splits, and with the specific sample size at hand, the estimated tail still contains residual factor variance and has not yet isolated the true idiosyncratic component. The factor structure itself is maintained by the bounded eigenvalue assumptions (Assumption~\ref{ass:factor}(ii) and Assumption~\ref{ass:factor_approx}(ii')), which are much weaker than the Pareto tail condition and are satisfied by standard macroeconomic panels.

We run this diagnostic exercise on both the FRED-MD panel ($T=798$, $N=122$) and the FRED-QD panel ($T=264$, $N=231$). In each case we vary the split index $k^*$, set the number of extracted factors equal to the split index ($k_{\mathrm{fac}}=k^*$), and compute the effective tail rank ($R_k$), the flatness ratio ($T/R_k$), and tail-index estimates (Hill and log-rank) with a fixed tail cutoff $k_h=20$. Table~\ref{tab:by_kstar_tail} reports representative points from these scans.

\begin{table}[t!]
\centering
\small
\caption{By-$k^*$ tail diagnostics for FRED-MD ($T=798$, $N=122$) and FRED-QD ($T=264$, $N=231$). All estimates use a fixed tail cutoff $k_h=20$.}
\label{tab:by_kstar_tail}
\begin{tabular}{r rrrr rrrr}
\toprule
 & \multicolumn{4}{c}{FRED-MD} & \multicolumn{4}{c}{FRED-QD} \\
\cmidrule(lr){2-5}\cmidrule(lr){6-9}
$k^*$ & $R_k$ & $T/R_k$ & $\hat\alpha_{\mathrm{Hill}}$ & $\hat\alpha_{\mathrm{LR}}$
      & $R_k$ & $T/R_k$ & $\hat\alpha_{\mathrm{Hill}}$ & $\hat\alpha_{\mathrm{LR}}$ \\
\midrule
2  & 3.11  & 257.68  & 0.373 & 0.555 & 1.51  & 176.45  & 0.159 & 0.124 \\
4  & 8.62  & 92.87   & 0.451 & 0.681 & 1.01  & 265.56  & 0.232 & 0.180 \\
6  & 7.46  & 107.32  & 0.525 & 0.682 & 3.13  & 85.41   & 0.362 & 0.457 \\
8  & 6.64  & 120.70  & 0.567 & 0.727 & 3.65  & 73.09   & 0.413 & 0.540 \\
10 & 8.02  & 99.84   & 0.679 & 0.857 & 6.91  & 38.64   & 0.477 & 0.674 \\
15 & 13.38 & 59.88   & 0.951 & 1.280 & 4.73  & 56.47   & 0.631 & 0.890 \\
20 & 13.81 & 57.99   & 0.728 & 1.184 & 2.58  & 103.47  & 0.800 & 1.176 \\
25 & 18.22 & 43.97   & 1.008 & 1.592 & 0.20  & 1359.93 & 0.963 & 1.494 \\
30 & 16.15 & 49.58   & 0.694 & 1.416 & 0.033 & 8005.31 & 1.015 & 1.854 \\
35 & 14.07 & 56.95   & 0.627 & 1.172 & 0.229 & 1166.02 & 0.976 & 1.922 \\
40 & 13.81 & 57.99   & 0.728 & 1.184 & 0.432 & 617.99  & 1.095 & 1.832 \\
\bottomrule
\end{tabular}
\normalsize
\end{table}

For FRED-MD, the effective tail rank $R_k$ and the flatness ratio $T/R_k$ vary non-monotonically with $k^*$: the most favorable region (lowest $T/R_k$, around 44--50) is reached at moderate splits of $k^*\approx 25$--$30$, well above the Bai--Ng baseline of $k_{\mathrm{BN}}=6$. Tail-index estimates are below 1 at low splits but rise toward and above 1 at moderate splits, with log-rank estimates reaching 1.2--1.6 in the $k^*=20$--$30$ range, consistent with a Pareto tail index $\alpha>1$. The most favorable region is reached at split choices somewhat above the Bai--Ng baseline factor count, but not by an order of magnitude.

For FRED-QD, the two diagnostic criteria pull in opposite directions. On the $\alpha$ dimension, QD looks better than MD at high $k^*$: log-rank estimates reach 1.85--1.92 at $k^*=30$--$35$, compared to 1.17--1.42 for MD at the same splits. However, at those same high $k^*$ values, the effective tail rank $R_k$ collapses to near zero (0.03--0.43) and $T/R_k$ explodes to thousands. This is because the head has absorbed almost all the variance, leaving virtually no tail eigenvalues. The high $\alpha$ estimates are therefore an artifact of fitting a Pareto law to a nearly empty tail rather than evidence of a genuinely flat spectrum. The most informative window for QD is the intermediate range $k^*=10$--$20$, where $R_k$ is still meaningful (3--7) and $\alpha$ estimates are moving upward but remain below 1.

These diagnostics provide mixed evidence on whether the finite-sample spectral structure satisfies the sufficient conditions for benign overfitting. At low splits near the Bai-Ng baseline ($k^* = 6$), tail index estimates are well below 1, reflecting contamination from residual factor variance rather than the true idiosyncratic tail. At moderate splits ($k^* = 15$-$30$), the estimates improve and cross the critical threshold: log-rank estimates reach 1.2-1.6 for FRED-MD and 1.5-1.9 for FRED-QD, placing them in the range $1 < \alpha \leq 2$ where benign overfitting can occur (case (c) in Theorem~\ref{prop:factor}), albeit with slower convergence than the $\alpha > 2$ regime. The flatness ratio $T/R_{k^*}$ similarly shows improvement at moderate splits but remains far from the asymptotic regime. Taken together, the finite-sample evidence is suggestive but inconclusive: the data appear qualitatively consistent with the sufficient conditions for benign overfitting at moderate factor counts, though the asymptotic regime is not reached.

\section{Semiparametric Kernel Regression}\label{app:rkhs_formulation}

This appendix derives the GLS formulation of the semiparametric factor-based kernel used in Section~\ref{sec:empirical}. The model combines a linear component $\beta x$ with a nonparametric kernel component $f(\mathbf{z})$, and the key result is that the nested RKHS optimization problem reduces to a standard GLS regression.

Consider the semiparametric model
\[
y_i = \beta x_i + f(\mathbf{z}_i) + \varepsilon_i, \qquad i=1,\ldots,n,
\]
where $y_i$ is the response, $x_i \in \mathbb{R}$ is a scalar covariate with a linear effect, $\mathbf{z}_i \in \mathbb{R}^P$ is a vector of features entering nonparametrically through $f \in \mathcal{H}_k$, and $\varepsilon_i$ is noise. In the forecasting application of Section~\ref{sec:empirical}, $x_i$ is the lagged dependent variable and $\mathbf{z}_i$ is the vector of macroeconomic predictors at time $i$.

We seek the minimum-norm interpolator: among all $(\beta, f)$ that fit the training data exactly, we choose the one that minimizes $\|f\|_{\mathcal{H}_k}$ (treating $\beta$ as a nuisance parameter to be estimated). By the representer theorem, any $f \in \mathcal{H}_k$ satisfying interpolation constraints can be written as 
\begin{equation}
\label{eq:representer}
f(\mathbf{z}) = \sum_{i=1}^n \alpha_i k(\mathbf{z}, \mathbf{z}_i),
\end{equation}
where $k$ is a positive definite kernel. The RKHS norm is $\|f\|_{\mathcal{H}_k}^2 = \bm{\alpha}^\top \bm{K} \bm{\alpha}$, where $\bm{K} \in \mathbb{R}^{n \times n}$ is the kernel matrix with entries $\bm{K}_{ij} = k(\mathbf{z}_i, \mathbf{z}_j)$. The interpolation constraints $f(\mathbf{z}_i) = y_i - \beta x_i$ become
\[
\bm{K}\bm{\alpha} = \mathbf{y} - \beta \mathbf{x}.
\]

The minimum-norm interpolator is therefore defined by the nested optimization problem
\begin{equation}
\label{eq:jointOpt}
\arg\min_{\beta \in \mathbb{R}}
\left[
\;\arg\min_{f\in\mathcal H_k}
\left\{
\|f\|_{\mathcal H_k}
\;:\;
f(\mathbf{z}_i)=y_i-\beta x_i,\; i=1,\dots,n
\right\}
\right].
\end{equation}
For a fixed value of $\beta$, the inner problem selects, among all functions
that interpolate the residuals $y_i-\beta x_i$, the simplest function
according to the geometry of the RKHS. Using \eqref{eq:representer}, this is equivalent to the quadratic program
\[
\arg\min_{\bm{\alpha}\in\mathbb{R}^n}
\bm{\alpha}^\top \bm{K} \bm{\alpha}
\quad\text{s.t.}\quad
\bm{K}\bm{\alpha} = \mathbf{y} - \beta \mathbf{x}.
\]
The linear constraint typically
admits infinitely many solutions.
Among all feasible solutions, the one minimizing $\bm{\alpha}^\top \bm{K} \bm{\alpha}$
is given by the Moore--Penrose pseudoinverse:
\begin{equation}
\label{eq:alphaStar}
\bm{\alpha}^\star(\beta) = \bm{K}^+ (\mathbf{y} - \beta \mathbf{x}).
\end{equation}
This gives us the optimal weights $\bm{\alpha}^\star(\beta)$, but only as a function of $\beta$. We are left with the task of minimizing the norm of $\beta$, given the optimal weights $\bm{\alpha}^\star(\beta)$. Substituting these weights into the RKHS norm $\|f\|_{\mathcal{H}_k}^2 = \bm{\alpha}^\top \bm{K} \bm{\alpha}$ yields
\begin{equation}
\label{eq:RKHSnormFinal}
\begin{aligned}
\|f\|_{\mathcal H_k}^2
&= (\bm{\alpha}^\star)^\top \bm{K} \bm{\alpha}^\star \\
&= \bigl(\bm{K}^+(\mathbf{y}-\beta \mathbf{x})\bigr)^\top \bm{K} \bigl(\bm{K}^+(\mathbf{y}-\beta \mathbf{x})\bigr) \\
&= (\mathbf{y}-\beta \mathbf{x})^\top (\bm{K}^+)^\top \bm{K} \bm{K}^+ (\mathbf{y}-\beta \mathbf{x}) \\
&= (\mathbf{y}-\beta \mathbf{x})^\top \bm{K}^+ (\mathbf{y}-\beta \mathbf{x}).
\end{aligned}
\end{equation}
In the last step we used the generalized-inverse identity $\bm{K}^+ \bm{K} \bm{K}^+ = \bm{K}^+$ (and $(\bm{K}^+)^\top = \bm{K}^+$ since $\bm{K}$ is symmetric positive semidefinite, so its Moore--Penrose inverse is symmetric). Expression \eqref{eq:RKHSnormFinal} is a weighted sum of squares; hence, as far as $\beta$ is
concerned, the outer minimization in \eqref{eq:jointOpt} corresponds to a
generalized least squares problem, which has solution: 
\begin{equation}
\label{eq:betaFinal}
\hat\beta
=(\mathbf{x}^\top \bm{K}^+ \mathbf{x})^{-1}\mathbf{x}^\top \bm{K}^+ \mathbf{y}.
\end{equation}
The nested problem can therefore be solved by first
minimizing \eqref{eq:RKHSnormFinal} with respect to $\beta$, which is given by \eqref{eq:betaFinal} and then recovering
the corresponding optimal kernel weights $\bm{\alpha}^\star(\hat\beta)$ using  \eqref{eq:alphaStar}. The resulting predictor is therefore
\begin{equation}
\label{eq:predOption1}
\hat y(\mathbf{z})
=
\hat\beta\,x(\mathbf{z})
+
\mathbf{k}(\mathbf{z})^\top \bm{\alpha}^\star(\hat\beta),
\end{equation}
where $\mathbf{k}(\mathbf{z}) := \bigl(k(\mathbf{z},\mathbf{z}_1),\dots,k(\mathbf{z},\mathbf{z}_n)\bigr)^\top$.

\paragraph{Connection to the forecasting implementation.}
\label{rem:appendix_bridge}
The derivation above is stated for a generic kernel $k$ and a scalar linear covariate $x$. In the forecasting application of Section~\ref{sec:empirical}, the correspondences are as follows. The kernel matrix $\bm{K}$ is replaced by the factor kernel $\bm{K}_{\mathrm{synth}}$ of Proposition~\ref{thm:synth_kernel}, with entries $\bm{K}_{\mathrm{synth}}(t,s)=\hat{\mathbf{f}}_t'\hat{\bm{\Lambda}}'\hat{\bm{\Lambda}}\hat{\mathbf{f}}_s$; the ridge-augmented version $\bm{K}+\lambda I$ becomes $\bm{K}_{\mathrm{synth}}+\mathrm{tr}(\hat{\bm{\Psi}})\,I_W$. The scalar linear covariate $x$ in this appendix corresponds to the AR(1) lag $y_{-1}=(y_{t-W+1},\dots,y_t)'$ in equations~\eqref{eq:app_gls_beta}--\eqref{eq:app_gls_forecast}, and the feature vector $\mathbf{z}$ corresponds to the predictor panel row $\mathbf{x}_t\in\mathbb{R}^N$ at each forecast origin. With these substitutions, the GLS expressions~\eqref{eq:betaFinal} and~\eqref{eq:alphaStar} specialise directly to~\eqref{eq:app_gls_beta}--\eqref{eq:app_gls_alpha}, and the prediction formula~\eqref{eq:predOption1} becomes~\eqref{eq:app_gls_forecast}.

\newpage
\bibliographystyle{plainnat}

\newpage

\end{document}